\documentclass[ aps,showpacs,amsmath,amssymb,onecolumn,superscriptaddress,notitlepage,longbibliography,preprintnumbers,prx]{revtex4-2}
\usepackage{bm}
\usepackage{times}
\usepackage{multirow}
\usepackage{dcolumn}
\usepackage{tabularx}
\usepackage{graphicx}
\usepackage{subfigure}
\usepackage{ae}
\usepackage{lettrine}
\usepackage{color}
\usepackage{amsmath,amsthm,amssymb,amsfonts,boxedminipage}
\usepackage{array}
\usepackage{epstopdf}
\usepackage{booktabs}
 \usepackage{qcircuit}
\newtheorem{theorem}{Theorem}
\newtheorem{lemma}{Lemma}

\newtheorem{definition}{Definition}
\newtheorem{remark}{Remark}
\newtheorem{task}{Task}

\makeatletter
\renewcommand\@biblabel[1]{#1.}
\renewcommand{\thesubfigure}

\usepackage[noend,noline,ruled,linesnumbered]{algorithm2e}
\usepackage{hyperref}

\hypersetup{colorlinks=true, linkcolor=cyan, citecolor=cyan, urlcolor=blue }

\usepackage{braket}

\newcommand{\Ccal}{\mathcal{C}}

\newcommand{\abs}[1]{\left| #1 \right|}

\newcommand{\uwa}{Department of Physics, The University of Western Australia, Perth, WA 6009, Australia}
\newcommand{\pku}{Center on Frontiers of Computing Studies, School of Computer Science, Peking University, Beijing 100871, China}

\newcommand{\bnu}{School of Artificial Intelligence,
 Beijing Normal University, Beijing,
 100875, China}

\begin{document}

\title{Efficient Noisy Quantum State and Process Tomography}
\author{Chenyang Li}
\affiliation{\bnu}

\author{Shengxin Zhuang}
\affiliation{\uwa}

\author{Yukun Zhang}
\affiliation{\pku}

\author{Jingbo~B.~Wang}
\affiliation{\uwa}


\author{Xiao Yuan}
\email{xiaoyuan@pku.edu.cn}
\affiliation{\pku}

\author{Yusen Wu}
\email{yusen.wu@bnu.edu.cn}
\affiliation{\bnu}

\author{Chuan Wang}
\email{wangchuan@bnu.edu.cn}
\affiliation{\bnu}

\begin{abstract}

Efficiently characterizing large quantum states and processes is a central yet notoriously challenging task in quantum information science, as conventional tomography methods typically require resources that grow exponentially with system size. Here, we introduce a structure-agnostic learning framework for noisy $n$-qubit quantum circuits under~i.i.d.~single-qubit noise. We first prove that quantum states with unital noise channels admit an efficient learnable representation in the logarithmic-depth regime. We then extend this framework to quantum process tomography under constant noise, deriving a unified protocol that applies to both unital and non-unital noisy channels and retains efficient guarantees for logarithmic-depth circuits. This process-learning formulation is input-agnostic and imposes no distributional assumptions on the input quantum states.
We further study a more general regime with arbitrary noise strength. In this setting, low-weight Pauli propagation induces a terminal truncation whose threshold depends logarithmically on the inverse accuracy, leading to quasi-polynomial complexity and near-unit success probability in the average case. In contrast to the preceding two results, this arbitrary-noise guarantee does not impose any restriction on the circuit depth, and therefore covers arbitrary-depth circuits, including both the noiseless limit ($\gamma = 0$) and the strong-decoherence regime ($\gamma = \Theta(1)$).
Numerical simulations of two-dimensional Hamiltonian dynamics further demonstrate the accuracy and robustness of the approach, including for structured circuits beyond the random-circuit setting assumed in the theoretical analysis. These results provide a scalable and practically relevant route toward characterizing large-scale noisy quantum devices, addressing a key bottleneck in the development of quantum technologies.

\end{abstract}

\maketitle

\section{Introduction}
Quantum computers are entering regimes beyond the reach of classical computational power~\citep{arute2019quantum,morvan2024phase,zhong2020quantum}. Coherent manipulation of complex quantum states with hundreds of physical qubits has been demonstrated across multiple platforms, including trapped ions~\citep{smith2016many}, neutral atom arrays~\citep{evered2023high}, and superconducting qubit circuits~\citep{arute2019quantum,morvan2024phase,acharya2024quantum}. As quantum hardware continues to scale in size and complexity, the ability to characterize quantum states and quantum processes becomes critical for advancing quantum error correction code~\citep{bravyi2024high, acharya2024quantum}, quantum error mitigation~\citep{kimScalableErrorMitigation2023,o2023purification}, and quantum algorithms~\citep{kimEvidenceUtilityQuantum2023,morvan2024phase}. Among various approaches for characterizing quantum states and processes, quantum state tomography~(QST)~\citep{banaszekFocusQuantumTomography2013, blume-kohoutOptimalReliableEstimation2010, eisertQuantumCertificationBenchmarking2020, grossQuantumStateTomography2010, hradilQuantumstateEstimation1997, maurodarianoQuantumTomography2003} and quantum process tomography~(QPT)~\citep{chuangPrescriptionExperimentalDetermination1997,darianoQuantumTomographyMeasuring2001,mohseniQuantumprocessTomographyResource2008} stand as fundamental methods to reconstruct target quantum states and processes by leveraging measurement results.

The high dimensionality of Hilbert spaces imposes fundamental challenges on QST and QPT. Proven results indicate that, in the worst-case scenario, both tasks necessitate measuring a vast number of observables, incurring resource costs that grow exponentially relative to the system size~\citep{chenTightBoundsQuantum2022, haahQueryoptimalEstimationUnitary2023, haahSampleoptimalTomographyQuantum2017, odonnellEfficientQuantumTomography2016, oufkirSampleoptimalQuantumProcess2023}. Nevertheless, these ``no-go" results do not rule out efficient algorithms in the average-case scenario. Indeed, by assuming specific input state distributions (e.g., locally flat distributions) or adopting relaxed learning objectives—such as quantum mean value learning or Hamiltonian learning—QPT and QST tasks can be rendered efficient in terms of both sample and classical post-processing complexity~\citep{aaronsonEfficientTomographyNoninteracting2023, anshuSampleefficientLearningQuantum2020, arunachalamOptimalAlgorithmsLearning2023, baireyLearningLocalHamiltonian2019, cheLearningQuantumHamiltonians2021, chenQuantumAdvantagesPauli2022, cramerEfficientQuantumState2010, flammiaEfficientEstimationPauli2020, flammiaPauliErrorEstimation2021, gebhartLearningQuantumSystems2023, granadeRobustOnlineHamiltonian2012, grewalImprovedStabilizerEstimation2024, grewalLowstabilizercomplexityQuantumStates2023, grossSchurWeylDuality2021, guPracticalBlackBox2024, haahOptimalLearningQuantum2022, hangleiterRobustlyLearningHamiltonian2024, huangLearningManybodyHamiltonians2023, huangPredictingManyProperties2020, laiLearningQuantumCircuits2022, lanyonEfficientTomographyQuantum2017, liHamiltonianTomographyQuantum2020, montanaroLearningStabilizerStates2017, rouzeLearningQuantumManybody2024, stilckfrancaEfficientRobustEstimation2024, vandenbergProbabilisticErrorCancellation2023, yuRobustEfficientHamiltonian2023, zubidaOptimalShorttimeMeasurements2021,wu2025hamiltonian,wu2023quantum}. Despite this progress, efficiently characterizing quantum processes generated by noisy quantum computers remains an open problem. Specifically, a universally efficient framework capable of handling both unital and non-unital channels across arbitrary noise strengths and circuit depths has yet to be established; consequently, how to unify these disparate noise effects into a coherent learning framework remains a pivotal challenge.
On the other hand, given the power of classical artificial-intelligence methods, it is natural to consider their application to complex QPT and QST tasks, such as neural-network models~\citep{ melko2019restricted,acharya2019comparative,wannerPredictingGroundState2024,tang2024towards}, tensor networks~\citep{torlaiQuantumProcessTomography2023}, diffusion models~\citep{tang2025quadim}, and other approaches~\citep{wulearning,duEfficientLearningLinear2025}. However, these heuristic methods generally lack theoretical guarantees or may not handle QPT and QST in a noisy environment. These advances, together with the fundamental limitations discussed above, naturally raise a question: 

\emph{“Can we efficiently learn a general noisy quantum process and quantum state?”}

\begin{figure*}[t]
\begin{center}
\includegraphics[width=\linewidth]{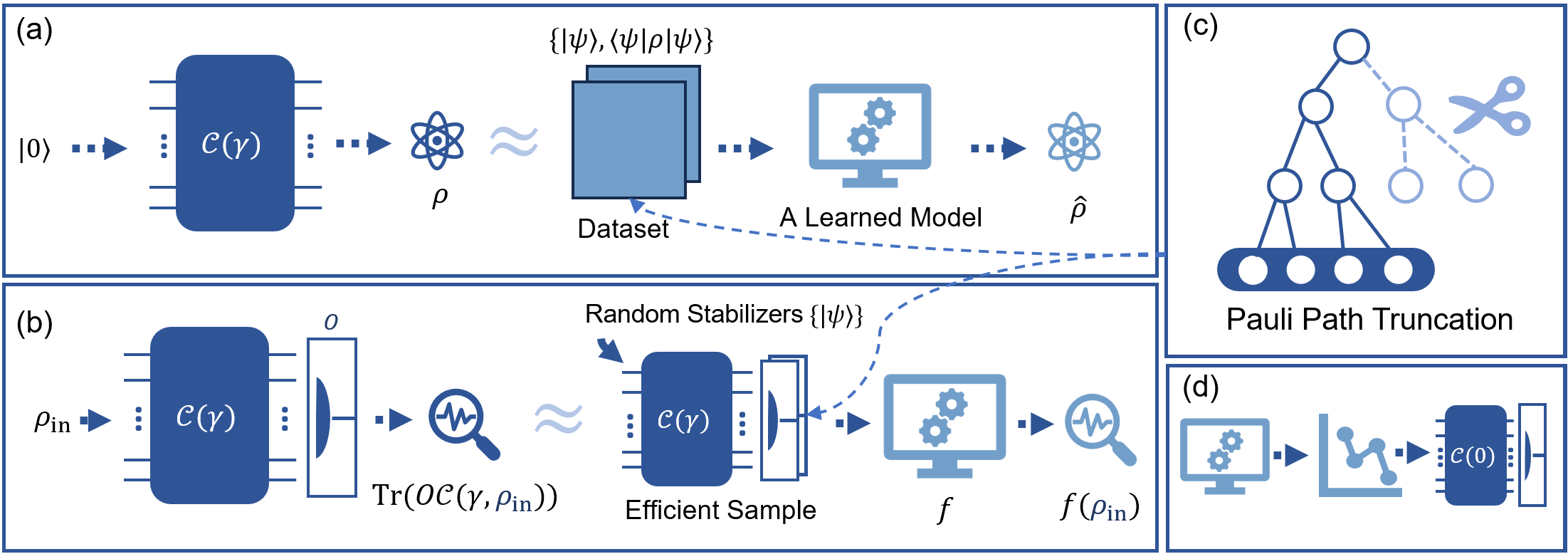}
\end{center}
\caption{(a) Illustration of the noisy quantum state learning, wherein a trained model $\hat{\rho}$ is generated by leveraging the adaptive measurement result from the target noisy quantum state $\rho$. (b) Depiction of the noisy quantum process learning. Here, the noisy quantum process $\mathcal{C}(\gamma)$ represents a $d$-depth quantum circuit with noise strength $\gamma$, and $O$ represents an unknown measurement operator. The task is to learn a function $f$ such that $\abs{f(\cdot)-{\rm Tr}[O\mathcal{C}(\gamma,\cdot)]}\leq\epsilon$ for all input quantum states $\rho_{\rm{in}}$, with efficient sample complexity. (c) Outline of the fundamental principle underlying our learning algorithm.(d) The proposed learning algorithm can be applied to the quantum error mitigation task, more details are provided in Section~\ref{sec:apply}. }
\label{fig:main}
\end{figure*}

In this paper, we answer this question by proposing a unified learning framework for both QPT and QST. Given the aforementioned ``no-go'' results for QPT and QST tasks~\citep{chenTightBoundsQuantum2022, haahQueryoptimalEstimationUnitary2023, haahSampleoptimalTomographyQuantum2017, odonnellEfficientQuantumTomography2016, oufkirSampleoptimalQuantumProcess2023}, we do not expect any efficient learning protocols in the worst-case scenario. Here, we propose an efficient learning algorithm for the average-case regime by introducing a unified representation of noisy quantum processes and states. Specifically, let $\mathcal{C}$ denote the target noisy quantum circuit. As a foundational step, we establish a provably efficient learning algorithm for unital noisy quantum state $\rho=\mathcal{C}(|0^n\rangle\langle0^n|)$ by i.i.d. local noise of constant strength admit an efficient low-weight Pauli representation in the logarithmic-depth regime. This yields a polynomial-time learning algorithm for approximating the output state in Schatten $2-$norm. 
Building upon this foundation, we  then extend the same principle to the tomography task for an constant noisy logarithmic-depth quantum process ${\rm Tr}\left(O\mathcal{C}(\cdot)\right)$ accompanied by an unknown measurement $O$, regardless of whether the underlying noise channel is unital or non-unital, can be reduced to learning an unknown observable with the decomposition $\sum_{|P|\leq\mathcal{O}(1), P\in\{I, X, Y, Z\}^{\otimes n}}\alpha_PP$, where the coefficients $\alpha_P\in\mathbb{R}$. For arbitrary local single-qubit noise strength and arbitrary circuit depth, terminal low-weight truncation gives an input-agnostic learning algorithm that is efficient at constant accuracy. 
In this regime, the effective learning space is reduced from the full $4^n$-dimensional Pauli space to a polynomial-size terminal Pauli subspace. 
This constant-accuracy efficiency persists across the full noise-strength range, from the noiseless limit $(\gamma=0)$, where local scrambling controls Pauli propagation, to constant noise levels $(\gamma=\Theta(1))$, where physical damping further suppresses high-weight contributions. 
For inverse-polynomial accuracy, direct terminal enumeration leads to quasi-polynomial complexity; for geometrically local circuits, this enumeration can be further restricted to the light cone, giving a refined coefficient count $(\mathcal O(d^D))^{l'}$.
The fundamental idea is illustrated in Fig.~\ref{fig:main}. Finally, we numerically benchmark our algorithm on noisy Hamiltonian dynamics driven by a two-dimensional lattice model~\citep{kimEvidenceUtilityQuantum2023}. The results demonstrate high accuracy for both QST and QPT tasks. 
Functionally, our work establishes a ``learning-theoretic dual'' to classical simulation frameworks~\citep{gil-fusterRelationTrainabilityDequantization2025}. While classical simulation algorithms leverage the sparsity of the Pauli path integral to compute expectation values under the white-box assumption—requiring full knowledge of circuit parameters and noise strengths~\citep{aharonovPolynomialTimeClassicalAlgorithm2023, shaoSimulatingNoisyVariational2024,schusterPolynomialtimeClassicalAlgorithm2025}—our protocol operates in a black-box setting. It reconstructs unknown processes solely from measurement data, requiring no prior information regarding the circuit architecture or specific noise levels. We reveal that the intrinsic properties of random quantum circuits directly imply a universally efficient sparse representation. Crucially, our algorithm’s complexity remains efficient across the entire noise regime $\gamma \in [0, 1)$ and is independent of circuit depth, offering a robust alternative to simulation methods that are often restricted to the deep-circuit or high-noise limits. Beyond being a dual to simulation, our approach significantly advances existing learning paradigms~\citep{huangLearningPredictArbitrary2023, chenPredictingQuantumChannels2024,razaOnlineLearningQuantum2024,crupiEfficientCharacterizationCoherent2025}. Unlike previous works that rely on specific input state distributions, unital dynamics or exponential scaling, our algorithm achieves input-agnostic prediction efficiently, remaining applicable to arbitrary input states and non-unital dynamics by leveraging the intrinsic sparse representation. Finally, we demonstrate the practical utility of this framework by applying the learned process models to quantum error mitigation, providing a pathway for noise-resilient quantum computing beyond mere verification. 

\section{Preliminaries and notation}
\label{sec:preliminaries}
To motivate and contextualize our contribution, we briefly review the requisite background on noisy quantum channels and circuits.
\begin{definition}[Single-Qubit Pauli Channel]
    Let  $\mathcal{E}_{\rm Pauli}$ denote the single-qubit Pauli channel, which is
\begin{equation}
    \mathcal{E}_{{\rm Pauli}}(\rho) =\gamma_1\rho+\gamma_2X\rho X^{\dagger}+\gamma_3Y\rho Y^{\dagger}+\gamma_4Z\rho Z^{\dagger},
\end{equation}
where real parameters $\gamma_1+\gamma_2+\gamma_3+\gamma_4 = 1$, and $\gamma_i\in[0,1]$ for $i\in[4]$. 
\end{definition}
As a standard unital quantum channel, the Pauli noise has the property $\mathcal{E}_{{\rm Pauli}}(I)=I$, $\mathcal{E}_{{\rm Pauli}}(X)=(1-2(\gamma_3+\gamma_4))X$, $\mathcal{E}_{{\rm Pauli}}(Y)=(1-2(\gamma_2+\gamma_4))Y$ and $\mathcal{E}_{{\rm Pauli}}(Z)=(1-2(\gamma_2+\gamma_3))Z$. Note that if $\gamma_2=\gamma_3=\gamma_4$, $\mathcal{E}$ degenerates to an i.i.d. single-qubit depolarizing noise, which is $\mathcal{E}_{{\rm depo}}(P)=(1-\gamma)P$ for $P\in \{X,Y,Z\}$. Techniques like Pauli twirling are employed to transform complex unital channels into diagonal forms on the Pauli basis~\citep{chen2023learnability, wallman2016noise}. In the following, we utilize the Pauli noise channel to represent the unital channel.

Another widely studied class of quantum channels is the non-unital channel, which describes channels that do not map the identity operator to itself. This kind of noise often reflects complicated environmental disturbances on the quantum system, where a canonical example is the amplitude damping. Ref.~\citep{angrisaniSimulatingQuantumCircuits2025} decompose the normal form of a non-unital single-qubit noise channel $\mathcal{E}$ as
\begin{equation}
    \begin{aligned}
        \mathcal{E} = \mathcal{E}_{\rm depo}^{\gamma}  \circ\mathcal{E}',
    \end{aligned}
    \label{Eq: nonunitaldecompose}
\end{equation}
where $\mathcal{E}'$ is a suitable (non-physical) linear map and $\mathcal{E}_{\rm depo}^\gamma$ is a depolarizing noise with the effective depolarizing rate $\gamma$. $\mathcal{E} $ is characterized by contraction parameters \(D=(D_X,D_Y,D_Z)\) and a translation vector \(t=(t_X,t_Y,t_Z)\)\citep{angrisaniSimulatingQuantumCircuits2025}. This normal form includes depolarizing-like, dephasing-like, and non-unital noise, including amplitude damping. Throughout, ``arbitrary local noise” refers to gate-independent tensor-product single-qubit noise channels; it does not include coherent unitary errors, correlated noise, or gate-dependent/non-Markovian noise. Given this observation, we define a unified noise parameter across unital and non-unital noise channels:
\begin{equation}
\gamma = \left\{
    \begin{array}{ll}
    2(\gamma_i+\gamma_j) ~~ (i,j)\in\{2,3,4\}, & \mathcal{E}~~\text{is unital}   \\
    1-\chi_{\mathcal{D}}(\mathcal{E}), & \mathcal{E}~\text{is non-unital} 
    \end{array}\right.
\end{equation}
where $\chi_{\mathcal{D}}(\mathcal{E})$ denotes the mean squared contraction coefficient of $\mathcal{E}$ with respect to the locally unbiased distribution $\mathcal{D}$. 
The details of the non-unital noise are in Appendix~\ref{sec:nu}. 


\begin{definition}[Schatten $\tau-$Norm]
\label{def:L1 norm}
   The Schatten $\tau-$norm of a matrix $A$ is defined as $\|A\|_\tau = (\sum_i\nu_i^\tau) ^{\frac 1 \tau}$, where $\nu_i$ is the singular value of $A$ and $\tau$ is a positive integer. Note that $\|A\|_1={\rm{Tr}}[\sqrt{AA^\dagger}]$.

\end{definition}
\begin{definition}[The Squared Normalized Frobenius Norm]
\label{def:Fro norm}
Suppose the matrix $A = \sum_{P}\alpha_P P$, with $P\in\{X,Y,Z,I\}^{\otimes n}$, its squared normalized Frobenius norm is defined by $\|A\|_F^2= \sum_{P}\alpha_P^2$.
\end{definition}

\begin{definition}[Hamming Weight of Pauli Operators]
\label{def: hamming weight}
Suppose $P$ represents an $n$-qubit (normalized) Pauli operator, then its Hamming weight $\abs{P}$ is defined as the number of qubits that are non-trivially acted by $P$.
\end{definition}


\section{Noisy Quantum Circuit Setup}

In this paper, we consider an $n$-qubit noisy quantum process:
\begin{align}
\mathcal{C}=\mathcal{E}^{\otimes n}\mathcal{C}_d\mathcal{E}^{\otimes n}\mathcal{C}_{d-1}\cdots\mathcal{E}^{\otimes n}\mathcal{C}_1,
\label{Equ:noisychannel}
\end{align}
where a local noise channel $\mathcal{E}$ with noise strength $\gamma$ is applied uniformly throughout the circuit. We consider unital noise in the QST setting and both unital and non-unital noise in the QPT setting. The circuit has depth $d$ and each layer $\mathcal{C}_i$ comprises non-overlapping two-qubit gates acting on arbitrary pairs, with each gate sampled uniformly from a local 2-design unitary group. Crucially, our learning protocol is \emph{agnostic to circuit connectivity and applicable to arbitrary architectures}, including standard parameterized brickwork structures, quantum cellular automata, and tree-structured circuits~\citep{zhouQuantumApproximateOptimization2020, kandalaHardwareefficientVariationalQuantum2017, grimsleyAdaptiveVariationalAlgorithm2019,benedettiParameterizedQuantumCircuits2019}, provided that each two-qubit gate satisfies the local 2-design assumption. By interleaving unitary layers $\mathcal{C}_i$ with i.i.d. tensor-product noise $\mathcal{E}^{\otimes n}$, this framework serves as a canonical model for quantum verification. Specifically, it provides the theoretical foundation for validating tasks that are proven to be classically hard, such as quantum advantage experiments~\citep{arute2019quantum, kimEvidenceUtilityQuantum2023}. A rigorous definition of the model's topological generality is provided in the Appendix~\ref{sec:introC}.

\vspace{8px}

\section{Noisy Quantum State Tomography}
As a state-level warm-up, we first consider the noisy QST task in the context of unital-noise environment under the constant noise strength, where the objective is to construct a classical approximation $\hat{\rho}$ of the output state $\rho = \mathcal{C}(|0^n\rangle\langle 0^n|)$ such that $\|\rho - \hat{\rho}\|_2 \le \epsilon$. In this regime, we expand the noisy state $\mathcal{C}(|0^n\rangle\langle 0^n|)$ in the Pauli basis: $\mathcal{C}(|0^n\rangle\langle 0^n|)=\sum_{P\in\{I,X,Y,Z\}^{\otimes n}}\alpha_P P$, where $\alpha_P=2^{-n}\text{Tr}[P\mathcal{C}(|0^n\rangle\langle 0^n|)]$.
The central point is that unital noise suppresses the high-weight terminal Pauli components. Therefore it is enough to learn a truncated low-weight approximation $\hat{\rho}=
    \sum_{|P|\leq l'}\alpha_P P$.
    
To make the truncation mechanism explicit, we use the following Pauli-path representation, whose proof is given in Appendix~\ref{sec:pauli}.
\begin{lemma}[Unified representation of noisy quantum states]
\label{le:noise_main}
Let $\rho=\mathcal C(|0^n\rangle\langle0^n|)$, where
$\mathcal C=\mathcal E^{\otimes n}\mathcal C_d\mathcal E^{\otimes n}\mathcal C_{d-1}\cdots\mathcal E^{\otimes n}\mathcal C_1$ is a $d$-depth noisy circuit, each $\mathcal C_i(\cdot)=C_i(\cdot)C_i^\dagger$ is a two-qubit random circuit layer, and $\mathcal E$ is a unital single-qubit noise channel with strength $\gamma$. Then
\begin{equation}
    \rho=
    \sum_{s\in\tilde{\mathcal P}_n^{\otimes(d+1)}}
    (1-\gamma)^{|s|}\Phi(\mathcal C,s)s_d,
\end{equation}
where $s=s_0s_1\cdots s_d$ is a Pauli path in the normalized Pauli basis $\tilde{\mathcal P}_n=\{I/\sqrt2,X/\sqrt2,Y/\sqrt2,Z/\sqrt2\}^{\otimes n}$, $|s|$ is its total Pauli weight, and
\begin{equation}
    \Phi(\mathcal C,s)=
    {\rm Tr}\!\left(s_d\mathcal C_d(s_{d-1})\right)
    \cdots
    {\rm Tr}\!\left(s_1\mathcal C_1(s_0)\right)
    \langle0^n|s_0|0^n\rangle .
\end{equation}
\end{lemma}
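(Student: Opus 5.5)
The plan is to derive the formula by repeatedly inserting a resolution of the identity in the normalized Pauli basis, i.e.\ the standard Pauli-path expansion. The basis $\tilde{\mathcal P}_n$ is orthonormal with respect to the Hilbert--Schmidt inner product, since ${\rm Tr}(PQ)=\delta_{PQ}$ for $P,Q\in\tilde{\mathcal P}_n$. All its elements are Hermitian, so every operator $A$ on $n$ qubits satisfies $A=\sum_{P\in\tilde{\mathcal P}_n}{\rm Tr}(PA)\,P$.

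\textbf{Step 1 (input state).} Apply this expansion to the input state:
\[
|0^n\rangle\langle0^n|=\sum_{s_0}{\rm Tr}\big(s_0|0^n\rangle\langle0^n|\big)s_0=\sum_{s_0}\langle0^n|s_0|0^n\rangle\, s_0 .
\]
This produces the last factor of $\Phi(\mathcal C,s)$.

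\textbf{Step 2 (induction over layers).} Proceed by induction on the layer index $k$. The claim to carry along is that the state after $k$ noisy layers equals
\[
\sum_{s_0,\dots,s_k}\Big(\prod_{i}\lambda(s_i)\Big)\prod_{i=1}^k{\rm Tr}\big(s_i\mathcal C_i(s_{i-1})\big)\langle0^n|s_0|0^n\rangle\, s_k ,
\]
where $\lambda(P)$ denotes the eigenvalue of $\mathcal E^{\otimes n}$ on $P$.

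For the inductive step, apply $\mathcal C_{k+1}$ to the basis element $s_k$ and re-expand:
\[
\mathcal C_{k+1}(s_k)=\sum_{s_{k+1}}{\rm Tr}\big(s_{k+1}\mathcal C_{k+1}(s_k)\big)s_{k+1}.
\]
Then apply $\mathcal E^{\otimes n}$. Because $\mathcal E$ is a Pauli (unital, Pauli-diagonal) channel acting in tensor product, each normalized Pauli string is an eigenvector, with eigenvalue equal to the product of the single-qubit eigenvalues on its support. For depolarizing noise, or any Pauli channel where the nontrivial eigenvalues coincide, this gives exactly $\mathcal E^{\otimes n}(s_{k+1})=(1-\gamma)^{|s_{k+1}|}s_{k+1}$. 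Collecting the products of traces and eigenvalues after $d$ layers gives $\rho=\sum_s(1-\gamma)^{|s|}\Phi(\mathcal C,s)\,s_d$. Linearity of all maps justifies exchanging the finite sums with the channels.

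\textbf{Main obstacle (bookkeeping, not analysis).} The difficulty lies in matching conventions rather than in any estimate.

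\emph{(i) Weight convention.} With noise after every layer, the damping acts on $s_1,\dots,s_d$, so the natural exponent is $\sum_{i\ge1}|s_i|$. To match the stated $|s|$, I would either define $|s|$ to count the weights of $s_1,\dots,s_d$, or note that $\langle0^n|s_0|0^n\rangle$ is supported on $Z$-type strings. Since $\mathcal C_1$ consists of 2-designs, the $s_0$ contributions can be absorbed consistently. I would fix the convention explicitly, namely that the exponent counts the weights of $s_1,\dots,s_d$.

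\emph{(ii) General Pauli channels.} For a Pauli channel with unequal eigenvalues $1-2(\gamma_i+\gamma_j)$, the factor is $\prod_i\prod_{q}\lambda_{(s_i)_q}$ rather than a single power. With $\gamma$ defined as in the unified noise parameter (taking the relevant pair), I would state the identity for the depolarizing case. For general Pauli noise, I would record that each per-qubit factor has magnitude at most $1-\gamma$ when $\gamma$ is chosen as the minimal damping. This is the form actually used for the subsequent truncation bound.
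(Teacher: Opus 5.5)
Your proposal is correct and follows the paper's own argument: expand in the normalized Pauli basis at the input and after each layer, then pull out the Pauli-diagonal noise eigenvalues. This is exact for depolarizing noise, and for general Pauli noise the paper, like you, only replaces the per-qubit factors by a uniform $(1-\gamma)$ bound ``for brevity''. You are right that the exponent is really $\sum_{i=1}^{d}|s_i|$ because no noise acts on $s_0$, a point the paper glosses over; fixing the convention explicitly is the correct resolution, whereas your alternative of ``absorbing'' the $s_0$ weight via the 2-design property cannot yield an exact identity, since the representation is deterministic and holds for each fixed circuit.
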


The fundamental idea is to obtain an efficient representation of a noisy quantum state by leveraging Lemma~\ref{le:noise_main}. The contribution of each Pauli path $s_d$ is determined by a related pre-factor $(1-\gamma)^{|s|}\Phi(\mathcal{C},s)$, which decays exponentially with the Pauli-path weight $\abs{s}$. Since $\abs{s_d}\leq\abs{s}$, we truncate the noisy-state representation in Lemma~\ref{le:noise_main} to terms with $\abs{s_d}\leq l^{\prime}$. It therefore suffices to show that the rest of the average-case error $\mathbb{E}_{\mathcal{C}}[\sum_{\abs{s_d}> l'}\Phi(\mathcal{C},s)]^2$ is a constant due to that the $\|\rho\|_2=1$. 

\begin{lemma}[Unital noisy state truncation]
\label{le:state_truncation_main}
Let $\rho=\mathcal C(|0^n\rangle\langle0^n|)$ be generated by Eq.~\ref{Equ:noisychannel} with unital i.i.d. single-qubit noise of strength $\gamma$. With probability at least $1-\delta$ over the random circuit ensemble, there exists a truncated estimator
$\hat{\rho}=\sum_{P\in\mathcal T_{l'}}\alpha_P P$,
such that $\|\rho- \hat{\rho}\|_2\leq\epsilon$, provided
\begin{equation}
    l'=\mathcal O\!\left(\log(\epsilon^{-1}\delta^{-1})\right)
\end{equation}
in the constant-noise regime.
\end{lemma}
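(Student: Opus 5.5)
We take $\hat\rho$ to be the exact low-weight truncation, with $\mathcal T_{l'}$ the Paulis of weight at most $l'$ and $\alpha_P=2^{-n}{\rm Tr}[P\rho]$. Then $\rho-\hat\rho=\sum_{|P|>l'}\alpha_P P$ is exactly the high-weight tail. The plan is to bound the average of its squared norm over the random circuit ensemble and then apply Markov's inequality.

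\textbf{Step 1 (orthogonality of paths).} Expand the tail using Lemma~\ref{le:noise_main}: $\rho-\hat\rho=\sum_{s:\,|s_d|>l'}(1-\gamma)^{|s|}\Phi(\mathcal C,s)s_d$. In the normalized Pauli basis, $\|\rho-\hat\rho\|_2^2$ becomes a double sum over pairs of paths $s,s'$ with the same endpoint $s_d=s'_d$. Each $\mathcal C_i$ is drawn from a product of local 2-designs, so averaging a product of two layer overlaps kills all cross terms with $s_{i-1}\neq s'_{i-1}$. This is the standard Pauli-path orthogonality $\mathbb E[\Phi(\mathcal C,s)\Phi(\mathcal C,s')]=\delta_{s,s'}\mathbb E[\Phi(\mathcal C,s)^2]$. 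We therefore obtain
\begin{equation}
\mathbb E_{\mathcal C}\|\rho-\hat\rho\|_2^2=\sum_{s:\,|s_d|>l'}(1-\gamma)^{2|s|}\,\mathbb E_{\mathcal C}\Phi(\mathcal C,s)^2 .
\end{equation}

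\textbf{Step 2 (using the noise damping).} Since $|s|\ge|s_d|>l'$, each term carries a factor of at most $(1-\gamma)^{2l'}$. The remaining sum is controlled by $\sum_s\mathbb E\Phi(\mathcal C,s)^2$. In the noiseless circuit this equals the average purity of a pure state times normalization, so it is at most $\|\,|0^n\rangle\langle0^n|\,\|_2^2=1$. Indeed, $\mathcal C_i$ is unitary, so the sum of squared path weights over all paths telescopes: $\sum_{s_d}{\rm Tr}(s_d\mathcal C_d(\cdot))^2$ preserves the Frobenius norm layer by layer. This is where $\|\rho\|_2\le1$ enters, as the paper hints. Combining the two bounds gives
\begin{equation}
\mathbb E_{\mathcal C}\|\rho-\hat\rho\|_2^2\le(1-\gamma)^{2l'}.
\end{equation}

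\textbf{Step 3 (Markov's inequality).} Markov's inequality gives $\Pr[\|\rho-\hat\rho\|_2^2>\epsilon^2]\le(1-\gamma)^{2l'}/\epsilon^2$. Setting this at most $\delta$ requires
\begin{equation}
l'\ge\frac{\log(\epsilon^{-2}\delta^{-1})}{2\log(1/(1-\gamma))}.
\end{equation}
For constant $\gamma$ this is $\mathcal O(\log(\epsilon^{-1}\delta^{-1}))$, as claimed.

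\textbf{Main obstacle.} The delicate point is the orthogonality in Step 1 when the noise sits between layers. The noise must act diagonally on Pauli paths, so that it only contributes the scalar $(1-\gamma)^{|s|}$. This is where the unital Pauli-channel assumption is needed. Strictly, anisotropic Pauli noise gives weight-dependent factors $\prod(1-\gamma_{P})$, which we would bound uniformly by $(1-\gamma_{\min})$. We must also check that the 2-design average factorizes layer by layer when gates act on arbitrary non-overlapping pairs. We would handle this by conditioning on all other layers and using the second-moment identity $\mathbb E_U[\mathcal U(P)\otimes\mathcal U(Q)]=0$ for $P\ne Q$ on each two-qubit block. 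Note that the initial state $|0^n\rangle$ is not random, but only $s_0$ depends on it and the final sum is bounded by its purity, so the argument goes through.
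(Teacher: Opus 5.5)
Your proof is correct and follows the same skeleton as the paper's proof: Pauli-path expansion, second-moment orthogonality of distinct paths under the local 2-design average, bounding $\sum_s\mathbb E_{\mathcal C}\Phi(\mathcal C,s)^2$ by the purity $1$ of $|0^n\rangle\langle 0^n|$, and then Markov's inequality. The differences are minor improvements: you cut directly on the terminal weight $|s_d|>l'$ with the exact coefficients $\alpha_P=2^{-n}{\rm Tr}[P\rho]$, which matches the statement and Algorithm~\ref{alg:qsl}, whereas the paper cuts on total path weight $|s|>l$ and then invokes $|s_d|\le|s|$; and your telescoping gives $\sum_s\Phi(\mathcal C,s)^2=1$ for every fixed circuit rather than only on average.
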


Meanwhile the $\mathbb{E}_{\mathcal{C}}[\Phi(\mathcal{C},s)]^2$ exists a lower bound $15^{-|s|}$ implies that the number of non-trivial terms (those with $\alpha_{s_{d}}\neq 0$) is bounded by $N_P \leq 2^{\mathcal{O}(l)}=\mathcal{O}(1/\epsilon_1)$. To guarantee the Pauli path contribution is not zero, $|s|=0$ or $|s|\geq d+1$, which indicates that $l'=l-d$. Hence we get that when $d=\log n$, and the required accuracy is $\epsilon_1=1/{\rm poly}(n)$, all Pauli terms $s_d$ appearing in the ansatz $\hat{\rho}$ can be enumerated efficiently. Consequently, tomography of the noisy state $\rho$ is reduced to tomography of its approximation $\hat{\rho}$, determining the unknown coefficients $\alpha_{s_{d}}$ for $s_d$, then it suffices to perform the noisy-state tomography task. Since all `low-weight' Pauli operators $s_d$ can be enumerated in advance, the classical shadow method~\citep{huangPredictingManyProperties2020} is a natural candidate for estimating the coefficients $\alpha_{s_{d}}$, yielding an $\mathcal{O}\left(\log(1/\epsilon_1)\epsilon_1^{-2}\right)$ sample-complexity guarantee.

Nevertheless, the classical shadow method may not extend directly to quantum process tomography tasks. To implement a `unified' learning approach for both quantum noisy state and process tomography tasks, we provide another method for estimating coefficients $\alpha_{s_{d}}$ from the quantum randomized measurement results. We generate a dataset $\{\ket{\psi_j}=\otimes^{n}_{i=1}\ket{\psi_{i,j}},v_j=\langle\psi_j|\rho|\psi_j\rangle\}_{j=1}^{N_{\rm data}}$ by drawing each single-qubit stabilizer $\ket{\psi_{i,j}}$ uniformly sampled from the set ${\rm Stab}=\left\{\ket{0}, \ket{1}, \ket{+},\ket{-},\ket{y+},\ket{y-} \right\}$. Here, the quantum state overlap $v_j=\langle\psi_j|\rho|\psi_j\rangle$ can be efficiently obtained by using the SWAP-test method~\citep{buhrman2001quantum}. Without loss of generality, we assume each single-qubit stabilizer state can be prepared by $|\psi_{i,j}\rangle=U_{i,j}|0\rangle_i$, where $U_{i,j}$ is a random single-qubit Clifford gate. By leveraging the orthogonal property of single-qubit Pauli operators $Q_i$ in the context of the Clifford ensemble, that is
\begin{equation}
\label{Eq:cliffordproperty}
    \mathbb{E}_{U_{i,j}\sim {\rm Cl}(2)}\left[U_{i,j}^{\dagger\otimes 2}(Q_i\otimes Q_i')U_{i,j}^{\otimes 2}\right]=
    \left\{
    \begin{aligned}
       & I^{\otimes 2}, \  \text{if} \ Q_i=Q_i'=I,\\
       & \frac{1}{3}\sum_{Q_i\in\{X,Y,Z\}^{\otimes 2}}\left(Q_i\otimes Q_i\right), \ \text{if} \ Q_i=Q_i'\neq I.\\
       & 0, \ \text{if} \ Q_i\neq Q_i',
        \end{aligned}
    \right.
\end{equation}
coefficients $\alpha_{s_d}$ are obtained as
\begin{equation}
    \begin{aligned}
        \alpha_{s_d} =3^{|s_d|}\mathbb{E}_{\ket{\psi_j}\sim {\rm Stab}^{\otimes n}}v_j\bra{\psi_j}s_d\ket{\psi_j}
        \approx\frac{3^{|s_d|}}{N_{\rm data}}\sum_{j=1}^{N_{\rm data}}v_j\bra{\psi_j}s_d\ket{\psi_j}.
    \end{aligned}
\end{equation}
The in-depth explanation of the learning Algorithm is provided in Appendix~\ref{sec:sla}. We note that the above learning approach is efficient in both sample and computational complexity (classical post-processing).

\begin{algorithm}
\caption{Quantum State Learning Algorithm}
\label{alg:qsl}
\textbf{Input:} Product stabilizer states $\{|\psi_j\rangle\}_{j=1}^{N_{\rm data}}$ and accuracy parameter $\epsilon$;

\textbf{Output:} An estimator $\hat\rho$ of $\rho=\mathcal C(|0^n\rangle\langle0^n|)$;

Let $l'$ be the state truncation threshold and enumerate $\mathcal T_{l'}=\{P\in\mathcal P_n:|P|\leq l'\}$;

\textbf{For} $j\in[N_{\rm data}]$:

\quad Obtain $v_j=\langle\psi_j|\rho|\psi_j\rangle$ by a SWAP test or an equivalent randomized-measurement routine;

\textbf{End For}

\textbf{For} each $P\in\mathcal T_{l'}$:

\quad Compute $\hat\alpha_P=\frac{3^{|P|}}{N_{\rm data}}\sum_{j=1}^{N_{\rm data}}v_j\langle\psi_j|P|\psi_j\rangle$;

\textbf{End For}

\textbf{Output}: $\hat\rho=\sum_{P\in\mathcal T_{l'}}\hat\alpha_P P$.
\end{algorithm}
By the Algorithm~\ref{alg:qsl} and Hoeffding's inequality, we obtain the following theoretical guarantee.
\begin{theorem}[Noisy Quantum State Learning]
\label{the:sl}
For any unital noisy quantum state $\rho$ prepared by a noisy quantum circuit $\mathcal{C}$ (Eq.~\ref{Equ:noisychannel}) with constant noise strength and $d=\mathcal{O}(\log n)$, where $\mathcal{C}_i$ is a layer of two-qubit random haar quantum gates, there exists a learning algorithm that can efficiently solve $\|\rho-\hat{\rho}\|_2\leq \epsilon$  with success probability $\geq 1-\delta$. The sample complexity and classical post-processing complexity ${\rm poly}(n, 1/\epsilon,1/\delta)$.
\end{theorem}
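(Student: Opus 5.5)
The plan is to split the error $\|\rho-\hat\rho\|_2$ into a truncation part and a statistical part. Let $\hat\rho_{\rm tr}=\sum_{P\in\mathcal T_{l'}}\alpha_P P$ be the exact low-weight truncation, and let $\hat\rho=\sum_{P\in\mathcal T_{l'}}\hat\alpha_P P$ be the output of Algorithm~\ref{alg:qsl}. By the triangle inequality,
\begin{equation}
\|\rho-\hat\rho\|_2\le\|\rho-\hat\rho_{\rm tr}\|_2+\|\hat\rho_{\rm tr}-\hat\rho\|_2 .
\end{equation}
I aim to make each term at most $\epsilon/2$, with failure probability at most $\delta/2$ each.

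\textbf{Truncation term.} Lemma~\ref{le:state_truncation_main} (with $\epsilon/2$, $\delta/2$) gives $\|\rho-\hat\rho_{\rm tr}\|_2\le\epsilon/2$ with probability $\ge1-\delta/2$ over the circuit ensemble, once $l'=\mathcal O(\log(\epsilon^{-1}\delta^{-1}))$. Underneath, this comes from Lemma~\ref{le:noise_main}: orthogonality of Pauli paths under the local 2-design average gives $\mathbb E_{\mathcal C}\|\rho-\hat\rho_{\rm tr}\|_2^2\lesssim(1-\gamma)^{2l'}$ up to normalization, and Markov's inequality converts this to a high-probability bound. With constant $\gamma$ and $d=\mathcal O(\log n)$, the set $\mathcal T_{l'}$ has size
\begin{equation}
|\mathcal T_{l'}|\le\sum_{k\le l'}\binom nk3^k\le (3n)^{l'}=n^{\mathcal O(\log(1/(\epsilon\delta)))}.
\end{equation}
This is the main obstacle: it is quasi-polynomial unless the number of contributing terms is controlled more carefully. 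My first attempt will use the paper's observation that nonvanishing paths satisfy $|s|=0$ or $|s|\ge d+1$. That lets $l'$ be traded for a path-weight budget, and $(1-\gamma)^{|s|}\le(1-\gamma)^{d+1}=n^{-\Omega(1)}$ once $d=c\log n$. The target is to bound the count of relevant $s_d$ by $2^{\mathcal O(l)}=\mathrm{poly}(1/\epsilon)$ using the second-moment lower bound $15^{-|s|}$, as sketched before Algorithm~\ref{alg:qsl}. If that bound does not close, I will state the result for $\epsilon,\delta$ constant or inverse-logarithmic, where $(3n)^{l'}$ is polynomial.

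\textbf{Statistical term.} For each $P\in\mathcal T_{l'}$, the Clifford identity Eq.~\eqref{Eq:cliffordproperty} shows $\mathbb E[3^{|P|}v_j\langle\psi_j|P|\psi_j\rangle]=\alpha_P$. Each summand is bounded in magnitude by $3^{|P|}\le3^{l'}$, because $v_j\in[0,1]$ and $|\langle\psi_j|P|\psi_j\rangle|\le1$. Hoeffding's inequality gives $|\hat\alpha_P-\alpha_P|\le\eta$ with failure probability $2\exp(-N_{\rm data}\eta^2/(2\cdot 9^{l'}))$. A union bound over $\mathcal T_{l'}$ then requires
\begin{equation}
N_{\rm data}=\mathcal O\!\left(9^{l'}\eta^{-2}\log(|\mathcal T_{l'}|/\delta)\right).
\end{equation}
Set $\eta=\epsilon/(2\sqrt{|\mathcal T_{l'}|}\,2^{n/2})$ in the paper's normalization, or simply $\eta=\epsilon/(2\sqrt{|\mathcal T_{l'}|})$ in the normalized Frobenius norm of Definition~\ref{def:Fro norm}. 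This yields $\|\hat\rho_{\rm tr}-\hat\rho\|\le\epsilon/2$. Since $9^{l'}=\mathrm{poly}(1/(\epsilon\delta))$, the sample count is polynomial whenever $|\mathcal T_{l'}|$ is.

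The SWAP-test estimates of $v_j$ add only an extra additive error. I will absorb it into $\eta$ by using $\mathcal O(\eta^{-2}9^{l'}\log(N_{\rm data}/\delta))$ repetitions per $j$. Classical post-processing costs $\mathcal O(N_{\rm data}|\mathcal T_{l'}|\,l')$, because each $\langle\psi_j|P|\psi_j\rangle$ factorizes over the at most $l'$ qubits in the support of $P$. A final union bound over the two failure events completes the argument.
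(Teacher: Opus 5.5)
Your split into a truncation term (Lemma~\ref{le:state_truncation_main}) and a Hoeffding/union-bound term for the stabilizer-product estimator is the same skeleton the paper uses. As written, however, the proposal does not prove the theorem, because the polynomial scaling is never established. With $\eta=\epsilon/(2\sqrt{|\mathcal T_{l'}|})$, your own bounds are $N_{\rm data}=\mathcal O(9^{l'}|\mathcal T_{l'}|\epsilon^{-2}\log(|\mathcal T_{l'}|/\delta))$ and runtime $\mathcal O(N_{\rm data}|\mathcal T_{l'}|\,l')$, with $|\mathcal T_{l'}|=n^{\Theta(\log(1/(\epsilon\delta)))}$. That is quasi-polynomial, and the step meant to remove it is only stated as a target. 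The fallback proves a weaker statement and is itself misstated: for $\epsilon=1/\log n$ you get $l'=\Theta(\log\log n)$ and $(3n)^{l'}=n^{\Theta(\log\log n)}$, so only constant $\epsilon,\delta$ give a polynomial. The paper fills exactly this step with Lemmas~\ref{le:lower bound term} and~\ref{le:path number}. Each legal path has $\mathbb E_{\mathcal C}L(\mathcal C,s)\ge 15^{-|s|}$, and the sum of these expectations is taken to be $\mathcal O(1)$. Hence there are at most $\mathcal O(15^{l})=2^{\mathcal O(l'+d)}$ legal paths with $|s|\le l=l'+d$, and at most that many terminal Paulis; this is polynomial for $d=\mathcal O(\log n)$. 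Hoeffding over these $N_P$ coefficients then gives $6^{\mathcal O(l')}\epsilon^{-2}\log(1/\delta)$ samples.

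Your hesitation about importing that step is warranted, for three reasons. First, the paper's count has a normalization problem. For each fixed $s_0$, unitarity gives $\sum_{s_1,\dots,s_d}L(\mathcal C,s)=1$ exactly. The $\mathcal O(1)$ normalization holds for $\sum_s\mathbb E\,\Phi(\mathcal C,s)^2=\mathbb E\,{\rm Tr}\rho_0^2$, but $\Phi^2=L\,\langle0^n|s_0|0^n\rangle^2$ with $\langle0^n|s_0|0^n\rangle^2=2^{-n}$ for normalized $s_0$ (e.g.\ $W_0=2^{-n}$, not $1$). The corrected inequality only yields $N_s\le 15^{l}\sum_{j\le l}\binom nj=n^{\mathcal O(l)}$, no better than your enumeration. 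Second, even a genuinely small support would not enter your bounds. Which Paulis carry nonzero coefficients depends on the unknown circuit, so you (like Algorithm~\ref{alg:qsl}) estimate every $P\in\mathcal T_{l'}$. The $\ell_2$ statistical error is then $\sqrt{|\mathcal T_{l'}|}\,\eta$ and the runtime is proportional to $|\mathcal T_{l'}|$. Thresholding after the union bound would repair the sample count under exact sparsity, but not the runtime. Third, your two normalizations are not interchangeable. The theorem's $\|\cdot\|_2$ is Schatten-2, and with $\eta=\epsilon2^{-n/2}/(2\sqrt{|\mathcal T_{l'}|})$ both your range-based Hoeffding bound and your SWAP-test repetition count scale as $2^n$. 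The polynomial branch only controls $\|\cdot\|_F=2^{-n/2}\|\cdot\|_2$. Relatedly, let the final noise layer act on the pre-noise state $\sigma$ with damping $|\lambda_P|\le(1-\gamma)^{|P|}$. Then ${\rm Tr}\rho^2=2^{-n}\sum_P\lambda_P^2{\rm Tr}(P\sigma)^2\le 2^{-n}|\mathcal T_{\alpha n}|+(1-\gamma)^{2\alpha n}=e^{-\Omega(n)}$ for a small constant $\alpha$. So $\hat\rho=0$ already meets the literal Schatten-2 bound unless $\epsilon$ is exponentially small. You should pin down the intended normalization before deciding which counting problem actually has to be solved.
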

Theorem~\ref{the:sl}  highlights the role of noise in state learning: the noise strengthens the constraints on Pauli weights, reducing the number of effective coefficients. The learned object is a reusable low-weight Pauli representation of the noisy state rather than a full dense matrix. Since the guarantee is in Schatten-$2$ norm, its direct applications are limited to probes stable under Hilbert--Schmidt perturbations:
\begin{equation}
    \left|{\rm Tr}\!\left[A(\rho-\hat\rho)\right]\right|
    \leq
    \|A\|_2\,\|\rho-\hat\rho\|_2 ,
\end{equation}
and therefore supports overlap estimation with known reference states and classical post-processing of $\mathcal{O}(1)$-local observables whose Pauli coefficient vector has controlled $\ell_2$ norm. Numerically, we observe that the approximation also performs well under the trace norm; these applications are detailed in Appendix~\ref{sec:qst-applications}.

\section{Noisy Quantum Process Tomography}

Now we generalize above analysis and learning method to QPT tasks. Given an $n$-qubit observable $O = \sum_{k=1}^M c_k Q_k$ composed of $M = \text{poly}(n)$ Pauli terms, the goal is to learn a function $f$ from measurement results. This function predicts the expectation value for arbitrary valid input state $\rho_{\text{in}}$, satisfying $\abs{f(\rho_{\text{in}}) - \text{Tr}[O \mathcal{C}(\rho_{\text{in}})]} \le \epsilon$. Here, we do not restrict the locality of $O$, actually, it may include global Pauli terms with $\abs{P_k}=\mathcal{O}(n)$. Here, We do not require full knowledge of the process $\mathcal C$, but assume access to measurement outcomes of $O$.

Inspired by the aforementioned noisy QST task, we propose an efficient learning algorithm for the QPT task. Remarkably, we find our learning algorithm is even applicable to non-unital noise channels in the QPT tasks. In detail, we focus on the Heisenberg picture from the perspective of the operator evolution, transforming the objective from learning the quantum channel ${\rm Tr}[O\mathcal{C}(\cdot)]$ to learn the dual operator $\mathcal{C}^\dagger(O)$, defined via the duality:
\begin{equation}
    {\rm Tr}\left[\mathcal{C}(\cdot)O\right]= {\rm Tr}\left[~\cdot~\mathcal{C}^{\dagger}(O)\right].
\end{equation}

Similar to the QST task, our fundamental strategy relies on finding an effective representation $\sum_{|P|\leq l^{\prime}}\beta_PP$, which approximates the target $\mathcal{C}^{\dagger}(O)$ in terms of the normalized Frobenius norm. To ensure independence from the input state distribution, the core idea is to introduce a ``decay coefficient" such that $\beta_P \propto c^{|P|}$ for some constant $c < 1$. 
Assuming the input states are drawn from a structured ensemble, such as locally flat or general product distributions, Refs.~\cite{huangLearningPredictArbitrary2023, chenPredictingQuantumChannels2024} propose learning a Hermitian operator by truncating all high-weight Pauli terms. This approach yields a learning algorithm with a sample complexity of $\mathcal{O}(n^{\log(1/\epsilon)})$, which crucially depends on the specific input-state ensemble. To decouple learning efficiency from the input state distribution, Ref.~\cite{razaOnlineLearningQuantum2024} leverages an online learning framework with a sample complexity of $\tilde{\mathcal{O}}(\sqrt{n})$. However, it incurs an $\mathcal{O}(4^n)$ classical post-processing overhead to update and recover the associated Choi matrix. This exponentially increasing runtime limits its scalability for large-scale systems. A complementary approach is the low-degree noise tomography protocol of Ref.~\cite{crupiEfficientCharacterizationCoherent2025}, which reconstructs a gate layer followed by low-degree noise using randomized Pauli product-state preparations and measurements, with the inverse gate layer handled in classical post-processing. Its efficiency, however, relies on a fixed low-degree noise ansatz, whereas i.i.d. single-qubit noise can have process-matrix support up to Pauli weight $n$. Here, we overcome these barriers by exploiting the sparse terminal Pauli representation induced by noisy random circuits. Under constant noise, the legal-path structure yields the sharper count summarized as our Result 1 in Table~\ref{tab:comparison}; for arbitrary noise strength, the universal low-weight terminal enumeration gives the input-agnostic and depth-independent guarantee summarized as our Result 2. In both cases, the learned effective observable predicts outcomes for arbitrary input states, including highly entangled states that explicitly violate the distributional assumptions of prior work. A more detailed comparison is provided in Table~\ref{tab:comparison}.

\subsection{The Noisy Process with Constant Noise strength}
Here we start with the similar requirement of QST tasks, which demands the noise strength is $\Omega(1)$.

For unital noise channel, the analysis is the same to QPT tasks.Regarding the non-unital noise, we leverage the decomposition technique which factorizes the channel into a depolarizing component and a suitable (non-physical) linear map $\mathcal{E}'$. Crucially, the adjoint of this effective map is unital ($\mathcal{E}^{\prime\dagger}(I)=I$)\citep{angrisaniSimulatingQuantumCircuits2025}, ensuring that the operator evolution remains bounded in the normalized Frobenius norm. This structural property facilitates rigorous truncation bounds even for non-unital physical noise, ensuring that $\mathcal{C}^\dagger(O)$ retains a sparse Pauli structure amenable to efficient learning. We derive a generalized scaling law for the process truncation: 

\begin{lemma}[Constant  Noisy Process Truncation]
\label{lem:unified_truncation}
Consider a $d$-depth noisy quantum circuit $\mathcal{C}$ comprising layers of two-qubit Haar-random gates interleaved with i.i.d. single-qubit noise channels $\mathcal{E}$ of constant strength. For any observable $O$ and input state $\rho_{\text{in}}$, there exists a truncated adjoint operator $\mathcal{C}^{(l')\dagger}(O)$ with Pauli weight at most $l'$ , such that 
\begin{equation}
    \left| \operatorname{Tr} \left[ \rho_{\text{in}} \left( \mathcal{C}^{(l')\dagger}(O) - \mathcal{C}^{\dagger}(O) \right) \right] \right| \le \epsilon
\end{equation}
holds with probability at least $1-\delta$. The sufficient truncation weight $l'$ is given by
\begin{equation}
    l' = \mathcal{O} \left( \log(1/ \epsilon \delta ) \right).
\end{equation}
\end{lemma}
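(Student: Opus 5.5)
The plan is to run the argument of Lemma~\ref{le:noise_main} in the Heisenberg picture and then control the truncation error in the normalized Frobenius norm, averaged over the random circuit.

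\textbf{Step 1 (Pauli-path expansion of the adjoint).} Expand $\mathcal C^\dagger(O)=\mathcal C_1^\dagger\mathcal E^{\dagger\otimes n}\cdots\mathcal C_d^\dagger\mathcal E^{\dagger\otimes n}(O)$ in the normalized Pauli basis, inserting a resolution of the identity between every pair of layers. This gives
\begin{equation}
\mathcal C^\dagger(O)=\sum_{s}\Phi^\dagger(\mathcal C,s)\,s_0 ,
\end{equation}
where $s=s_d\cdots s_0$. In the unital case each noise layer contributes a factor at most $(1-\gamma)^{|s_i|}$. In the non-unital case we use Eq.~\eqref{Eq: nonunitaldecompose}, $\mathcal E^\dagger=\mathcal E'^\dagger\circ\mathcal E_{\rm depo}^{\gamma\dagger}$. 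The depolarizing part yields the factor $(1-\gamma)^{|s_i|}$, and $\mathcal E'^\dagger$ is unital. As a result, $\mathcal E'^\dagger$ only maps a Pauli to lower- or equal-weight Paulis (it can send $X,Y,Z$ components onto $I$ via the translation vector), with operator norm bounded in the normalized Frobenius norm by the contraction coefficient $\chi_{\mathcal D}$ when averaged over the locally unbiased ensemble. We then define $\mathcal C^{(l')\dagger}(O)$ as the sum restricted to terminal Paulis with $|s_0|\le l'$.

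\textbf{Step 2 (average Frobenius bound of the tail).} The goal is to bound
\begin{equation}
\mathbb E_{\mathcal C}\big\|\mathcal C^\dagger(O)-\mathcal C^{(l')\dagger}(O)\big\|_F^2 .
\end{equation}
Because each gate is drawn from a local 2-design, orthogonality of Pauli paths holds as in Eq.~\eqref{Eq:cliffordproperty}: cross terms between distinct paths vanish in expectation. The second moment therefore equals $\sum_s \mathbb E[\Phi^\dagger(\mathcal C,s)^2]$, weighted by $(1-\gamma)^{2|s|}$ (or by $\chi^{|s|}$ in the non-unital case). Every path with $|s_0|>l'$ has $|s|\ge|s_0|>l'$. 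Moreover, the 2-design transition probabilities $\mathbb E[\Phi^2]$ form a stochastic matrix on Pauli supports, so the total squared weight reaching layer 0 is at most $\|O\|_F^2$. This gives a tail of order $(1-\gamma)^{2l'}\|O\|_F^2$.

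\textbf{Step 3 (from expectation to a high-probability pointwise bound).} Markov's inequality gives, with probability $1-\delta$ over the circuit,
\begin{equation}
\|\mathcal C^\dagger(O)-\mathcal C^{(l')\dagger}(O)\|_F^2\le (1-\gamma)^{2l'}\|O\|_F^2/\delta .
\end{equation}
For an arbitrary input state, the first attempt would be to use $|{\rm Tr}[\rho_{\rm in}A]|\le\|A\|_\infty$. Converting from the Frobenius norm to the operator norm is the \emph{main obstacle}, since naively it costs a factor $2^{n/2}$.

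I would handle this by working weight-by-weight. Write the tail as $\sum_{w>l'}A_w$, where $A_w$ collects weight-$w$ terms. The path sum at each terminal weight decays like $(1-\gamma)^{w}$ times the number of contributing paths. The 2-design averaging makes each $A_w$'s Pauli coefficients concentrate, so $\|A_w\|_\infty\le\sum_{|P|=w}|\beta_P|$, and we bound this $\ell_1$ mass by $(c(1-\gamma))^{w}$ with a constant $c$ absorbing branching. Getting this requires the noise to be a large enough constant that $c(1-\gamma)<1$, which is exactly the constant-noise hypothesis. Summing the geometric series gives an error $\lesssim (c(1-\gamma))^{l'}\|O\|/\delta$. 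Setting this to $\epsilon$ yields $l'=\mathcal O(\log(1/\epsilon\delta))$ with constants depending on $\gamma$ and $\|O\|$.

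If the $\ell_1$ control fails, the fallback is to state the guarantee for $\rho_{\rm in}$ with bounded Pauli-$\ell_2$ weight via Cauchy--Schwarz, mirroring the Schatten-2 discussion after Theorem~\ref{the:sl}.
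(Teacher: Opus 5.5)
Your Steps 1 and 2 match the paper's path expansion and its tail bound $\mathbb E_{\mathcal C}\|\mathcal C^\dagger(O)-\mathcal C^{(l')\dagger}(O)\|_F^2\le(1-\gamma)^{2l'}\|O\|_F^2$. Step 3, however, has a genuine gap. By applying Markov's inequality to the Frobenius norm first, you discard the circuit randomness. You are then forced into a deterministic Frobenius-to-operator-norm conversion, i.e.\ a bound uniform over all inputs. Your weight-by-weight repair does not close this, for three reasons:
\begin{itemize}
\item 2-design averaging controls only second moments, not the $\ell_1$ mass $\sum_{|P|=w}|\beta_P|$ of a fixed realization.
\item There are up to $3^w\binom{n}{w}$ weight-$w$ terminal Paulis, so passing from $\ell_2$ to $\ell_1$ reintroduces $n$-dependence.
\item The condition $c(1-\gamma)<1$ is a noise \emph{threshold}, strictly stronger than the hypothesis that $\gamma$ is merely a constant.
\end{itemize}
The Cauchy--Schwarz fallback changes the statement rather than proving it. For a pure input such as $|0^n\rangle$, the sum $\sum_{|P|>l'}{\rm Tr}(\rho_{\rm in}P)^2$ is exponentially large in $n$.

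The missing idea is that the lemma is pointwise: $\rho_{\rm in}$ is fixed and the probability is only over the circuit, so no uniform operator-norm bound is needed. Take the second moment of the scalar $\Delta(\rho_{\rm in})=\sum_{|P|>l'}\beta_P{\rm Tr}(\rho_{\rm in}P)$ directly. Since $\rho_{\rm in}$ does not depend on the circuit, the path orthogonality you already invoked in Step 2 gives $\mathbb E_{\mathcal C}[\beta_P\beta_{P'}]=0$ for $P\neq P'$; here it comes from the random first layer $\mathcal C_1$. Using $|{\rm Tr}(\rho_{\rm in}P)|\le 1$, you get
\[
\mathbb E_{\mathcal C}\Delta(\rho_{\rm in})^2=\sum_{|P|>l'}\mathbb E_{\mathcal C}[\beta_P^2]\,{\rm Tr}(\rho_{\rm in}P)^2\le\mathbb E_{\mathcal C}\bigl\|\mathcal C^\dagger(O)-\mathcal C^{(l')\dagger}(O)\bigr\|_F^2\le(1-\gamma)^{2l'}\|O\|_F^2 .
\]
Markov's inequality applied to $\Delta^2$ then yields $|\Delta(\rho_{\rm in})|\le\epsilon$ with probability at least $1-\delta$ once
\[
l'\ge\frac{\log(\|O\|_F^2/(\epsilon^2\delta))}{2\log\frac{1}{1-\gamma}}=\mathcal O(\log(1/\epsilon\delta))
\]
for constant $\gamma$ and $\|O\|_F=\mathcal O(1)$. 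This is the paper's argument. The paper phrases it with the total-path-weight cutoff $|s|\le l$ rather than your terminal cutoff, which makes no difference. With this route the $2^{n/2}$ obstacle never arises.
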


The detailed proof is provided in the Appendix~\ref{sec:constant noise process}. 
Similar to the noisy QST task, reconstructing $\mathcal{C}^{(l')\dagger}(O)$ proceeds from the data set $\mathcal{D}_{\rm QPT}=\left\{\ket{\psi_j}=\otimes^{n}_{i=1}\ket{\psi_{i,j}}, \phi_j = {\rm Tr}\left[O\mathcal{C}(|\psi_j\rangle\langle\psi_j|)\right]\right\}_{j=1}^{N_{\rm data}},$ where $\ket{\psi_{i,j}}$ is a single-qubit stabilizer randomly sampled from the set ${\rm Stab}$, and $\phi_{j}$ denotes the output of the target quantum process. Thus, coefficients $\beta_P$ can be learned efficiently via    $\beta_P =\frac{3^{|P|}}{N_{\rm data}}\sum_{i=1}^{N_{\rm data}}\phi_j\bra{\psi_i}P\ket{\psi_i}.$
\begin{theorem}[Constant Noisy Process Learning Complexity]
For any noisy quantum process $\Ccal$ given by Eq.~\ref{Equ:noisychannel} with constant noise strength, and circuit depth $d=\mathcal{O}(\log n)$ and an $n$-qubit observable $O=\sum_{k=1}^M c_k Q_k$ with $M={\rm poly}(n)$, there exists a learning algorithm that learns a function $f$ from measurement results, satisfying $\abs{f(\rho_{\text{in}}) - {\rm Tr}[O \mathcal{C}(\rho_{\text{in}})]} \le \epsilon$ with success probability $\geq 1-\delta$. The sample complexity and the classical post-processing complexity are ${\rm poly}(n,1/\epsilon,1/\delta)$.
    
\end{theorem}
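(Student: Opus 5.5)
The proof combines Lemma~\ref{lem:unified_truncation} with a concentration analysis of the randomized-stabilizer estimator, run at accuracy budget $\epsilon/2$ each. First invoke Lemma~\ref{lem:unified_truncation} with accuracy $\epsilon/2$ and failure probability $\delta/2$. This gives a truncation weight $l'=\mathcal O(\log(1/\epsilon\delta))$ and a truncated adjoint operator
\[
\mathcal C^{(l')\dagger}(O)=\sum_{|P|\le l'}\beta_P P .
\]
With probability at least $1-\delta/2$ over the circuit ensemble, it satisfies $|{\rm Tr}[\rho_{\rm in}(\mathcal C^{(l')\dagger}(O)-\mathcal C^\dagger(O))]|\le\epsilon/2$ for every input state. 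The unital case is the Heisenberg-picture dual of Lemma~\ref{le:noise_main}. The non-unital case uses the decomposition Eq.~\eqref{Eq: nonunitaldecompose}: since $\mathcal E'^\dagger(I)=I$, each path picks up a factor $(1-\gamma)^{|s|}$ with the path weights kept bounded in normalized Frobenius norm. I take this lemma as given.

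Second, count the terms to be learned. Naively $|\mathcal T_{l'}|\le\sum_{k\le l'}\binom nk3^k=n^{\mathcal O(l')}$, which is only quasi-polynomial. To obtain ${\rm poly}(n,1/\epsilon,1/\delta)$ I would use the legal-path structure already noted for QST. Because of the 2-design averaging, the only nonvanishing paths in expectation have $|s|=0$ or $|s|\ge d+1$. A surviving terminal Pauli $s_0$ (in the Heisenberg picture) must therefore be reached from one of the $M={\rm poly}(n)$ Pauli terms $Q_k$ through $d$ layers with total weight at most $l=l'+d$. With $d=\mathcal O(\log n)$, the number of such paths is at most $M\cdot 2^{\mathcal O(l)}={\rm poly}(n)\cdot{\rm poly}(1/\epsilon\delta)\cdot n^{\mathcal O(1)}$. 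Using the lower bound $\mathbb E[\Phi^2]\ge 15^{-|s|}$ together with the Frobenius bound, the number of significant coefficients is $N_P={\rm poly}(n,1/\epsilon,1/\delta)$. These coefficients can be enumerated classically in that time by growing Pauli paths backward layer by layer, with no knowledge of the gates needed: we only enumerate candidate supports, namely Paulis of weight at most $l'$ inside the union of backward light cones of the supports of the $Q_k$.

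Third, estimate each $\beta_P$ with the dataset $\mathcal D_{\rm QPT}$. By Eq.~\eqref{Eq:cliffordproperty}, $\beta_P=3^{|P|}\mathbb E_{\psi}[\phi\langle\psi|P|\psi\rangle]$. Each summand is bounded by $3^{l'}\|O\|_\infty\le 3^{l'}\sum_k|c_k|$, and each $\phi_j$ itself can be estimated to additive precision by repeated measurement of $O$. Hoeffding's inequality plus a union bound over the $N_P$ coefficients gives per-coefficient error $\eta$ with probability $1-\delta/2$ from
\[
N_{\rm data}=\mathcal O\!\left(9^{l'}\|O\|^2\eta^{-2}\log(N_P/\delta)\right).
\]
Define $f(\rho_{\rm in})=\sum_P\hat\beta_P{\rm Tr}[\rho_{\rm in}P]$. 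Since $|{\rm Tr}[\rho_{\rm in}P]|\le1$, the estimation error in $f$ is at most $N_P\eta$, so choosing $\eta=\epsilon/(2N_P)$ suffices. Because $9^{l'}={\rm poly}(1/\epsilon\delta)$, both the sample count and the post-processing cost ($N_P\cdot N_{\rm data}$ evaluations of product-state Pauli expectations) are ${\rm poly}(n,1/\epsilon,1/\delta)$. A triangle inequality and a union bound then complete the argument.

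The main obstacle is the second step: turning $l'=\mathcal O(\log(1/\epsilon\delta))$ into a polynomial rather than $n^{\log(1/\epsilon)}$ coefficient count when $O$ may contain global terms and the gate layout is unknown. If the legal-path and light-cone counting does not close in the black-box setting, my fallback is to bound the $\ell_1$ mass of the truncated operator directly. The decay $\beta_P\propto(1-\gamma)^{|P|}$ with constant $\gamma$ makes the total weight beyond the top ${\rm poly}$ coefficients small, so it suffices to estimate coefficients via shadows and keep only those above threshold $\epsilon/{\rm poly}$.
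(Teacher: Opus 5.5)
Your skeleton is the paper's: the constant-noise truncation lemma, a legal-path bound $N_P\le M\,2^{\mathcal O(l)}$ in place of $|\mathcal T_{l'}|=n^{\mathcal O(l')}$, and the randomized-stabilizer estimator with Hoeffding and a union bound. That gives $N_{\rm data}\sim 3^{\mathcal O(l)}N_P^2\epsilon^{-2}\log(N_P/\delta)$ and runtime $\mathcal O(n\,N_{\rm data}N_P)$. Steps 1 and 3 are fine, apart from a quantifier slip: the lemma is Markov's inequality for a \emph{fixed} $\rho_{\rm in}$, not a bound holding for every input at once, so your guarantee is per input (which is also all the paper proves).

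The gap is Step 2, which is exactly where polynomial versus quasi-polynomial is decided, and the enumeration you specify does not deliver it. The Paulis of weight at most $l'$ inside the union of backward light cones of the $Q_k$ number $\sum_{r\le l'}3^r\binom{R}{r}$, where $R$ is the size of that union. For $O=\sum_i Z_i$ the union is already all $n$ qubits, and for $d=\Theta(\log n)$ a single qubit's light cone can reach $2^d=n^{\Omega(1)}$ qubits. So the count is $n^{\Theta(\log(1/\epsilon\delta))}$, the same as $\mathcal T_{l'}$. The polynomial bound counts \emph{paths}, whose branching is local (one of $15$ non-identity Paulis on each gate block touched by the current support), not arbitrary $l'$-subsets of a region. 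Your justification also runs the wrong way. A $P$ with $|P|\le l'$ can be fed by paths of arbitrarily large total weight, and those paths contribute to the exact coefficient $\beta_P=3^{|P|}\mathbb E[\phi\langle\psi|P|\psi\rangle]$ your estimator targets. So ``terminal weight $\le l'$'' does not imply ``path weight $\le l'+d$''.

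The repair is to define the candidate set by path weight and then compare errors.
Fix the cutoff $l$ by $(1-\gamma)^{2l}\|O\|_F^2\le\epsilon^2\delta/8$, i.e.\ $l=\mathcal O(\log(\|O\|_F/\epsilon\delta))$ for constant $\gamma$. Let $S$ be the set of terminal endpoints of legal paths of total weight at most $l$.
Generate $S$ by growing paths from each $Q_k$ with $|Q_k|\le l$ through the known pairing of each layer. Branch over the $15$ non-identity Paulis of each active block, and, for non-unital noise, over the at most $4$ single-qubit outputs of $\mathcal E'^{\dagger}$ at each non-identity site. Prune once the cumulative weight exceeds $l$.
A weight-$w$ Pauli has at most $60^w$ successors, so encoding each step's choices by a length-$w$ string gives $|S|\le 2M\cdot 60^{l}={\rm poly}(n,1/\epsilon,1/\delta)$, independently of $d$.
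Then argue, as with the inclusion $\mathcal B_{l'}\subseteq\mathcal A_k$ in the paper's arbitrary-noise proof, that $\sum_{P\in S}\beta_PP$ with exact coefficients omits only paths that are identically zero or have $|s|>l$. By path orthogonality its mean-square error is then at most $(1-\gamma)^{2l}\|O\|_F^2$.

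This uses the layer pairings, though not the gate unitaries. If the architecture is genuinely unknown, neither your proposal nor the paper's Algorithm~\ref{alg:qpl}, which loops over all of $\mathcal T_{l'}$, is polynomial. Your thresholding fallback does not rescue this. Finding the above-threshold coefficients by estimating every candidate costs $n^{\mathcal O(l')}$ again, and a circuit-dependent retained set invalidates the orthogonality argument that bounds the discarded part.
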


From the algorithm above, it can be observed that the computational overhead primarily stems from two sources: (1) the sampling complexity $N_{\rm data}$, (2) and the complexity of classical post-processing. Both of these costs depend on the number of $P$, which in turn is governed by how many legal Pauli paths are retained, in other words, the number of Pauli operators $P$ with non-zero parameter $\beta_{P}$ contained in $\mathcal{C}^{(l')\dagger}(O)$, where the weight of $P$ is given by $l^\prime$. Therefore, a rough estimate of the number of legal paths $N_P$ is $\mathcal{O}(n^{l^\prime})$. For $\epsilon=1/n$, the number of legal paths becomes $\mathcal{O}(n^{\log n})$, incurring quasi-polynomial sampling and post-processing complexity. 

However, we can tighten the bound to retain only $e^{\mathcal{O}(l^\prime)}$ legal paths, where the non-identity positions in subsequent layers are strictly determined by the preceding layer. Starting from an observable $O$ consisting of $M$ terms, the number of valid Pauli combinations is sharply restricted to at most $M \cdot e^{\mathcal{O}(l)}$. Still considering the effective depth, thus we have that when the noise strength is a constant, the Pauli combination of the legal paths is at most $M2^{\mathcal{O}(l')}$ with depth $d=\mathcal{O}(\log n)$. If $d>\mathcal{O}(\log n)$, one can directly output the zero function, which is a good approximation of the noisy circuit. As a result, when $l' = \mathcal{O}(\log n)$, the exponential growth $e^{\mathcal{O}(l')}$ collapses into a standard polynomial scaling $\text{poly}(n)$, ensuring the overall algorithm remains highly efficient. 
\subsection{Arbitrary Noisy Process}
While the constant-noise analysis reveals the intrinsic sparsity $M e^{\mathcal O(l')}$, it relies on the path-damping mechanism and the associated logarithmic-depth regime. We next present an algorithm that applies to entire noise regime $\gamma\in[0,1)$ and arbitrary depth. 

Instead of only relying on the noise suppression, we introduce the property of locally scrambling, which is $\mathbb{E}[{\rm Tr}(UPU^\dagger\rho)]^2\leq (2/3)^{|P|}$. Existing low-weight Pauli-propagation bounds control the mean-square contribution of paths that leave a low-weight sector~\cite{angrisaniClassicallyEstimatingObservables2025}. In contrast, our algorithm does not enumerate such internal paths; it directly estimates the terminal Pauli coefficients of $\mathcal{C}^\dagger(O)$. We define the terminal low-weight set as $\mathcal T_{l'}:=\{P\in \mathcal P_n:\ |P|\le l'\}$. In the Appendix~\ref{sec: arbitrary noise process}, we show that the set of Pauli paths discarded by terminal low-weight truncation is contained in the set of paths discarded by the low-weight path truncation. Consequently, the terminal truncation error inherits the same average-case second-moment bound, yielding a depth-independent pointwise prediction guarantee for an arbitrary input state. The key idea is derived from the observation that the support of any Pauli string generated
by conjugating $P$ through one layer gate $U_t$ is contained in the union of gate pairs
intersecting $\operatorname{supp}(P)$, which has size at most $2|\operatorname{supp}(P)| = 2|P|$. By learning all terminal coefficients in this set,
our algorithm captures the entire contribution of the retained path sector
without explicitly enumerating internal paths.

\begin{lemma}[Arbitrary Noisy Process Truncation]
Consider a $d$-depth noisy quantum circuit $\mathcal{C}$ comprising layers of two-qubit Haar-random gates interleaved with i.i.d. single-qubit noise channels $\mathcal{E}$ of arbitrary strength. For any observable $O$ and input state $\rho_{\text{in}}$, there exists a truncated adjoint operator $\mathcal{C}^{(l')\dagger}(O)$ with Pauli weight at most $l'$ , such that 
\begin{equation}
    \left| \operatorname{Tr} \left[ \rho_{\text{in}} \left( \mathcal{C}^{(l')\dagger}(O) - \mathcal{C}^{\dagger}(O) \right) \right] \right| \le \epsilon
\end{equation}
holds with probability at least $1-\delta$. The sufficient truncation weight $l'$ is given by
\begin{equation}
    l'
    =
    \mathcal O\!\left(
    \frac{\log(1/(\epsilon\delta))}
    {\log\left(3/[2(1-\gamma)^2]\right)}
    \right).
\end{equation}
     
\end{lemma}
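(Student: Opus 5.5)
We work in the Heisenberg picture and expand $\mathcal{C}^\dagger(O)$ as a sum over Pauli paths. Write $\mathcal{C}^\dagger=\mathcal{C}_1^\dagger\mathcal{E}^{\dagger\otimes n}\cdots\mathcal{C}_d^\dagger\mathcal{E}^{\dagger\otimes n}$ and insert a resolution of the identity in the normalized Pauli basis $\tilde{\mathcal P}_n$ after every layer. This gives
\[
\mathcal{C}^\dagger(O)=\sum_{s}\Psi(\mathcal C,s)\,s_0 ,
\]
where $s=s_d s_{d-1}\cdots s_0$ and each $\Psi$ is a product of transfer coefficients ${\rm Tr}(s_{t-1}\mathcal{C}_t^\dagger\mathcal{E}^{\dagger\otimes n}(s_t))$ together with ${\rm Tr}(s_d O)$.

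For non-unital $\mathcal{E}$ we use the decomposition (\ref{Eq: nonunitaldecompose}), $\mathcal{E}^\dagger=\mathcal{E}'^\dagger\circ\mathcal{E}_{\rm depo}^{\gamma}$. Since $\mathcal{E}_{\rm depo}^{\gamma}$ multiplies every non-identity single-qubit Pauli by $(1-\gamma)$, each path picks up at least a factor $(1-\gamma)^{|s_t|}$ per layer. The unital map $\mathcal{E}'^\dagger$ can only lower Pauli weight or keep it fixed, so the Frobenius norm stays controlled. We then define
\[
\mathcal{C}^{(l')\dagger}(O)=\sum_{|P|\le l'}\beta_P P
\]
as the projection of $\mathcal{C}^\dagger(O)$ onto the terminal set $\mathcal T_{l'}$. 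The error operator is $\Delta=\sum_{s:\,|s_0|>l'}\Psi(\mathcal C,s)s_0$.

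The second step bounds $\mathbb{E}_{\mathcal C}\|\Delta\|_F^2$ using orthogonality of Pauli paths under local 2-design gates. Averaging each two-qubit gate over a 2-design kills cross terms between distinct paths, so
\[
\mathbb{E}\|\Delta\|_F^2=\sum_{|s_0|>l'}\mathbb{E}\,\Psi(\mathcal C,s)^2 .
\]
Any path that ends with weight $>l'$ must contain a first layer at which its weight exceeds $l'$. Its discarded sector is therefore contained in the sector removed by the usual low-weight path truncation, which is the containment claim announced in the text: the support of $U_t^\dagger P U_t$ lies in the union of gate pairs touching ${\rm supp}(P)$.

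Per layer, a 2-design gate maps a non-identity two-qubit Pauli uniformly to the 15 non-identity ones. Combined with the noise, this yields a per-qubit mean-square transfer factor of at most $\tfrac{2}{3}(1-\gamma)^2$ for every non-identity qubit that is carried. Following the locally scrambling bound of Ref.~\cite{angrisaniClassicallyEstimatingObservables2025}, a path whose weight exceeds $l'$ then has squared amplitude suppressed by $(2(1-\gamma)^2/3)^{l'}$, summed against the at most unit mass $\|O\|_F^2$ normalized by the input norm. This gives
\[
\mathbb{E}\|\Delta\|_F^2\le \|O\|_F^2\bigl(2(1-\gamma)^2/3\bigr)^{l'+1} ,
\]
uniformly in $d$.

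The third step converts this into the pointwise bound. The input state $\rho_{\rm in}$ is fixed independently of the circuit, so we bound $\mathbb{E}_{\mathcal C}[{\rm Tr}(\rho_{\rm in}\Delta)^2]$ directly. Averaging over the first layer of gates adjacent to $\rho_{\rm in}$ (local scrambling, $\mathbb{E}[{\rm Tr}(UPU^\dagger\rho)]^2\le(2/3)^{|P|}$) decouples the Pauli coefficients. We get
\[
\mathbb{E}[{\rm Tr}(\rho_{\rm in}\Delta)^2]\le\sum_{|s_0|>l'}(2/3)^{|s_0|}\,\mathbb{E}\Psi^2\lesssim\bigl(2(1-\gamma)^2/3\bigr)^{l'} .
\]
Markov's inequality then gives $|{\rm Tr}(\rho_{\rm in}\Delta)|\le\epsilon$ with probability at least $1-\delta$ once $\bigl(2(1-\gamma)^2/3\bigr)^{l'}\le\epsilon^2\delta$. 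Solving for $l'$ yields the stated
\[
l'=\mathcal O\!\bigl(\log(1/(\epsilon\delta))/\log(3/[2(1-\gamma)^2])\bigr).
\]

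\textbf{Main obstacle.} The hard step is the second one: showing that the decay rate $2/3$ per weight unit holds uniformly in depth, including the $\gamma=0$ case. There, weight can grow and shrink repeatedly along a path, so a naive count of paths multiplies the bound by $d$. I would first try a layer-by-layer induction on the quantity $\sum_P \mathbb{E}\,\beta_P^{(t)2}\,(3/2)^{|P|}$, which weights each coefficient by $(3/2)^{|P|}$. The goal is to show this potential is non-increasing under one 2-design layer plus noise. If that monotonicity holds, the tail bound follows immediately without any dependence on depth.
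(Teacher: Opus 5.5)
Your Step 2 is where the argument breaks. The bound $\mathbb{E}\|\Delta\|_F^2\le\|O\|_F^2\bigl(2(1-\gamma)^2/3\bigr)^{l'+1}$ is false, because a layer of 2-design gates is a Frobenius isometry. For a non-identity block input, the mean-square transfer coefficients are $1/15$ onto each of the $15$ non-identity outputs and sum to one, so the gates supply no per-qubit factor $2/3$.

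\begin{itemize}
\item \emph{Depth one already fails.} Take $\gamma=0$, $d=1$, $O=Z_1$ with a Haar gate on qubits $1,2$, and $l'=1$. Then $\mathcal C_1^\dagger(Z_1)$ carries mean mass $9/15$ on weight-two Paulis, so $\mathbb{E}\|\Delta\|_F^2=3/5>4/9$.
\item \emph{No depth-uniform Frobenius tail exists at $\gamma=0$.} In deep noiseless circuits on a connected architecture, the evolved observable spreads over weight $\approx 3n/4$, so $\mathbb{E}\|\Delta\|_F^2\to\|O\|_F^2$ for every $l'\ll n$.
\item \emph{Your potential is not monotone.} One Haar gate acting on a weight-one input raises $\sum_P\mathbb{E}\,\beta_P^2(3/2)^{|P|}$ from $3/2$ to $\frac{1}{15}\bigl(6\cdot\frac32+9\cdot\frac94\bigr)=\frac{39}{20}$.
\end{itemize}
Hence the final $\lesssim$ in Step 3, which draws its noise factor and depth uniformity from Step 2, is unsupported.

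Its noise exponent is also too optimistic. In $\mathcal C^\dagger=\mathcal C_1^\dagger\mathcal E^{\dagger\otimes n}\mathcal C_2^\dagger\cdots$ the endpoint $s_0$ is not followed by a noise layer. Only $s_1$ is damped, and $|s_1|$ can be as small as $\lceil (l'+1)/2\rceil$. As a counterexample, take odd $l'$, depolarizing noise, $d=1$, $O$ a product of $(l'+1)/2$ single-qubit Paulis in distinct gate pairs, and $\rho_{\rm in}$ a product of Bell states on those pairs. Then $\mathbb{E}[{\rm Tr}(\rho_{\rm in}\Delta)^2]=(1-\gamma)^{l'+1}5^{-(l'+1)/2}$, which exceeds $\bigl(2(1-\gamma)^2/3\bigr)^{l'}$ by an unbounded factor as $\gamma\to1$.

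The missing idea is that the factor $2/3$ arises only when high-weight strings are paired with a fixed state through a local twirl: $3^{-|P|}\sum_{\operatorname{supp}Q=\operatorname{supp}P}{\rm Tr}(\rho_{\rm in}Q)^2\le(2/3)^{|P|}$. This is exactly your Step 3, and used on its own it removes your ``main obstacle.''
\begin{itemize}
\item Twirling the single-qubit Haar factors of $\mathcal C_1$ gives $\mathbb{E}[{\rm Tr}(\rho_{\rm in}\Delta)^2]\le(2/3)^{l'+1}\,\mathbb{E}\|\Delta\|_F^2$. This assumes, as the locally scrambling hypothesis does, that $\mathcal C_1$ acts on every qubit.
\item The rest of the circuit then only needs to be Frobenius non-expanding, not contracting. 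That is immediate for unital noise.
\item For non-unital noise, your sentence ``$\mathcal E'^\dagger$ can only lower weight, so the norm stays controlled'' is a non sequitur, since $\mathcal E'^\dagger$ is not pointwise contractive. Replace it with the averaged bound of Lemma~\ref{le:Fro contraction}.
\item Add Lemma~\ref{lem:single-layer-growth}: only components of weight $>l'/2$ entering $\mathcal C_1^\dagger$ can reach weight $>l'$, and their squared coefficients carry at most $(1-\gamma)^{l'+1}$.
\end{itemize}
Together these give $\bigl(\tfrac23(1-\gamma)\bigr)^{l'+1}\|O\|_F^2$ at every depth. Since $\log\frac{3}{2(1-\gamma)}\ge\frac12\log\frac{3}{2(1-\gamma)^2}$, Markov's inequality then yields the stated $l'$.

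The paper reaches the result differently. It imports the layer-wise low-weight propagation bound (Lemma~\ref{lem:noisy low-weight-propagation}), in which the twirl is applied at every truncation step and the discarded Frobenius mass telescopes to at most $\|O\|_F^2$. It then transfers this to the terminal truncation via the containment $\mathcal B_{l'}\subseteq\mathcal A_{\lfloor l'/2\rfloor}$ and path orthogonality. The repaired version of your argument is shorter, needing only the twirl at $\mathcal C_1$.
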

The learning algorithm is similar to the constant noise situation, which demonstrated in Algorithm~\ref{alg:qpl}. While the constant-noise analysis reveals the intrinsic sparsity $M e^{\mathcal O(l')}$, it relies on the path-damping mechanism and the associated logarithmic-depth regime. This Theorem that applies to arbitrary $\gamma\in[0,1)$ and arbitrary depth. Instead of identifying the circuit-dependent legal-path set, the algorithm learns all low-weight terminal Pauli coefficients in $\mathcal T_{l'}$, whose size is \(n^{\mathcal O(l')}\). With \(l'=\mathcal O(\log(1/\epsilon))\), this gives a general quasi-polynomial learning guarantee, at the cost of replacing the refined \(M e^{\mathcal O(l')}\) count by the universal enumeration \(n^{\mathcal O(l')}\). 

Furthermore, if the circuit has a $D$-dimensional local geometry and the observable $O$ has constant-size initial support, the relevant terminal strings lie within the light cone of $O$, yielding $ |\mathcal T_{l'}^{\rm LC}| \le \sum_{r=0}^{l'} 3^r \binom{M_{\rm LC}}{r} \le (O(d^D))^{l'}$, where $M_{\rm LC}$ is the number of qubits in the lightcone of $O$.
\begin{algorithm}
\caption{Quantum Process Learning Algorithm}
\label{alg:qpl}
\textbf{Input:} Data set $\mathcal{D}_{\rm QPT}=\left\{\ket{\psi_j}=\otimes^{n}_{i=1}\ket{\psi_{i,j}}\right\}_{j=1}^{N_{\rm data}}$ and accuracy parameter $\epsilon$;

\textbf{Output:} A $f(\cdot)$ such that $\abs{f(\cdot)-{\rm Tr}\left[O\mathcal{C}(\cdot)\right]}\leq\epsilon$ with high success probability for all input quantum states;

Let $l^{\prime}=[\log(1/\epsilon)]$, enumerate all the $P\in \mathcal{P}_n$ with $|P|\leq l^{\prime}$;

\textbf{For} $j\in[N_{\rm data}]$:

\quad Take the input state $|\psi_j\rangle\langle\psi_j|$ into the target quantum process, and obtain the output $\phi_j={\rm Tr}\left[O\mathcal{C}(|\psi_j\rangle\langle\psi_j|)\right]$;


\textbf{End For}

\textbf{For} each $P\in \mathcal{P}_n$ with $|P|\leq l^{\prime}$:

\quad Compute $\beta_{P}   =\frac{3^{|P|}}{N_{\rm data}}\sum_{j=1}^{N_{\rm data}}\phi_j\bra{\psi_j}P\ket{\psi_j}$.

\textbf{End For}


\textbf{Output}: $f(\cdot) = \sum_{|P|\leq l^{\prime}}\beta_{P}{\rm Tr}(P(\cdot))$
\end{algorithm}
\begin{table*}[t]
\caption{ Comparisons of our results with related previous studies on solving the quantum process learning problem, focusing on input distribution, target channel, sample complexity and classical post-processing complexity}

\renewcommand{\arraystretch}{1.6}
\label{tab:comparison}

\centering{
\begin{tabular}{l|cccc}
\hline\hline
\multicolumn{1}{c|}{\bf Algorithm} & \multicolumn{1}{c}{\bf Input Distribution} & \multicolumn{1}{c}{\bf Channel}  & \multicolumn{1}{c}{\bf Sample Complexity ($N_{\rm data}$)} & \multicolumn{1}{c}{\bf Classical Runtime}\\ \hline

Huang et al.~\cite{huangLearningPredictArbitrary2023} & \begin{tabular}[c]{@{}c@{}}Locally Flat Distribution\end{tabular} & CPTP  & $2^{\mathcal{O}(\log(n) \log(1/\epsilon))}$ & $\mathcal{O}(N_{\rm data})$\\ \hline
Chen et al.~\cite{chenPredictingQuantumChannels2024} &  \begin{tabular}[c]{@{}c@{}}Product State \end{tabular}& CPTP  & $\min\left(\frac{2^{\mathcal{O}(n)}}{\epsilon^2},n^{\mathcal{O}\left(\log\epsilon^{-1}/\log\frac{1}{1-\eta}\right)}\right)\cdot\log\frac{1}{\delta}$ & $\mathcal{O}(N_{\rm data})$\\ 

\hline
Raza et al.~\cite{razaOnlineLearningQuantum2024} & No restriction & Pauli Channel &\begin{tabular}[c]{@{}c@{}} $\mathcal{O}\left(\sqrt{n}\log(M)(\log{\frac{1}{\epsilon\delta})^{ 3/2}})\epsilon^{-3}\right)$\end{tabular} & $\mathcal{O}(N_{\rm data} \cdot 4^n)$ \\ 
\hline
Crupi et al.~\cite{crupiEfficientCharacterizationCoherent2025} & No restriction & $w$-low-degree noise
&$e^{\mathcal{O}(w\log(1/\epsilon))}\cdot
\mathcal{O}\!\left(n\log\frac{1}{\delta}\right)$
& $\mathcal{O}(n\cdot N_{\rm data})$
 \\
\hline
\textbf{Our Result 1} & No restriction  & \begin{tabular}[c]{@{}c@{}}$\mathcal{O}(\log n)$ depth RQC with \\ constant noise\end{tabular}  & $\mathcal{O}\left( 6^{l'_1} \cdot \epsilon^{-2} \log(1/\delta) \right)$ & $\mathcal{O}(n \cdot N_{\rm data}\cdot 2^{\mathcal{O}(l'_1)})$\\
\hline
\textbf{Our Result 2} & No restriction  & \begin{tabular}[c]{@{}c@{}}RQC with \\ arbitrary noise\end{tabular}  & $\mathcal{O}\left( n^{l'_2} \cdot \epsilon^{-2} \log(1/\delta) \right)$ & $\mathcal{O}(n^{\mathcal{O}(l'_2)} \cdot N_{\rm data})$\\\hline\hline
\end{tabular}
}

\begin{flushleft}
\scriptsize
\textbf{Note:} $\eta \in(0,1)$ relates to input distribution; $M$ is the number of Pauli terms within observable $O$. $w$ denotes the low-degree order: the noise Kraus operators act nontrivially on at most $w$ qubits at a time in their Pauli decomposition. $l'_1 = \mathcal{O}\left(\log(1/ \epsilon \delta)\right)$  and $ l'_2 = \mathcal O\!\left(
    \frac{\log(1/(\epsilon\delta))}
    {\log\left(3/[2(1-\gamma)^2]\right)}
    \right)$is the Pauli weight truncation threshold, representing the dominant term in the exponent.
\end{flushleft}

\end{table*}

While Refs.~\cite{huangLearningPredictArbitrary2023, chenPredictingQuantumChannels2024} utilized specific input state ensembles and Refs.~\cite{shaoSimulatingNoisyVariational2024,aharonovPolynomialTimeClassicalAlgorithm2023} relied on constant-level noise channels to induce this decay, we demonstrate that this phenomenon is naturally inherent in locally random circuits, across a noise strength $\gamma$ ranging from $0$ to a constant, and it is independent of the quantum circuit depth. 
While simulation algorithms must track the entire path history\citep{aharonovPolynomialTimeClassicalAlgorithm2023,angrisaniClassicallyEstimatingObservables2025}, our approach leverages the fact that all preceding evolution and noise effects are implicitly contracted within the learned coefficients $\{\beta_P\}$, allowing the complexity to be governed solely by the terminal Pauli weight $l'$ which remains independent of the circuit depth $d$. We summarize our third main result in the following Theorem. The whole proof is given in Appendix~\ref{sec:pt2}.

\begin{theorem}[Arbitrary Noisy Quantum Process Learning]
For any noisy quantum process $\Ccal$ given by Eq.~\ref{Equ:noisychannel}, and an $n$-qubit observable $O=\sum_{k=1}^M c_k Q_k$ with $M={\rm poly}(n)$, there exists a learning algorithm that learns a function $f$ from measurement results, satisfying $\abs{f(\rho_{\text{in}}) - {\rm Tr}[O \mathcal{C}(\rho_{\text{in}})]} \le \epsilon$ with success probability $\geq 1-\delta$. The sample complexity and the classical post-processing complexity are $n^{\mathcal{O}(\log \epsilon^{-1})}$ for any noise strength $\gamma$.

Specifically, when the circuit has a $D$-dimensional local geometry, the sample complexity is
$(d^D)^{\mathcal O(\log\epsilon^{-1})}$ and the classical post-processing complexity is $n\cdot(d^D)^{\mathcal O(\log\epsilon^{-1})}$.

\label{the:pl}
\end{theorem}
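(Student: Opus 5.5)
The plan is to combine the Arbitrary Noisy Process Truncation lemma with a concentration analysis of the estimator in Algorithm~\ref{alg:qpl}. Then we count the terminal Pauli strings in $\mathcal T_{l'}$, or in $\mathcal T_{l'}^{\rm LC}$ in the geometrically local case. Throughout we write $\mathcal{C}^\dagger(O)=\sum_P \beta_P^{\star} P$ with $\beta_P^\star=2^{-n}{\rm Tr}[P\,\mathcal{C}^\dagger(O)]$. The target is $\mathcal{C}^{(l')\dagger}(O)=\sum_{P\in\mathcal T_{l'}}\beta_P^\star P$. The truncation lemma, applied with accuracy $\epsilon/2$ and failure probability $\delta/2$, gives $|{\rm Tr}[\rho_{\rm in}(\mathcal{C}^{(l')\dagger}(O)-\mathcal{C}^\dagger(O))]|\le\epsilon/2$ for every input $\rho_{\rm in}$. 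This holds with probability $\ge 1-\delta/2$ over the circuit ensemble, with $l'=\mathcal O(\log(1/\epsilon\delta)/\log(3/[2(1-\gamma)^2]))$. Since $\log(3/[2(1-\gamma)^2])\ge\log(3/2)$ for all $\gamma\in[0,1)$, this is $l'=\mathcal O(\log(1/\epsilon\delta))$ uniformly in $\gamma$ and $d$. That uniformity is the source of the $\gamma$-independent and depth-independent statement.

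Next we show that the estimator is unbiased. By Eq.~\eqref{Eq:cliffordproperty}, for product stabilizer inputs we have $\mathbb E_{\psi}[\,\langle\psi|Q|\psi\rangle\langle\psi|P|\psi\rangle]=\delta_{PQ}3^{-|P|}$ for Pauli $Q,P$. Writing $\phi_j=\langle\psi_j|\mathcal{C}^\dagger(O)|\psi_j\rangle$ then gives $\mathbb E[\hat\beta_P]=\beta_P^\star$. Each summand is bounded by $3^{|P|}\|O\|_\infty\le 3^{l'}\sum_k|c_k|$, and if $\phi_j$ is estimated from finitely many shots, it is still a bounded unbiased quantity. Hoeffding's inequality plus a union bound over $|\mathcal T_{l'}|$ coefficients then yields $|\hat\beta_P-\beta_P^\star|\le\eta$ for all $P$ simultaneously with probability $\ge1-\delta/2$. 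This requires $N_{\rm data}=\mathcal O(9^{l'}\|c\|_1^2\eta^{-2}\log(|\mathcal T_{l'}|/\delta))$.

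For an arbitrary input, $|f(\rho_{\rm in})-{\rm Tr}[\rho_{\rm in}\mathcal{C}^{(l')\dagger}(O)]|\le\sum_{P\in\mathcal T_{l'}}|\hat\beta_P-\beta^\star_P|\le|\mathcal T_{l'}|\eta$, because $|{\rm Tr}(P\rho_{\rm in})|\le1$. This is the input-agnostic step, and it is where we pay the full $\ell_1$ count rather than an $\ell_2$ one. Choosing $\eta=\epsilon/(2|\mathcal T_{l'}|)$ and combining with the truncation bound gives total error $\le\epsilon$ with probability $\ge1-\delta$. For the general case, $|\mathcal T_{l'}|=\sum_{r\le l'}3^r\binom nr=n^{\mathcal O(l')}$. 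Hence $N_{\rm data}=n^{\mathcal O(l')}\cdot9^{l'}\epsilon^{-2}{\rm poly}(n)=n^{\mathcal O(\log\epsilon^{-1})}$, treating $\delta$ and $M={\rm poly}(n)$ as absorbed. The classical cost is $|\mathcal T_{l'}|\cdot N_{\rm data}\cdot\mathcal O(n)$, since each $\langle\psi_j|P|\psi_j\rangle$ is a product of single-qubit factors; this is again $n^{\mathcal O(\log\epsilon^{-1})}$.

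In the $D$-dimensional local case, $\beta_P^\star=0$ for any $P$ supported outside the backward light cone of $O$. This follows from the support-propagation observation preceding the theorem: every path string stays inside the union of gate pairs touching the previous support, and single-qubit noise does not enlarge support. So we may restrict the enumeration to $\mathcal T_{l'}^{\rm LC}$, with $M_{\rm LC}=\mathcal O(d^D)$ qubits for constant-support $O$. This gives $|\mathcal T^{\rm LC}_{l'}|\le(\mathcal O(d^D))^{l'}$, so the same Hoeffding argument yields sample complexity $(d^D)^{\mathcal O(\log\epsilon^{-1})}$. The post-processing cost carries the extra factor $n$ from evaluating product-state overlaps.

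The main obstacle is reconciling the average-case truncation lemma, which controls a second moment over circuits, with the requirement of a pointwise guarantee for arbitrary, possibly highly entangled $\rho_{\rm in}$. I would lean on the lemma as stated, which is already uniform in $\rho_{\rm in}$, and only check that the $\ell_1$ blow-up $|\mathcal T_{l'}|$ in the estimation error stays quasi-polynomial.
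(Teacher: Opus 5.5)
Your proposal is correct and is essentially the paper's proof. You split the error into the terminal truncation error, controlled by the arbitrary-noise truncation lemma, and a learning error bounded by $|\mathcal T_{l'}|\max_P|\hat\beta_P-\beta_P|$ via Hoeffding and a union bound, then count $|\mathcal T_{l'}|=n^{\mathcal O(l')}$ or its light-cone restriction $(\mathcal O(d^D))^{l'}$; your observation that $\log(3/[2(1-\gamma)^2])\ge\log(3/2)$ is what makes $l'$ uniform in $\gamma$ and $d$.

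State the quantifiers carefully, though. The truncation event depends on $\rho_{\rm in}$, so, as in the paper, the guarantee holds for each fixed input with probability $\ge 1-\delta$ over circuit and data, not simultaneously for all inputs. The simultaneous version fails at $\gamma=0$: take a deep random circuit $U$ and $O=Z_1$; the circuit-dependent input $U^\dagger|0^n\rangle$ gives true value $1$, while the low-weight part of $U^\dagger Z_1 U$ has exponentially small operator norm. Also note that $l'=\mathcal O(\log(1/\epsilon\delta))$ tacitly assumes $\|O\|_{\rm F}=\mathcal O(1)$.
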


To contextualize our contribution, we compare the complexity and applicability of our protocol with state-of-the-art quantum learning algorithms in Table~\ref{tab:comparison}. A central theoretical contribution of our framework is the unified treatment of both i.i.d. single-qubit unital and non-unital noise channels. This stands in sharp contrast to complexity results derived for the $n$-qubit Pauli channel model \cite{razaOnlineLearningQuantum2024}. While their framework captures global correlations, it remains inherently restricted to unital dynamics, whereas our approach tackles the distinct challenge of generic non-unitality. The capacity of our framework to handle non-unital CPTP maps is essential for modeling realistic hardware decoherence, such as amplitude damping, which remains inaccurate under unital approximations.

Complementing this noise universality, our QPT algorithm yields a strictly \emph{input-agnostic} characterization. Unlike protocols that rely on input distributions that are at most polynomially far from locally flat distributions \cite{huangLearningPredictArbitrary2023} or specific product-state ensembles \cite{chenPredictingQuantumChannels2024}, our approach operates independently of the input distribution. This guarantees predictive robustness for arbitrary input states, effectively bypassing the distribution-shift limitations inherent in previous works. Our framework remains valid even for input states that violate standard distribution assumptions, such as highly entangled states (see Appendix~\ref{sec:highly entangled} for  preliminary numerical tests on entangled inputs). Another closely related approach is the low-degree noise tomography protocol of Ref.~\cite{crupiEfficientCharacterizationCoherent2025}. While both approaches exploit a reduced Pauli representation, the origin of the reduction is different: their efficiency relies on a fixed low-degree noise ansatz, whereas our truncation is justified by the decay property induced by random circuits with i.i.d. single-qubit noise. 

For constant accuracy $\epsilon$, the constant-noise guarantee in Our Result 1 has sample complexity proportional to $6^{l'_1}$, where $l'_1=\mathcal{O}(\log(1/\epsilon\delta))$ and is independent of $n$. Unlike the fixed low-degree tomography of Ref.~\cite{crupiEfficientCharacterizationCoherent2025}, whose gate-layer extension incurs an additional linear-in-$n$ overhead from classical post-processing, our effective truncation threshold follows from the decay property established for noisy random circuits. This also contrasts sharply with the method of Ref.~\cite{razaOnlineLearningQuantum2024}, which suffers from a $\mathcal{O}(\sqrt{n})$ polynomial dependency on the system size $n$, and an intractable $\mathcal{O}(4^n)$ classical runtime bottleneck inherent to standard shadow tomography protocols when estimating general channels.


When considering higher precision, such as $\epsilon=1/n$, the constant-noise result remains polynomial in $n$ in the logarithmic-depth regime because $l'_1=\mathcal{O}(\log n)$. For arbitrary noise strength, Our Result 2 instead uses the universal terminal enumeration with complexity $n^{l'_2}$; thus it gives a depth-independent and input-agnostic guarantee, but its scaling becomes quasi-polynomial when $l'_2=\mathcal{O}(\log n)$. This separates the two conclusions in Table~\ref{tab:comparison}: the constant-noise regime yields the sharper polynomial scaling, whereas the arbitrary-noise regime provides the more general noise-strength-agnostic guarantee.
\vspace{6px}

\vspace{8px}
\section{Numerical Results}
\subsection{Performance on the Ising Model}
\begin{figure*}[t]
\begin{center}
\includegraphics[width=1\linewidth]{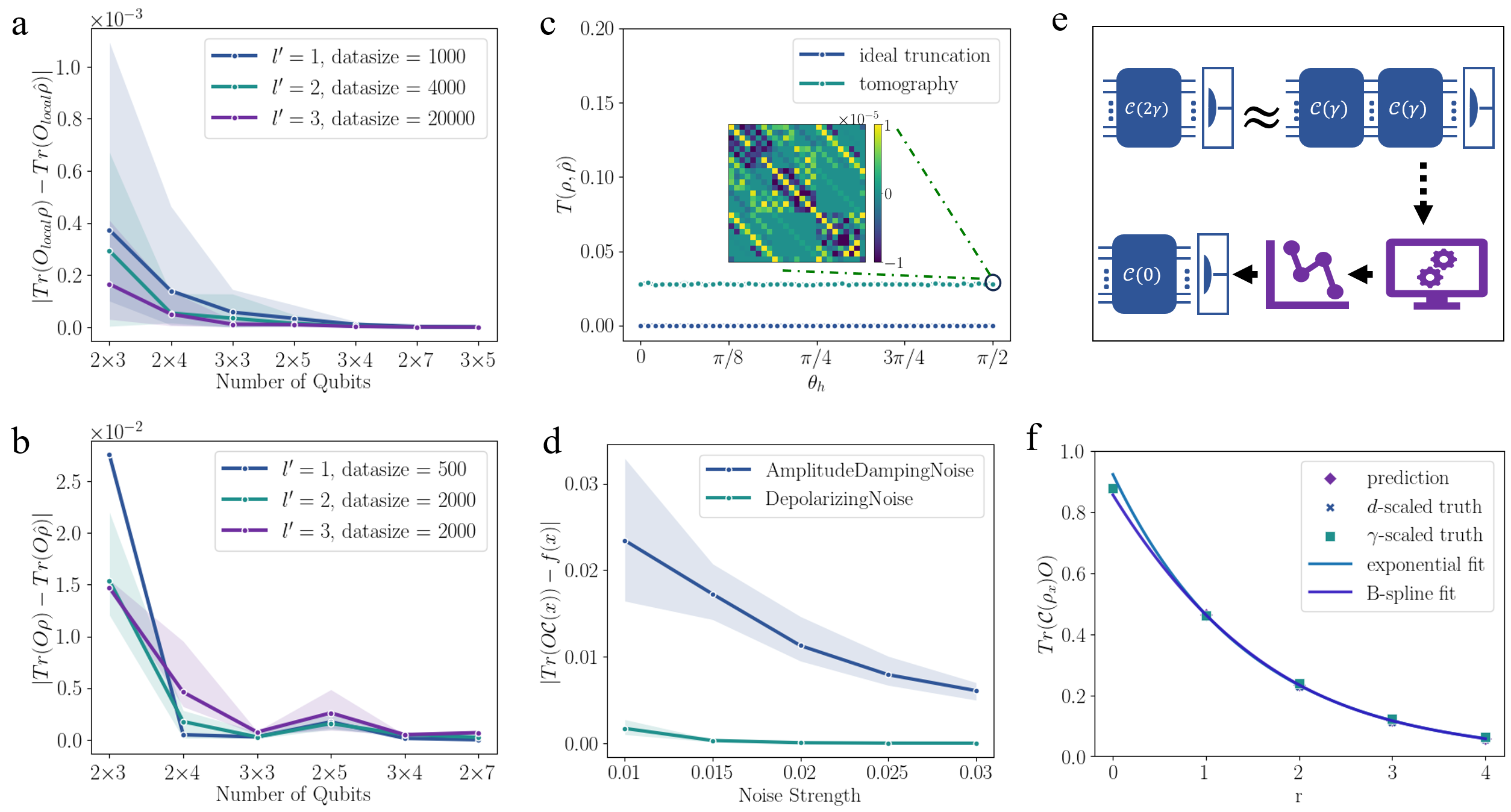}
\end{center}
\caption{(a) QST results for various numbers of qubits and $l'$. Each circuit is 20 layers accompanied by depolarizing noise of strength 0.02 and fixed $\theta_h=\frac{\pi}{4}$. The grid illustrates the $3\times5$ $2D$ transverse field Ising model. (b). QPT results for different qubit numbers and $l'$, where the circuit depth is 5 layers and the depolarizing noise strength is 0.01. (c). Learning of the $\rho$ generated by sweeping $\theta_h$ from $0$ to $\frac{\pi}{2}$, where the trace distance $T(\rho,\hat{\rho})=\frac{1}{2}\|\rho-\hat{\rho}\|_1$; the circuit size $2\times 5$, $45$ layers, depolarizing noise strength 0.02. The heat-map shows a $25\times25$ sub-matrix of the matrix $\langle i|\left(\rho-\hat{\rho}\right)|j\rangle$ at $\theta_h=\frac{\pi}{2}$, where basis $i,j\in\{0,1\}^n$ (the full matrix in Appendix~\ref{sec:expr}).  (d). QPT for the $2\times5$ system under 2 kinds of noise and other settings identical to b. (e).The procedure of the ZNE. (f) The numerical result of the ZNE-QEM using the proposed learning algorithm.} 
\label{fig:exp}
\end{figure*}
\noindent
Although our theoretical results rely on the randomness assumption, we numerically verify that our learning algorithm remains highly efficient for a broader class of circuits, including those with specific structure, such as noisy quantum dynamical processes. This demonstrates the broad practical applicability of our approach. 

Specifically, we benchmark the noisy Hamiltonian dynamics of 2D transverse-field Ising model given by $H = -J \sum_{\langle q,p \rangle}Z_q Z_p + h \sum_q X_q$. The time evolution is implemented via Trotterization as a sequence of ${\rm R_{ZZ}}(\theta_J=-\pi/2)$ and ${\rm R_{X}}(\theta_h)$ gates~\citep{kimEvidenceUtilityQuantum2023}. Using Qulacs~\citep{2021suzukiQulacsFastVersatile}, we simulate noisy circuits under two conditions: i.i.d.\ single-qubit depolarizing noise ($p=0.02$) for the unital case, and a mixture of depolarizing and amplitude damping for the non-unital case. Our simulations cover system sizes up to $3\times 5$ qubits and 20 layers, a scale that occupies $16$~GB RAM.

The numerical results are presented in Fig.~\ref{fig:exp}, where shaded areas indicate the outcome range over 10 trials. Consistent with Theorems~\ref{the:sl} and~\ref{the:pl}, Figs.~\ref{fig:exp} (a) and (b) show that the reconstruction error stabilizes at large system sizes, confirming the localized nature of the learning problem. Figs.~\ref{fig:exp} (c) and (d) further demonstrate robustness against parameter variations and non-unital noise; interestingly, stronger noise yields higher accuracy, reflecting the tighter Pauli weight confinement. Beyond sample efficiency, our method enables extreme memory compression: a 14-qubit state description is reduced from $8$~GB (full matrix) to just $1$~KB (sparse coefficients). 

\subsection{Application: Quantum Error Mitigation}
\label{sec:apply}

We note that our learning algorithm can also be applied to solve the quantum error mitigation (QEM) task~\citep{eisertQuantumCertificationBenchmarking2020}. QEM comprises protocols that suppress stochastic errors on NISQ hardware by classical post-processing of measurement data, without introducing full quantum error-correcting codes. Whereas error correction aims to eliminate noise, QEM converts every hardware improvement into an immediate fidelity gain by suppressing residual errors. One QEM approach is zero-noise extrapolation (ZNE), which executes the circuit at several circuit fault rates $\lambda$, which measures the level of errors occurring in the overall circuit, and $ \lambda\propto \gamma$ \citep{caiQuantumErrorMitigation2023}. Although the circuit output at $\lambda=0$ cannot be measured directly, an empirical model $h(\lambda)$ linking  $\lambda$ to the circuit output can be built from a set of different $\lambda$ values. This allows us to extrapolate the case of $\lambda=0$, which corresponds to zero noise.  Different $\lambda$ values can be generated by pulse-stretching\citep{kandalaErrorMitigationExtends2019, kimScalableErrorMitigation2023} or by inserting additional noise channels \citep{endoPracticalQuantumError2018}. For an i.i.d single-qubit noise,  it is natural to set $\lambda$ proportional to the gate count, thus $\gamma \propto \lambda\propto \text{Number\ of\ the\ gates}$. Here we vary $\lambda$ by controlling the depth of the circuit $d$.

Whereas the conventional ZNE must be tailored to each specific input, our protocol is input-agnostic. Similar to Lemma~\ref{le:noise_main}, ${\rm{Tr}}(O\mathcal{C}(\cdot)) = \sum_{|P|\leq l'}(1-\gamma)^{|P|}\Phi(\mathcal{C},P){\rm{Tr}}(P\cdot)$. Considering
the depolarizing noise strength $\gamma<1$, $(1-\gamma)^{|P|} =(1-\gamma)^{\frac{|P|}{d}d}\approx(1-\gamma d)^{\frac{|P|}{d}} $. In other words, one can obtain the characterizations of the same quantum processes with different noise strength by appending extra quantum circuit layers to the original process, this yields a sequence of learned values  $\{f_r| \abs{f_r-{\rm Tr}[O\mathcal{C}_{rd}(|0^n\rangle\langle0^n|)]}\leq \epsilon\}_{r\geq1}$. One can utilize $\{f_r\}_{r\geq1}$ to extrapolate $f_0$, which is considered as the characterization of $\mathcal{C}_d$ with zero noise.

The result of numerical experiments of application is shown in Fig~\ref{fig:exp}(f), where we simulate a six-qubit 2D transverse field Ising model 
\begin{equation}
    \label{equ:sim}
    H = -J \sum_{\langle q,p \rangle}Z_q Z_p + h \sum_q X_q,
\end{equation}
with $5$ layers.

Two key observations emerge:

\begin{itemize}
    \item Rescaling either the depth coefficient $d$ or the noise strength $\gamma$ perturbs the dynamics to a comparable extent, as seen from the nearly overlapping dots.
    \item The characterization obtained by learning the coefficients of $d$ can be extrapolated via curve fitting to estimate the noise-free system (i.e., when $\gamma=0$) characterization. Exponential extrapolation yields an error $0.0446$;  a cubic B-spline (piecewise polynomial) reduces it to $0.0222$.
\end{itemize}
\vspace{8px}
\section{Discussion}

\noindent
Efficiently characterizing noisy quantum states and processes is a fundamental challenge. In this work, we develop a learning framework for noisy random quantum circuits that covers both noisy quantum state tomography and noisy quantum process tomography. For unital noisy states, Pauli-path damping yields a compact learnable representation in the constant-noise, logarithmic-depth regime, while deeper constant-noise circuits are already close to the noise-induced fixed point at the target precision. For process learning, the Heisenberg-picture formulation extends the framework to both unital and non-unital i.i.d. single-qubit noise channels and enables prediction for arbitrary input states. The process-learning guarantees split into two complementary regimes: constant noise gives a sharper polynomial scaling in the logarithmic-depth regime, whereas arbitrary noise strength admits a depth-independent, input-agnostic algorithm with polynomial scaling in the constant accuracy, maintaining sample and runtime efficiency across the entire noise spectrum, from the noiseless limit ($\gamma=0$) to constant noise levels ($\gamma=\Theta(1)$)

Functionally, our work serves as a ``learning-theoretic dual'' to classical simulation~\citep{gil-fusterRelationTrainabilityDequantization2025}. While simulation algorithms predict outcomes by aggregating dominant Pauli paths based on known circuit parameters~\citep{aharonovPolynomialTimeClassicalAlgorithm2023,angrisaniSimulatingQuantumCircuits2025,schusterPolynomialtimeClassicalAlgorithm2025}, our approach identifies the effective terminal Pauli coefficients directly from measurement data, solving the inverse problem without requiring prior knowledge on circuit depth or architecture~\citep{chen2023learnability}. This offers a complementary benchmarking perspective to standard certification methods~\citep{eisertQuantumCertificationBenchmarking2020}. 


Several open problems remain for future study. First, while our main theorems establish average-case efficiency, we also derive a sample complexity lower bound for the worst-case scenario (see Appendix~\ref{Sec:samplelowerbound}), underscoring the inherent hardness of QST tasks in such cases. This raises the question: can we theoretically characterize the efficiency boundaries of QST and QPT tasks? Furthermore, we show that sample complexity of QPT task is a smooth function of the noise strength $\gamma$, exhibiting a fundamental difference from the computational hardness results in random circuit sampling, which feature a sudden complexity transition. Exploring the formal relationship between learning complexity and computational complexity remains an intriguing direction. Finally, extending our i.i.d. analysis to gate-dependent noise models is crucial for characterizing more practical quantum hardware.

\section{Acknowledgments}
This work is supported by National Natural Science Foundation of China (NSFC) Grants (No.~62501060, No.~62461160263, No.~12361161602 and  No.~62131002), NSAF (Grant No.~U2330201), Beijing Natural Science Foundation Z250004, 
Beijing Science and Technology Planning Project (Grant No.~Z25110100810000)
Quantum Science and Technology-National Science and Technology Major Project (2023ZD0300200), and the High-performance Computing Platform of Peking University and Pawsey Supercomputing Research Centre.

\clearpage
 \bibliography{main}
 \clearpage
\widetext
\section*{Appendix}
\appendix

\section{Structure and Applicability of the noisy circuit in this work}
\label{sec:introC}
We emphasize that the quantum process studied serves as a standard model with wide and practical applications, especially in the Near-Term Intermediate Scale Quantum (NISQ) era. This appendix details the topological definitions, generality, and practical relevance of the quantum circuit model investigated.

\subsection{Topological Definition and Model Generality}

The studied noisy quantum process $\mathcal{C}$ adopts a layered structure, representing a large class of quantum circuits:
\begin{equation}
    \begin{aligned}
        \mathcal{C}=\mathcal{E}^{\otimes n}\mathcal{C}_d\mathcal{E}^{\otimes n}\mathcal{C}_{d-1}\cdots\mathcal{E}^{\otimes n}\mathcal{C}_1,
    \end{aligned}
\label{Eq:noisychannel}
\end{equation}

in which a $\gamma$-strength local noise channel $\mathcal{E}$ (unital or non-unital) is applied uniformly throughout the circuit. The quantum circuit depth is $d$, and each layer $\mathcal{C}_i$ comprises non-overlapping two-qubit gates acting on arbitrary pairs, where each gate is uniformly sampled from a local 2-design unitary group. Each layer $\mathcal C_{i}$ can realise variational algorithms such as VQE, QAOA or quantum neural networks \citep{zhouQuantumApproximateOptimization2020,kandalaHardwareefficientVariationalQuantum2017, grimsleyAdaptiveVariationalAlgorithm2019}, while the local noise assumption $\mathcal E^{\otimes n}$ captures typical NISQ imperfections.
The local 2-design assumption is an extremely weak condition, where quantum neural network models are typical cases\citep{mccleanBarrenPlateausQuantum2018,cerezoCostFunctionDependent2021}, and even Clifford gates satisfy such an assumption \citep{zhuCliffordGroupFails2016}. We note that if an ensemble follows a $(t+1)$-design, it must follow the $t$-design property \citep{meleIntroductionHaarMeasure2024}. As a result, this assumption is very general and covers a large amount of NISQ algorithms related to 'randomly initial parameters' and 'classical optimizations'\citep{mccleanBarrenPlateausQuantum2018,cerezoCostFunctionDependent2021}.

The circuit model is formally defined below using graph-theoretic definitions:

\begin{definition}[Architecture, restatement of Ref.~\cite{haferkampLinearGrowthQuantum2022}]
\label{def:arch}
\emph{An architecture is a directed acyclic graph that contains $R\in \mathbb{Z}_{>0}$ vertices (gates). Two edges (qubits) enter each vertex, and two edges exit. Two typical examples are listed below:}
\begin{itemize}
\item \emph{A brickwork is the architecture of any circuit formed as follows. Apply a string of two-qubit gates: $U_{1, 2}\otimes U_{3, 4}\otimes \cdots \otimes U_{n-1,n}. $Then apply a staggered string of gates. Perform this pair of steps $T$ times in total, using possibly different gates each time.} 
\item \emph{A staircase is the architecture of any circuit which applies a stepwise string of two-qubit gates: $U_{n,n-1}U_{n-2,n-1}\cdots U_{2, 1}$. Repeat this process $T$ times, using possibly different gates each time.}
\end{itemize}
\end{definition}
Here, the quantum circuit layer $\mathcal{C}_i$ may adopt any architecture, and \emph{\textbf{we note that our learning algorithm can be applied to any geometrical architecture,}} and thus covers a large class of noisy quantum circuits, especially for those used in NISQ algorithms.

\begin{definition}[Random Quantum Circuit, restatement of Ref.\cite{haferkampLinearGrowthQuantum2022}]
\label{RQC}
\emph{
Let $G$ denote an arbitrary architecture. A probability distribution can be induced over the architecture-$G$ circuits as follows: for each vertex in $G$, draw a gate Haar-randomly from $SU(4)$. Then contract the unitaries along the edges of $G$. Each circuit so constructed is called a random quantum circuit.
}
\end{definition}

\begin{definition}[Random noisy quantum circuit]
\label{RQC_noisy}
\emph{
Let $\widetilde G$ denote an arbitrary architecture. A probability distribution can be induced over the architecture-$\widetilde G$ circuits as follows: for each vertex in $\widetilde G$, draw a gate Haar-randomly from $SU(4)$ and an i.i.d single-qubit noisy channel. Then contract the unitaries along the edges of $\widetilde G$. Each circuit so constructed is called a random noisy quantum circuit.}
\end{definition}
\noindent
The guarantee is a high-probability bound ($\geq 1-\delta$) over random circuit ensemble defined in Definition.~\ref{RQC_noisy}. Furthermore, we numerically demonstrate that our learning algorithm can successfully handle a noisy Hamiltonian dynamics approach, where the underlying quantum circuit does not possess the locally random property. 

\subsection{Importance for Quantum Benchmarking and Learning}

To design powerful quantum algorithms, such as quantum neural network models and related states, a benchmarking algorithm is necessary \citep{arute2019quantum,babbushGrandChallengeQuantum2025}; otherwise, one may not verify and check the correctness of the implemented quantum algorithm. Following this logic, a large amount of quantum learning algorithms are proposed for quantum state (process) tomography, Hamiltonian learning\citep{haahLearningQuantumHamiltonians2024} , shallow circuit learning\citep{huangLearningShallowQuantum2024}, quantum gate tomography, and other quantum benchmarking algorithms. \emph{\textbf{To the best of our knowledge, this is the first provably efficient learning algorithm for noisy state and process tomography}}, providing an efficient tool for verifying the output of the implemented quantum algorithms on NISQ devices.

\subsection{The Gate-Independent Noise Model}

We utilize the gate-independent noise model, which posits that the detrimental effects impacting quantum operations are uniform across all fundamental gates, irrespective of their specific type or physical implementation. This simplifying assumption is widely adopted due to several key factors:

\begin{itemize}
\item \textbf{Theoretical Tractability:} Adopting a gate-independent noise assumption allows researchers to advance the development and analysis of error correction protocols and fault-tolerant methodologies without needing to incorporate the intricate details of gate-specific noise characteristics\citep{knill2008randomized,helsen2019new,chen2021robust}. This uniformity facilitates the derivation of universal results and theoretical performance bounds \citep{nielsen2001quantum}.
\item \textbf{Practical Approximations:} In particular quantum systems---especially those featuring highly calibrated gates acting on the same number of qubits and employing standardized control mechanisms---the variability of noise across different gates can be negligible\citep{shor1996fault,arute2019quantum}. In these instances, the gate-independent noise model serves as a tenable approximation, streamlining analysis without substantially compromising precision.
\item \textbf{Alignment with Noise Conversion Methods (Twirling):} Techniques like Pauli twirling are routinely applied to convert complicated physical noise channels into simpler, diagonal forms in the Pauli basis\citep{wallman2016noise,chen2023learnability}. The resulting channel can often be effectively approximated as gate-independent, thereby conforming to the model's postulates.
\end{itemize}
The gate-independent noise model thus furnishes a foundational framework for comprehending error propagation and engineering correction strategies. We identify the robust depiction of gate-dependent noise, which typically manifests in larger, more intricate quantum architectures, as a significant avenue for future exploration.

\section{Learning a Quantum state}
\label{sec:pauli}
The representation of noisy quantum state are presented in this section, together with further implementation details of the QST algorithm.

\subsection{ Noisy Quantum State}
In this section, we will give an unified representation of noisy quantum state under Pauli propagation.
\begin{lemma}
[Unified Representation of Noisy Quantum State]
\label{le:noise appendix}
Let the noisy quantum state $\rho=\mathcal{C}(|0^n\rangle\langle0^n|)$ with $\mathcal{C}=\mathcal{E}^{\otimes n}\mathcal{C}_d\mathcal{E}^{\otimes n}\mathcal{C}_{d-1}\cdots\mathcal{E}^{\otimes n}\mathcal{C}_1$ representing a $d$-depth noisy quantum circuit, where $\mathcal{C}_i(\cdot)=C_i(\cdot)C_i^{\dagger}$ is a unitary channel consisting of a layer of two-qubit gates, and $\mathcal{E}$ is a unital single-qubit noise channel with strength parameter $\gamma$. Then the noisy quantum state $\rho$ can be represented by the Pauli path integral, that is 
\begin{align}
    \rho=\sum_{s \in \tilde{\mathcal{P}}_n^{\otimes (d+1)}}(1-\gamma)^{|s|}\Phi(\mathcal{C},s)s_d,
\end{align}
where the $n(d+1)$-qubit operator $s=s_0s_1\cdots s_d$, $\tilde{\mathcal{P}}_n=\{I/\sqrt{2},X/\sqrt{2},Y/\sqrt{2},Z/\sqrt{2}\}^{\otimes n}$. The Pauli weight $|s|$ represents the number of non-identity operators in $s\in\tilde{\mathcal{P}}_n^{\otimes (d+1)}$. The coefficient
\begin{equation}
\Phi(\mathcal{C},s)={\rm{Tr}}(s_d\mathcal{C}_d(s_{d-1}))
\cdots{\rm{Tr}}(s_1\mathcal{C}_1(s_0))\langle 0^n|s_0|0^n\rangle.
\end{equation}
\end{lemma}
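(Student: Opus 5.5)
The plan is to expand the state in the normalized Pauli basis before the first layer, and then insert a resolution of the identity in the same basis after each unitary layer. Since $\tilde{\mathcal P}_n$ is an orthonormal basis of the operator space under the Hilbert--Schmidt inner product, every operator $A$ satisfies $A=\sum_{s\in\tilde{\mathcal P}_n}{\rm Tr}(sA)\,s$. This holds because the normalized Paulis are Hermitian and ${\rm Tr}(s s')=\delta_{s,s'}$.

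\textbf{Step 1: expand the input state.} Applying the expansion to $|0^n\rangle\langle0^n|$ gives
\[
|0^n\rangle\langle0^n|=\sum_{s_0}{\rm Tr}(s_0|0^n\rangle\langle0^n|)\,s_0=\sum_{s_0}\langle0^n|s_0|0^n\rangle\, s_0 .
\]

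\textbf{Step 2: propagate one layer at a time.} Apply $\mathcal C_1$ and expand $\mathcal C_1(s_0)=\sum_{s_1}{\rm Tr}(s_1\mathcal C_1(s_0))\,s_1$. Next apply $\mathcal E^{\otimes n}$ to each $s_1$. The key fact is that, for the unital (Pauli or depolarizing) noise considered, the channel is diagonal in the Pauli basis. It therefore acts on a tensor product of normalized Paulis by multiplying by $(1-\gamma)$ for every non-identity tensor factor, so $\mathcal E^{\otimes n}(s_1)=(1-\gamma)^{|s_1|}s_1$.

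For a general Pauli channel the three eigenvalues $1-2(\gamma_i+\gamma_j)$ may differ. I would handle this as the paper's definition of $\gamma$ implicitly does: either restrict to the depolarizing case, or read $(1-\gamma)^{|s|}$ as shorthand for the product of the per-factor eigenvalues. With the latter reading the proof is unchanged. This identification is the only point needing care, and I expect it to be the main (mild) obstacle.

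\textbf{Step 3: iterate by induction on $d$.} The inductive hypothesis is that after $k$ layers,
\[
\rho_k=\sum_{s_0,\dots,s_k}(1-\gamma)^{|s_1|+\cdots+|s_k|}\prod_{t=1}^k{\rm Tr}(s_t\mathcal C_t(s_{t-1}))\,\langle0^n|s_0|0^n\rangle\, s_k .
\]
The inductive step applies $\mathcal C_{k+1}$ and uses linearity. It then inserts $\sum_{s_{k+1}}{\rm Tr}(s_{k+1}\,\cdot\,)s_{k+1}$ and applies the diagonal noise action from Step 2.

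\textbf{Step 4: match the exponent.} After $k=d$ steps the exponent is $\sum_{t\ge1}|s_t|$, which differs from the total weight $|s|=\sum_{t\ge0}|s_t|$ by the term $|s_0|$. There are two ways to reconcile this. The first is to note that only $s_0\in\{I,Z\}^{\otimes n}/2^{n/2}$ survive the factor $\langle0^n|s_0|0^n\rangle$. The second is to adopt the convention of Aharonov et al., in which noise also acts on the input; this is equivalent to redefining the state as $\mathcal E^{\otimes n}(|0^n\rangle\langle 0^n|)$. I would state this convention explicitly, or else define $|s|$ to exclude $s_0$, so that the formula holds exactly as written. All other steps are routine bookkeeping with linearity and orthonormality.
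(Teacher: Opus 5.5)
Your argument is the same as the paper's: expand $|0^n\rangle\langle0^n|$ in $\tilde{\mathcal P}_n$, insert a Pauli resolution of the identity after each layer, and use that Pauli noise is diagonal in the Pauli basis. The paper does this in one line and, as you noticed, glosses over two points: it replaces the three Pauli eigenvalues by a single ``uniform'' rate, which is not an exact identity (your per-factor reading is the exact version), and it writes $(1-\gamma)^{|s|}$ even though the noise only acts on $s_1,\dots,s_d$.

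One correction: your first proposed fix for the exponent does not work. The $Z$-containing $s_0$ survive the factor $\langle0^n|s_0|0^n\rangle$ with $|s_0|>0$, and the displayed sum actually equals $\mathcal C(\mathcal E^{\otimes n}(|0^n\rangle\langle0^n|))$. For example, with $n=2$, $d=1$, $C_1=I$ it gives $\tfrac14(I+(1-\gamma)^2Z)^{\otimes2}$ instead of $\tfrac14(I+(1-\gamma)Z)^{\otimes2}$. So only the convention change (noise also on the input, or $|s|$ excluding $s_0$) makes the formula exact.
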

We prove it by describing  types of noisy channels, which are depolarizing noise, single-qubit Pauli noise, and non-unital noise.
\label{sec:plen}
\subsubsection{Depolarizing Noise}
The property of depolarizing noise $\mathcal{E}_{{\rm depo}}$ is that
\begin{equation}
    \begin{aligned}
        \mathcal{E}_{{\rm depo}}(I) &= I,\\
        \mathcal{E}_{{\rm depo}}(X) &= (1-\gamma)X,\\
        \mathcal{E}_{{\rm depo}}(Y) &= (1-\gamma)Y,\\
        \mathcal{E}_{{\rm depo}}(Z) &= (1-\gamma)Z,\\
    \end{aligned}
\end{equation}
so that
\begin{equation}
\begin{aligned}
    \rho &= \sum_{s\in{\tilde{P}_n^{d+1}}}s_d{\rm{Tr}}(s_d\mathcal{E}_{{\rm depo}}^{\otimes n}\mathcal{C}_d(s_{d-1}))
\cdots{\rm{Tr}}(s_1\mathcal{E}_{{\rm depo}}^{\otimes n}\mathcal{C}_1(s_0))\langle 0^n|s_0|0^n\rangle
    \\&=\sum_{s\in{\tilde{P}_n^{d+1}}}(1-\gamma)^{|s|}s_d\Phi(C,s),
\end{aligned}
\end{equation}
where $|s|$ means the number of non-identities in s. 
\subsubsection{Pauli Noise}
\label{sec:pnoise}
We use $\mathcal{E}$ to denote the noise function. Pauli noise $\mathcal{E}_{Pauli}$ is
\begin{equation}
    \mathcal{E}_{{\rm Pauli}}(\rho) =\gamma_1\rho+\gamma_2X\rho X^{\dagger}+\gamma_3Y\rho Y^{\dagger}+\gamma_4Z\rho Z^{\dagger},
\end{equation}
where $\gamma_1+\gamma_2+\gamma_3+\gamma_4 = 1$. The Pauli noise has the property that
\begin{equation}
\begin{aligned}
    \mathcal{E}_{{\rm Pauli}}(I) &= (\gamma_1+\gamma_2+\gamma_3+\gamma_4)I = I,\\
    \mathcal{E}_{{\rm Pauli}}(X) &= (\gamma_1+\gamma_2-\gamma_3-\gamma_4)X = (1-2(\gamma_3+\gamma_4))X,\\
    \mathcal{E}_{{\rm Pauli}}(Y) &= (\gamma_1-\gamma_2+\gamma_3-\gamma_4)Y = (1-2(\gamma_2+\gamma_4))Y,\\
    \mathcal{E}_{{\rm Pauli}}(Z) &= (\gamma_1-\gamma_2-\gamma_3+\gamma_4)Z = (1-2(\gamma_2+\gamma_3))Z.\\
\end{aligned}
\end{equation}
So the Pauli Channel can be written as
\begin{equation}
\begin{aligned}
    \rho &= \sum_{s\in{\tilde{\mathcal{P}}_n^{d+1}}}s_d{\rm{Tr}}(s_d\mathcal{E}_{{\rm Pauli}}^{\otimes n}\mathcal{C}_d(s_{d-1}))
\cdots{\rm{Tr}}(s_1\mathcal{E}_{{\rm Pauli}}^{\otimes n}\mathcal{C}_1(s_0)){\rm Tr}(s_0|0^n\rangle\langle 0^n|)
    \\&=\sum_{s\in{\tilde{\mathcal{P}}_n^{d+1}}}(1-2(\gamma_3+\gamma_4))^{|s|_X}(1-2(\gamma_2+\gamma_4))^{|s|_Y}(1-2(\gamma_2+\gamma_3))^{|s|_Z}s_d\Phi(C,s),
\end{aligned}
\end{equation}
where $|s|_{\tilde{P}}$ denotes the number of $\tilde{P}$ in $s$. Without loss of generality, to simplify the presentation while maintaining a conservative upper bound on path decay, we consider a uniform noise rate $\gamma = \min \{ \gamma_2+\gamma_3, \gamma_2+\gamma_4, \gamma_3+\gamma_4 \}$, where $0\leq\gamma<1$.  This substitution ensures that the decay factor for each Pauli component is upper-bounded by a uniform term $(1-\gamma)^{|s|}$, where $|s|$ is the total weight of the Pauli string. For brevity, we denote $\rho = \sum_{s \in \tilde{\mathcal{P}}_n^{d+1}} (1-\gamma)^{|s|} s_d \Phi(\mathcal{C}, s)$ in the subsequent discussion, which does not alter the analysis of the learning algorithm.

Techniques like Pauli twirling are employed to transform complex unital channels into diagonal forms on the Pauli basis~\citep{chen2023learnability,wallman2016noise}. In the following, we utilize the Pauli noise channel to represent the unital channel.
Thus, complete the proof of Lemma~\ref{le:noise appendix}.

\subsection{State Learning Truncation}
\label{sec:pnp}
Ref.~\cite{aharonovPolynomialTimeClassicalAlgorithm2023} proved that sampling from a depolarizing channel reduces to fitting a constant number $l$ of Pauli paths.
We generalize this observation to single-qubit Pauli noise that admits a sparse Pauli-path expansion.

According to Lemma~\ref{le:noise appendix}, for an arbitrary i.i.d single-qubit noise, the output state is approximated by
\begin{equation}
  \hat{\rho}=\sum_{|s_d|\leq l^{\prime}, s_d\in \tilde{\mathcal{P}}_n}\alpha_{s_{d}}s_d=\sum_{s\in{\tilde{\mathcal{P}}_n^{d+1}},|s|\leq l}(1-\gamma)^{|s|}s_d\Phi(\mathcal{C},s).
\end{equation}
In other words, it is sufficient to learn finitely many low-weight legal Pauli paths in order to obtain an approximation $\hat{\rho}$ satisfying $\|\rho-\hat{\rho}\|_2<\epsilon_1$, where $\|A\|_2$ denotes the standard Frobenius norm. The formal statement and proof are given in Lemma~\ref{le:pna}.
\begin{lemma}[The unital noisy state truncation]
\label{le:pna}
Let the noisy quantum state $\rho=\mathcal{C}(|0^n\rangle\langle0^n|)$ with $\mathcal{C}=\mathcal{E}^{\otimes n}\mathcal{C}_d\mathcal{E}^{\otimes n}\mathcal{C}_{d-1}\cdots\mathcal{E}^{\otimes n}\mathcal{C}_1$ representing a $d$-depth noisy quantum circuit, where $\mathcal{C}_i$ is a layer of two-qubit Haar random quantum gates. Suppose that the single-qubit noise channel is unital with constant noise strength.  Then there exists a classical representation $\hat{\rho}=\sum_{|P|\leq l^{\prime}, P\in \mathcal{P}_n}\alpha_P P$ such that
\begin{equation}
    ||\rho-\hat{\rho}||_2<\epsilon_1,
\end{equation}
where coefficients $\alpha_{s_{d}}\in\mathbb{R}$ and $l^{\prime}=\mathcal{O}\left( \log(\epsilon_1^{-1}\delta_1^{-1})\right)$ with success probability at least $1-\delta_1$. 
\end{lemma}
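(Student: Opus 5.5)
We start from the Pauli-path expansion of Lemma~\ref{le:noise appendix}, $\rho=\sum_{s}(1-\gamma)^{|s|}\Phi(\mathcal C,s)s_d$, and take $\hat\rho$ to be the terminal truncation: all terms with $|s_d|\le l'$. Equivalently, $\alpha_P=2^{-n}{\rm Tr}[P\rho]$ for $|P|\le l'$ and $0$ otherwise. Then $\rho-\hat\rho=\sum_{|s_d|>l'}(\cdots)s_d$. Since $s_d$ ranges over an orthonormal basis (normalized Paulis), the squared error is
\[
\|\rho-\hat\rho\|_2^2=\sum_{|s_d|>l'}\Bigl(\sum_{s_0,\dots,s_{d-1}}(1-\gamma)^{|s|}\Phi(\mathcal C,s)\Bigr)^2 .
\]
The plan is to bound the expectation of this quantity over the random circuit ensemble by something like $c^{l'}$ with $c<1$. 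Markov's inequality then gives $\|\rho-\hat\rho\|_2^2\le \mathbb E[\cdot]/\delta_1$ with probability at least $1-\delta_1$. Setting $c^{l'}/\delta_1\le\epsilon_1^2$ gives $l'=\mathcal O(\log(\epsilon_1^{-1}\delta_1^{-1}))$ for constant $\gamma$.

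The key step is the average over Haar two-qubit gates. Using the 2-design orthogonality of Pauli paths (the two-qubit analogue of Eq.~\eqref{Eq:cliffordproperty}), the cross terms $\mathbb E[\Phi(\mathcal C,s)\Phi(\mathcal C,s')]$ vanish unless $s=s'$. This yields
\[
\mathbb E\|\rho-\hat\rho\|_2^2=\sum_{s:\,|s_d|>l'}(1-\gamma)^{2|s|}\,\mathbb E[\Phi(\mathcal C,s)^2].
\]
Two facts then bound this sum. First, $|s|\ge|s_d|>l'$, so each term carries a factor at most $(1-\gamma)^{2l'}$ times $(1-\gamma)^{2(|s|-l')}$. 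Second, the sum $\sum_s\mathbb E[\Phi(\mathcal C,s)^2]$ over all paths with a fixed $s_0$ profile equals the noiseless purity-type quantity. Since unitary layers preserve the Frobenius norm, $\sum_s\mathbb E\Phi^2=\|\,|0^n\rangle\langle0^n|\,\|_2^2=1$ (Parseval applied layer by layer). Combining the two gives
\[
\mathbb E\|\rho-\hat\rho\|_2^2\le(1-\gamma)^{2(l'+1)}\sum_s\mathbb E[\Phi^2]\le(1-\gamma)^{2(l'+1)},
\]
with $c=(1-\gamma)^2$.

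The main obstacle is justifying the path-orthogonality step rigorously for arbitrary architectures. The claim is that for Haar two-qubit gates, $\mathbb E[{\rm Tr}(s_t\mathcal C_t(s_{t-1})){\rm Tr}(s_t'\mathcal C_t(s'_{t-1}))]$ vanishes unless $s_{t-1}=s'_{t-1}$ and $s_t=s'_t$. Here I would invoke the second-moment twirl formula gate by gate, peeling layers from the last to the first. Each surviving term factorizes as a product of $1/15$ weights per nontrivial gate; this is the source of the $15^{-|s|}$ lower bound quoted in the main text. The identity-path term ($s$ all identity) has $|s_d|=0$ and is automatically kept by the truncation.

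Two further points need care. In the Pauli-noise case, replacing the per-component factors by the uniform $(1-\gamma)^{|s|}$ is only an upper bound, which is harmless here since everything is squared and nonnegative after averaging. Also, the normalization of $\tilde{\mathcal P}_n$ versus $\mathcal P_n$ only affects constants in the relation between $\alpha_P$ and the path sums.
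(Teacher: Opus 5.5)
Your proof is correct, but it is organized differently from the paper's.

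The paper truncates by \emph{total} path weight, keeping $\sum_{|s|\le l}(1-\gamma)^{|s|}\Phi(\mathcal C,s)s_d$. It then bounds $\mathbb E\Delta^2=\sum_{k>l}(1-\gamma)^{2k}W_k\le(1-\gamma)^{2l}$, using path orthogonality twice: once to kill cross terms, and once to identify $\sum_k W_k$ with $\mathbb E\,{\rm Tr}(\rho_0^2)\le 1$. Only afterwards does it pass to a terminal cutoff via $|s_d|\le|s|$.

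You instead truncate by \emph{terminal} weight with the exact coefficients $\alpha_P=2^{-n}{\rm Tr}(P\rho)$. You use only $|s|\ge|s_d|>l'$, and you obtain $\sum_s\Phi(\mathcal C,s)^2=1$ deterministically by Parseval layer by layer.

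Your route buys something real. The operator you control is exactly the one Algorithm~\ref{alg:qsl} estimates, and it is the Frobenius-optimal approximant supported on $\mathcal T_{l'}$. The paper's path-truncated operator is also supported on low weight, but its coefficients on low-weight Paulis differ from those of $\rho$, because paths of large total weight can end at a low-weight $s_d$. So it is not the object the algorithm reconstructs. The paper's formulation, in turn, uses the total-path-weight language that feeds its legal-path counting ($15^{-|s|}$ lower bound, $N_P\le 2^{\mathcal O(l')}$), which the lemma as stated does not need.

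Your bound uses only the damping of the final noise layer. So the Haar average and the orthogonality step you single out as the main obstacle can be dropped entirely. Write $\rho=\mathcal E^{\otimes n}(\sigma)$ with $\sigma$ a density matrix. Each Pauli coefficient of weight $w>l'$ is multiplied by a factor of modulus at most $(1-\gamma)^{w}\le(1-\gamma)^{l'+1}$. Hence $\|\rho-\hat\rho\|_2\le(1-\gamma)^{l'+1}({\rm Tr}\,\sigma^2)^{1/2}\le(1-\gamma)^{l'+1}$ for every circuit, and $l'=\mathcal O(\log(1/\epsilon_1)/\gamma)$ suffices with probability one.

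Both this and your averaged argument need each Pauli eigenvalue of $\mathcal E$ to have \emph{modulus} at most $1-\gamma$. The paper's choice $\gamma=\min\{\gamma_2+\gamma_3,\gamma_2+\gamma_4,\gamma_3+\gamma_4\}$ does not guarantee this: for example, $\gamma_3=\gamma_4=1/2$ gives $\gamma=1/2$ but an $X$ eigenvalue of $-1$. So ``constant noise strength'' should be read as that modulus bound, and your remark that the uniform replacement is harmless ``since everything is squared'' relies on it.
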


\begin{proof}
We first consider the Frobenius norm. Let
\begin{equation} 
\begin{aligned}
\label{equ:delta}
  \Delta &:=\|\rho-\hat{\rho}\|_2
  \\&=\sqrt{{\rm Tr}\left(\left(\sum_{|s|>l}(1-\gamma)^{|s|}s_d\Phi(\mathcal{C},s)\right)\left(\sum_{|s|>l}(1-\gamma)^{|s|}s_d\Phi(\mathcal{C},s)\right)^\dagger\right)}
  \\&= \sqrt{{\rm Tr}\left(\sum_{|s|>l}\sum_{|s'|>l}(1-\gamma)^{|s'|+|s|}s_ds_d^{\prime\dagger}\Phi(\mathcal{C},s)\Phi(\mathcal{C},s')\right)}
  \\& = \sqrt{\sum_{|s|>l}\sum_{|s'|>l}(1-\gamma)^{|s'|+|s|}\Phi(\mathcal{C},s)\Phi(\mathcal{C},s'){\rm Tr}\left(s_d s_d^{\prime\dagger}\right)}.
\end{aligned}
\end{equation}
This expression reduces the truncation error to the second moments of the Pauli-path amplitudes.

For distinct Pauli paths, the local random circuit ensemble gives the second-moment orthogonality
\begin{equation}
\mathbb{E}_{\mathcal{C}}[\Phi(\mathcal{C},s)\Phi(\mathcal{C},s')]=0.
\label{Eq:random orthogonality}
\end{equation}
Therefore,
 \begin{equation}
\begin{aligned}
\mathbb{E}_{\mathcal{C}}(\Delta^2)&=\mathbb{E}_{\mathcal{C}}\left(\sum_{|s|>l}\sum_{|s'|>l}(1-\gamma)^{|s'|+|s|}\Phi(\mathcal{C},s)\Phi(\mathcal{C},s'){\rm Tr}\left(s_ds_d^{\prime\dagger}\right)\right)
\\&=\mathbb{E}_{\mathcal{C}}\left(\sum_{|s|>l}(1-\gamma)^{2|s|}\Phi(\mathcal{C},s)^2{\rm Tr}\left(s_ds_d^{\dagger}\right)\right)
\\&=\mathbb{E}_{\mathcal{C}}\left(\sum_{|s|>l}(1-\gamma)^{2|s|}\Phi(\mathcal{C},s)^2\right)
\\& = \sum_{k>l}(1-\gamma)^{2k}W_k.
\end{aligned}
\end{equation}
The second line follows from the orthogonality above, and the third line uses ${\rm Tr}(s_ds_d^{\dagger})=1$ for normalized Pauli operators. In the last line we define $W_k = \mathbb{E}_{\mathcal{C}}\sum_{|s|=k}\Phi(\mathcal{C},s)^2$.

We now show that the quantities $W_k$ form a normalized nonnegative weight distribution. Consider the corresponding noiseless Pauli-path expansion
\begin{equation}
    \rho_0:=\mathcal{C}_d\mathcal{C}_{d-1}\cdots\mathcal{C}_1(|0^n\rangle\langle0^n|)
    =\sum_{s\in\tilde{\mathcal{P}}_n^{\otimes(d+1)}}\Phi(\mathcal{C},s)s_d .
\end{equation}
For an arbitrary circuit, $\rho_0$ is a density state and therefore $\|\rho_0\|_2^2={\rm Tr}(\rho_0^2)\leq 1$. Taking the ensemble average and expanding the right-hand side gives
\begin{equation}
\begin{aligned}
\mathcal{O}(1)
&=\mathbb{E}_{\mathcal{C}}{\rm Tr}(\rho_0^2) \\
&=\sum_{s,s'}\mathbb{E}_{\mathcal{C}}\left[\Phi(\mathcal{C},s)\Phi(\mathcal{C},s')\right]
{\rm Tr}\left(s_d s_d^{\prime\dagger}\right).
\end{aligned}
\end{equation}
According to Eq~\ref{Eq:random orthogonality}, we obtain
\begin{equation}
    \mathcal{O}(1)=\mathbb{E}_{\mathcal{C}}\sum_s\Phi(\mathcal{C},s)^2
    =\sum_{k\geq0}\mathbb{E}_{\mathcal{C}}\sum_{|s|=k}\Phi(\mathcal{C},s)^2
    =\sum_{k\geq0}W_k .
\end{equation}

Thus, we have
\begin{equation}
\begin{aligned}
\mathbb{E}_{\mathcal{C}}(\Delta^2)
&=\sum_{k>l}(1-\gamma)^{2k}W_k \\
&\leq (1-\gamma)^{2l}\sum_{k>l}W_k \\
&\leq (1-\gamma)^{2l}\sum_{k\geq0}W_k \\
&\leq (1-\gamma)^{2l}.
\end{aligned}
\end{equation}
By Markov's inequality,
\begin{equation}
    \mathbb{P}\left(\|\rho-\hat{\rho}\|_2\geq\epsilon_1\right)
    \leq\frac{\mathbb{E}_{\mathcal{C}}(\Delta^2)}{\epsilon_1^2}.
\end{equation}
Therefore, it is sufficient to choose $l=\mathcal{O}\left(\frac{\log(\epsilon_1^{-1}\delta_1^{-1})}{\gamma}\right)$ such that
$\Delta\leq\epsilon_1$.
Here, we assume that the noise strength is a constant independent of the system size, so that $l$ scales as $\mathcal{O}(\log(\epsilon_1^{-1}\delta_1^{-1}))$. Since $|s_d|<|s|$, $l'$ also scales as $\mathcal{O}(\log(\epsilon_1^{-1}\delta_1^{-1}))$.
With this choice, $\|\rho-\hat{\rho}\|_2\leq\epsilon_1$ holds with probability at least $1-\delta_1$.

\end{proof}

\subsubsection{Pauli Path Enumeration}
In this section, we demonstrate the theoretical lower bound of the number of Pauli paths with weight less than $l$.

We prove it by leveraging the legal path and the lower bound of the term at the form of the ${\rm Tr}(s_i\mathcal{C}_i(s_{i-1}))^2$.

\begin{definition}[legal Pauli path]
    A legal Pauli path is a Pauli path $s$ such that for each $i\in[d]$, ${\rm Tr}(s_i\mathcal{C}_i(s_{i-1}))^2\neq 0$.
\label{def:legal path}
\end{definition}
For unital noises, using the equation

\begin{equation}
\underset{U \sim \mathbb{SU}(4)}{\mathbb{E}}{\rm{Tr}}(xU (y))^{2}=\underset{U \sim \mathbb{SU}(4)}{\mathbb{E}}{\rm{Tr}}(xU yU^\dagger)^{2}
=\left\{
\begin{array}{ll}1, & x=y=I^{\otimes 2} / 2, \\ 0, & x=I^{\otimes 2} / 2, y \neq I^{\otimes 2} / 2, \\ 0, & x \neq I^{\otimes 2} / 2, y=I^{\otimes 2} / 2, \\ \frac{1}{15}, & \text { else. }\end{array}\right.
\end{equation}

We observe that certain Pauli paths contribute $ 0$ to the circuit; these are termed illegal Pauli paths. 

For $k=0$, $W_k=1$, where the Pauli path $s$ consists of identity operators. 

For $k\in(0,d]$, $W_k = 0$.

For $k\geq d+1$, we can bound $W_k$ by focusing on every term, which is in the form of $\mathbb{E}_{C_{i}}{\rm Tr}(s_i\mathcal{C}_i(s_{i-1}))^{2}$.Therefore, we have the following lemma.

\begin{lemma}[lower bound of legal Pauli paths]
    Considering a $d$-depth random quantum circuit $\mathcal{C}$, define $ L(\mathcal{C},s) = \prod_{i=1}^d {\rm Tr}(s_i\mathcal{C}_i(s_{i-1}))^2$ for a legal Pauli path $s$. Then, there exists a lower bound that $\mathbb{E}_{\mathcal{C}}[L(\mathcal{C},s)]\geq (1/15)^{|s|}$.
    \label{le:lower bound term}
\end{lemma}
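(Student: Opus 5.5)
The plan is to exploit the fact that the gates in different layers, and different gates within a layer, are drawn independently. Then $\mathbb{E}_{\mathcal{C}}[L(\mathcal{C},s)]$ factorizes into a product of single-gate second moments, each of which is given by the Haar formula stated just before the lemma.

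\textbf{Step 1: factor over layers.} Since $C_1,\dots,C_d$ are independent,
\begin{equation}
\mathbb{E}_{\mathcal{C}}[L(\mathcal{C},s)]=\prod_{i=1}^{d}\mathbb{E}_{C_i}\,{\rm Tr}(s_i\mathcal{C}_i(s_{i-1}))^2 .
\end{equation}

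\textbf{Step 2: factor within a layer.} Each $\mathcal{C}_i$ is a tensor product of non-overlapping two-qubit gates $U_{i,g}$, together with identity on any idle qubits. The normalized Pauli strings $s_{i-1},s_i$ also factor over these pairs. Hence the trace factors as
\begin{equation}
{\rm Tr}(s_i\mathcal{C}_i(s_{i-1}))=\prod_g {\rm Tr}\big(s_i^{(g)}U_{i,g}s_{i-1}^{(g)}U_{i,g}^\dagger\big)\prod_{q\ \text{idle}}{\rm Tr}\big(s_i^{(q)}s_{i-1}^{(q)}\big).
\end{equation}
The gates are independent, so the expectation of the square becomes a product of per-gate second moments. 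Legality of $s$ fixes the value of every factor:
\begin{itemize}
\item By the displayed Haar formula, a gate whose input and output restrictions are both identity contributes $1$.
\item A gate where both restrictions are non-identity contributes $1/15$.
\item Legality excludes the mixed cases, which would give $0$.
\item Idle qubits contribute $1$, since legality forces $s_i^{(q)}=s_{i-1}^{(q)}$ there.
\end{itemize}
Therefore $\mathbb{E}_{C_i}{\rm Tr}(s_i\mathcal{C}_i(s_{i-1}))^2=15^{-m_i}$, where $m_i$ is the number of gates in layer $i$ acting non-trivially on $s_{i-1}$.

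\textbf{Step 3: count active gates.} Each active gate in layer $i$ covers at least one non-identity qubit of $s_{i-1}$, and the gates are disjoint. Hence $m_i\le |s_{i-1}|$, and
\begin{equation}
\sum_{i=1}^d m_i\le\sum_{i=0}^{d-1}|s_i|\le |s|.
\end{equation}
This yields $\mathbb{E}_{\mathcal{C}}[L(\mathcal{C},s)]=15^{-\sum_i m_i}\ge 15^{-|s|}$.

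\textbf{Main obstacle.} The delicate point is the bookkeeping in Step 2. One must confirm that the squared trace factorizes with the correct normalization, so that each gate factor is exactly $1$ or $1/15$ with no stray powers of $2$ from the normalized basis $\tilde{\mathcal{P}}_n$. One must also confirm that the Haar formula is being applied to the restrictions $s^{(g)}$ normalized as $P/2$ on two qubits, which matches the $I^{\otimes2}/2$ convention in that formula. I would verify this on a single two-qubit gate first, and then pass to general layers by the tensor-product argument.
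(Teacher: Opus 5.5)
Your proof is correct and takes the same route as the paper. Both factor the expectation over independent layers and gates, and both use the two-qubit Haar second-moment identity to get a factor $1$ per inactive block and $1/15$ per active block on a legal path. Both then bound the number of active blocks by a Pauli weight. The only difference is that you bound the active-block count $m_i$ by the input weight $|s_{i-1}|$, whereas the paper uses the output weight $|s_i|$; both are valid, since active blocks are non-identity on both sides. Your explicit treatment of idle qubits and of the $P/2$ normalization also gives the exact value $15^{-\sum_i m_i}$ more cleanly than the paper's sketch does.
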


\begin{proof}
Noting that each $C_i$ is a layer of two-qubit gates,  $C_i$ is equal to the multiplication of $ C_{i}^{(j)}$, where $j$  indexes the two-qubit gates in the layer and $N_g$ is the total number of such gates in a layer. So
\begin{equation}
\begin{aligned}
\mathbb{E}_{C_{i}}{\rm Tr}(s_i\mathcal{C}_i(s_{i-1}))^{2}&=\bigotimes_j^{N_g}\mathbb{E}_{C_{i}^{(j)}}\left({\rm{Tr}}(s_i^{(j)}s_i^{(j+1)}C_{i}^{(j)}s_{i-1}^{(j)}s_{i-1}^{(j
+1)}C_{i}^{(j)\dagger})\right)^2
\\&\geq (\frac{1}{15})^{|s_i|}.
\label{eq:layer bound}
\end{aligned}
\end{equation}
For a legal path, each two-qubit gate block in layer $i$ is either inactive (input and output are both $II$) or active (input and output are both non-$II$).By the local 2-design identity, every active block contributes a factor $1/15$ to the layer-averaged squared overlap, while inactive blocks contribute
$1$. The last line is from the fact that the number of active blocks is at most the Hamming weight $|s_i|$  of the output Pauli string. Therefore, we have
\begin{equation}
    \mathbb{E}_{\mathcal{C}} (L(\mathcal{C},s))\geq (1/15)^{|s|}.
\end{equation}
\end{proof}

\begin{lemma}[number of legal Pauli paths]
    For a $d$-depth random quantum circuit $\mathcal{C}$, the number of legal Pauli paths with weight less than $l$ is upper bounded by $2^{\mathcal{O}(l')}$.
    \label{le:path number}
\end{lemma}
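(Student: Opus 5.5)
Since $\mathbb{E}_{\mathcal C}[L(\mathcal C,s)]\geq (1/15)^{|s|}$ for every legal path (Lemma~\ref{le:lower bound term}) while the total second-moment mass $\sum_k W_k$ is $\mathcal{O}(1)$ (proof of Lemma~\ref{le:pna}), it is tempting to bound the count by $15^{l}$ via $\#\{\text{legal } s: |s|=k\}\cdot 15^{-k}\leq W_k\leq \mathcal{O}(1)$. This fails: $\Phi(\mathcal C,s)$ also carries the prefactor $\langle 0^n|s_0|0^n\rangle^2 = 2^{-n}$ for a diagonal $s_0$, which spoils the inequality. I will instead count legal paths combinatorially, and use the moment lower bound only as a sanity check on the exponent's base.

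\textbf{Encoding step.} Fix $s_d$ with $|s_d|\leq l'$ and read the path backwards. Legality of the layer $\mathcal C_i$ (the local $2$-design identity preceding Lemma~\ref{le:lower bound term}) says each two-qubit block is either inactive ($II\to II$) or active (non-$II\to$ non-$II$). Hence $\operatorname{supp}(s_{i-1})$ is determined, up to the choice of active gate pairs, by $\operatorname{supp}(s_i)$. The active blocks in layer $i$ are exactly the gates meeting $\operatorname{supp}(s_i)$; since the gate positions are fixed by the architecture, this set is determined by $s_i$. Each active block then contributes one of $15$ non-identity two-qubit Paulis to $s_{i-1}$. Consequently a legal path is specified by its terminal support together with a label in $\{1,\dots,15\}$ for every active block in every layer.

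\textbf{Counting step.} The number of active blocks in layer $i$ is at most $|s_i|$, and every active block contributes at least one non-identity site. So the total number of labels is at most $\sum_i |s_i| = |s|\leq l$, giving at most $15^{l}=2^{\mathcal{O}(l)}$ label choices once the terminal support is fixed. Using the relation from the main text, legal paths with $0<|s|$ have $|s|\geq d+1$, and $l'=l-d$. The paths are therefore anchored at the observable/terminal operator, and the terminal Pauli has at most $3^{l'}$ operator choices on a fixed support. The count becomes $M\cdot 15^{l}\cdot 3^{l'}$, i.e.\ $2^{\mathcal{O}(l')}$ per anchor when $d$ is absorbed as in the paper's effective-depth accounting. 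The lower bound of Lemma~\ref{le:lower bound term} then serves as a consistency check: the retained paths carry mass at least $15^{-|s|}$ each, which matches the $15^{l}$ count.

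\textbf{Main obstacle.} The hard step is the choice of support, not the labels. Without an anchor, the placement of a weight-$l'$ terminal string is $\binom{n}{l'}$, which is not $2^{\mathcal{O}(l')}$. I would handle this by counting paths relative to the fixed starting terms (the $M$ Pauli terms of $O$ in the Heisenberg picture, or $|0^n\rangle$'s diagonal structure in the Schrödinger picture). From such an anchor, supports propagate deterministically through the fixed gate pairs, and only the per-block Pauli labels are free. The second delicate point is converting the bound in $l$ into one in $l'$; for this I would rely on $l'=l-d$ with $d=\mathcal{O}(\log n)$, and if that fails, state the bound as $2^{\mathcal{O}(l)}$.
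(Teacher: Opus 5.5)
\textbf{Your objection targets the paper's own proof.} You are right to reject the global second-moment count, and that count is exactly what the paper does. It normalizes the total of $\mathbb{E}_{\mathcal C}L(\mathcal C,s)$ over legal paths to $\mathcal{O}(1)$, with the all-identity path contributing $1$. It then divides by the per-path lower bound $15^{-l}$ of Lemma~\ref{le:lower bound term} and concludes $N_s\le \mathcal{O}(1)\,15^{l}=2^{\mathcal{O}(l'+d)}$. Since $\Phi(\mathcal C,s)^2=2^{-n}L(\mathcal C,s)$ for diagonal $s_0$, the true budget is $\sum_{s:\,s_0\ \text{diagonal}}\mathbb{E}_{\mathcal C}L(\mathcal C,s)=2^{n}$, so that step fails. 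What survives is the per-anchor version. For each fixed $s_0$, unitarity gives $\sum_{s_1,\dots,s_d}L(\mathcal C,s)=\|s_0\|_2^2=1$ exactly, so there are at most $15^{l}$ legal paths per anchor. Your label encoding, which is sound, gives the same bound combinatorially. Both routes therefore reduce to counting anchors, and that is where your proposal has a genuine gap.

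\textbf{The anchor count does not go through.} This lemma has no observable and no $M$. Since $|0^n\rangle\langle0^n|=2^{-n}\sum_{P\in\{I,Z\}^{\otimes n}}P$, every diagonal $s_0$ is an admissible starting point. So the ``diagonal structure of $|0^n\rangle$'' fixes no support, and propagating forward from $s_0$ just trades $\binom{n}{|s_d|}$ for $\binom{n}{|s_0|}$. Anchoring at the $M$ terms of $O$ is legitimate, but that is the process-side count, where the paper does use it. What your encoding actually proves here is $\sum_{r\le l'}\binom{n}{r}3^{r}\cdot 15^{l+1}=n^{\mathcal{O}(l')}2^{\mathcal{O}(l)}$.

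\textbf{No unconditional improvement is possible.} Fix $j\in[n]$ and take the path that starts at $Z_j/\sqrt2\otimes(I/\sqrt2)^{\otimes(n-1)}$ and keeps a single non-identity site in every layer. It is legal, has $\langle0^n|s_0|0^n\rangle\neq0$, and has weight $d+1$. So already for $l'=\mathcal{O}(1)$ there are at least $n$ legal paths. Hence neither $2^{\mathcal{O}(l')}$ nor your fallback $2^{\mathcal{O}(l)}=2^{\mathcal{O}(l'+d)}$ can hold when $d=o(\log n)$.

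\textbf{The missing idea is to anchor at the lightest layer.} This works in the regime the paper later invokes.
\begin{itemize}
\item Legality is symmetric: a block is active iff it is non-$II$ on its input iff it is non-$II$ on its output. So your labels can be read both forward and backward from any layer.
\item The total number of labels is at most $\sum_{i=1}^{d}\min(|s_{i-1}|,|s_i|)\le l$.
\item Since $\sum_{i=0}^{d}|s_i|\le l$, some layer has $|s_{i^*}|\le l/(d+1)$.
\end{itemize}
This yields at most $(d+1)(3n+1)^{l/(d+1)}15^{l+1}$ legal paths. That is $2^{\mathcal{O}(l)}$ once $d+1\ge\log_2(3n+1)$, and $2^{\mathcal{O}(l')}$ if in addition $d=\mathcal{O}(l')$. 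These conditions, $d=\Theta(\log n)$ and $l'=\Omega(\log n)$, must be stated as hypotheses of the lemma rather than absorbed into the $\mathcal{O}(\cdot)$.
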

\begin{proof}
Let $N_s$ denote the number of legal Pauli paths with weight $l$, we have
\begin{equation}
    \begin{aligned}
    \mathcal{O}(1)&=\sum_{|s|=d+1}^l \mathbb{E}_{\mathcal{C}}(L(\mathcal{C},s))\\
        &\geq  \sum_{|s|=d+1}^l (\frac{1}{15})^{|s|}+1\\
        &\geq  \sum_{|s|=d+1}^l (\frac{1}{15})^{l}+1\\
        & =  (\frac{1}{15})^{l}N_{|s|\in[d+1,l]}+1.
    \end{aligned}
\end{equation}

where $N_{|s|\in[d+1,l]}$ denotes the legal Pauli paths except all identity one. The number of Pauli paths needed is 
\begin{equation}
\label{equ:pathn}
    N_{s} =N_{|s|\in[d+1,l]}+1 = \mathcal{O}(1) 15^l=2^{\mathcal{O}(l)}=2^{\mathcal{O}(l'+d)}.
\end{equation}
In our learning framework, we focus on the terminal Pauli operators $s_d$ rather than the internal path trajectories. Since the collective influence of all contributing paths is implicitly encapsulated within the learned coefficients $\{\alpha_P\}$, the complexity is governed by the Hamming weight of the output operators, $l'$. To establish the relationship between $l$ and $l'$, we note that any non-trivial Pauli path must account for at least one non-identity operation per layer to remain active during the $d$-layer evolution, yielding $W_k=0$ for $1 \le k \le d$. This implies that for a total path weight $l$, the remaining degrees of freedom for the terminal operator weight is roughly $l' = l - d$.
\end{proof}
While our algorithm estimates the coefficients by tomographic measurements rather than by explicitly summing over legal paths, each learned coefficient $\alpha_P$ represents the aggregate contribution of all internal Pauli histories that terminate at the same Pauli operator $P$. Hence the multiplicity of internal paths affects the value of the learned parameter, but it does not introduce additional parameters to be learned. For a path cutoff $l$, define the path-induced terminal Pauli set
\begin{equation}
    \mathcal{P}_{\rm term}^{(l)}
    :=
    \left\{P\in\mathcal{P}_n:\exists~\text{legal path }s~\text{with } |s|\leq l~\text{and }s_d=\tilde{P}\right\},\quad \tilde{P}=2^{-n/2}P
\end{equation}
and denote its cardinality by $N_P(l):=|\mathcal{P}_{\rm term}^{(l)}|$. The relevant sparsity parameter is therefore the number of distinct terminal Pauli operators in $\mathcal{P}_{\rm term}^{(l)}$, rather than the number of internal path trajectories.
\begin{lemma}
Consider the noisy random circuit $\mathcal{C}$ in Eq.~\ref{Eq:noisychannel} with constant noise strength $\gamma=\Theta(1)$. In the nontrivial learning regime, there exists a truncated Pauli expansion
\begin{equation}
    \hat{\rho}=\sum_{P\in\mathcal{P}_{\rm term}^{(l)}} \alpha_P P
\end{equation}
such that $\|\hat{\rho}-\rho\|_2\leq \epsilon$. For inverse-polynomial accuracy, the number of relevant path-induced terminal coefficients is bounded by
\begin{equation}
    N_P(l):=|\mathcal{P}_{\rm term}^{(l)}|\leq 2^{\mathcal{O}(l')},
\end{equation}
where $l'=l-d=\mathcal{O}(\log(1/\epsilon))$ in the nontrivial depth regime.
A naive enumeration of all Pauli operators with weight at most $l'$ over the full $n$-qubit system would instead scale as $\sum_{k=0}^{l'}\binom{n}{k}3^k=n^{\mathcal{O}(l')}$.
\end{lemma}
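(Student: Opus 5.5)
The plan has three parts: first get the truncation error from the earlier truncation lemma, then show that truncating the paths by total weight is the same as restricting the terminal operators to the set $\mathcal{P}_{\rm term}^{(l)}$, and finally bound the size of that set using the legal-path counting lemma.

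\emph{Error bound.} Lemma~\ref{le:pna} already bounds the truncation error. Define $\hat\rho$ as the sum over all legal paths $s$ with $|s|\le l$ of $(1-\gamma)^{|s|}\Phi(\mathcal C,s)s_d$. The proof of Lemma~\ref{le:pna} gives $\mathbb E_{\mathcal C}\|\rho-\hat\rho\|_2^2\le(1-\gamma)^{2l}$. Markov's inequality then yields $\|\rho-\hat\rho\|_2\le\epsilon$ with probability at least $1-\delta$, once $l=\mathcal O(\gamma^{-1}\log(1/\epsilon\delta))$; this is $\mathcal O(\log(1/\epsilon\delta))$ because $\gamma=\Theta(1)$.

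\emph{Regrouping by terminal operator.} Illegal paths have $\Phi(\mathcal C,s)=0$, so they can be dropped. Grouping the remaining retained paths by their terminal operator $s_d$ writes $\hat\rho=\sum_{P\in\mathcal P_{\rm term}^{(l)}}\alpha_P P$. Each $\alpha_P$ is the aggregate contribution of all internal histories ending at $P$. This gives the claimed form of $\hat\rho$.

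\emph{Counting terminal coefficients.} The key observation is that the map $s\mapsto s_d$ from retained legal paths onto $\mathcal P_{\rm term}^{(l)}$ is surjective. Therefore $N_P(l)$ is at most the number of legal paths with $|s|\le l$, and Lemma~\ref{le:path number} (via Lemma~\ref{le:lower bound term}) bounds that number by $\mathcal O(1)\,15^{l}=2^{\mathcal O(l)}$.

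To convert this into $2^{\mathcal O(l')}$, I would use the structure of legal paths. A non-trivial legal path has a non-identity component in every layer, since $W_k=0$ for $1\le k\le d$. The weight spent traversing the $d$ layers is therefore at least $d$, which leaves $l'=l-d$ for the terminal weight. The $d$-dependence must then be absorbed. In the nontrivial regime the depth satisfies $d=\mathcal O(\log(1/\epsilon))$, because beyond that the all-identity path alone already gives an $\epsilon$-approximation: $\rho$ is $\epsilon$-close to the maximally mixed state, and the learning problem is trivial. Consequently $l=l'+d=\mathcal O(l')$ and $2^{\mathcal O(l)}=2^{\mathcal O(l')}$. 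For inverse-polynomial $\epsilon$ this count is polynomial in $n$.

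\emph{Naive enumeration.} For comparison, enumerating all weight-$\le l'$ Paulis on $n$ qubits gives $\sum_{k\le l'}\binom nk 3^k\le(3n)^{l'}=n^{\mathcal O(l')}$ by the elementary estimate $\binom nk\le n^k$.

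\emph{Main obstacle.} The counting step is the weakest. Lemma~\ref{le:path number} bounds the number of paths using a lower bound on expected squared amplitudes and $\sum_s\mathbb E L\le \mathcal O(1)$. This holds in expectation over circuits, whereas $\mathcal P_{\rm term}^{(l)}$ depends on the particular circuit's support structure. To make the argument rigorous, I would note that legality depends only on the support pattern of the gates and not on the gate values. For a fixed architecture the legal-path set is therefore deterministic, and the expectation argument counts it exactly. I would also need to check carefully that the relation $d\lesssim l'$ holds in the regime where the bound is claimed.
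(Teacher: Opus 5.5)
Your error bound and the regrouping by terminal operator are fine. So is the observation that a nontrivial legal path has $|s_i|\ge 1$ for $i=0,\dots,d-1$, which gives $\mathcal P^{(l)}_{\rm term}\subseteq\mathcal T_{l'}$.

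The gap is in the counting step, and it is not the expectation-versus-fixed-circuit issue you flag. Legality is fixed by the support pattern: every unitary block sends $II$ only to $II$ and non-$II$ only to non-$II$, and for Haar gates every non-$II\to$non-$II$ overlap is nonzero almost surely. The real problem is that the bound $N_s(l)\le\mathcal O(1)\,15^{l}$ of Lemma~\ref{le:path number} does not hold. Its starting point, $\sum_{|s|\le l}\mathbb E_{\mathcal C}L(\mathcal C,s)=\mathcal O(1)$, conflates $L$ with $\Phi^2$.

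The normalization $\sum_s\mathbb E\,\Phi(\mathcal C,s)^2=\mathbb E\,{\rm Tr}\rho_0^2\le 1$ controls $\Phi^2=L\,\langle0^n|s_0|0^n\rangle^2=2^{-n}L$ for $s_0\in\{I/\sqrt2,Z/\sqrt2\}^{\otimes n}$. It does not control $L$ itself. In fact, unitarity of each layer gives $\sum_{s_1,\dots,s_d}L(\mathcal C,s)=1$ for every fixed $s_0$. So $\sum_{|s|\le l}\mathbb E L$ is bounded only by the number of starting strings $s_0$ with $|s_0|\le l$, and it can be of that order. Combined with $\mathbb E L\ge 15^{-l}$, the argument only yields $N_P(l)\le N_s(l)\le 15^{l}\sum_{w\le l}\binom{n}{w}3^w=n^{\mathcal O(l)}$. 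That is no better than the naive enumeration.

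Absorbing $d$ cannot rescue the claim either. Your argument uses only an upper bound $d=\mathcal O(l')$, so it would apply equally at $d=1$, where the claimed bound is false. There, take any set $S$ of qubits lying in distinct gates of $\mathcal C_1$. The path with $s_0\propto\bigotimes_{j\in S}Z_j$ and $s_1\propto\bigotimes_{j\in S}X_j$ is legal, since each active block has ${\rm Tr}\big(\tfrac{X\otimes I}{2}U\tfrac{Z\otimes I}{2}U^\dagger\big)\neq 0$ almost surely, with second moment $1/15$. This path has $|s|=2|S|$. Hence $N_P(l)\ge\binom{n/2}{\lfloor l/2\rfloor}\ge n^{\Omega(\log n)}$ once $l'=\Theta(\log n)$, whereas $2^{\mathcal O(l')}={\rm poly}(n)$.

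The paper's own proof takes exactly your route: it substitutes $d=\mathcal O(l')$ into Eq.~\ref{equ:pathn} and then uses $N_P\le N_s$. It therefore has the same gap. What this line of argument actually establishes is the naive $n^{\mathcal O(l')}$ count. Since at small depth the terminal set really is that large, any improvement would have to use the depth being large, not small.
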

\begin{proof}
    Let $N_s(l)$ denote the number of relevant legal Pauli paths with $|s|\leq l$, and let $N_P(l)$ denote the number of distinct terminal Pauli operators in $\mathcal{P}_{\rm term}^{(l)}$. The apparent depth dependence in the path count is
    \begin{equation}
        N_s(l)\leq 2^{\mathcal{O}(l'+d)}.
    \end{equation}
    This exponential dependence on $d$ does not undermine the learning advantage. Indeed, for constant noise strength $\gamma=\Theta(1)$, sufficiently deep noisy circuits are already driven close to the trivial fixed point of the noise, namely the maximally mixed state in the unital case. Once the depth is large enough that this trivial approximation is within the target precision, the learning task can be solved by directly outputting the trivial state and no Pauli-support enumeration is needed. Therefore, the only nontrivial regime to analyze is the depth range before this trivialization occurs.

    More generally, the nontrivial depth range before this fixed-point approximation becomes accurate is logarithmic in both the system size and the inverse target accuracy, $d=\mathcal{O}(\log n+\log(1/\epsilon))$. For inverse-polynomial target accuracy, $\epsilon=n^{-\Theta(1)}$, this reduces to $d=\mathcal{O}(\log n)$. Choosing the cutoff $l'$ proportional to $\log(1/\epsilon)$ then gives $l'=\Theta(\log n)$ in this regime, and hence $d=\mathcal{O}(l')$. Substituting this into Eq.~\ref{equ:pathn} gives
    \begin{equation}
        N_s(l)\leq 2^{\mathcal{O}(l'+d)}=2^{\mathcal{O}(l')}.
    \end{equation}
    This is a bound on the number of internal path trajectories, not yet on the number of learned parameters. Our algorithm does not learn each internal Pauli path separately. All legal paths terminating at the same Pauli operator $P=s_d$ are absorbed into a single coefficient $\alpha_P$. Therefore the terminal count satisfies
    \begin{equation}
        N_P(l)\leq N_s(l)\leq 2^{\mathcal{O}(l')}.
    \end{equation}
    This proves the claimed support-size bound for the retained terminal Pauli operators in the nontrivial learning regime.
\end{proof}
\subsection{Applications of the Learned State Representation}
\label{sec:qst-applications}
The output of the state-learning routine is the sparse Pauli representation
\begin{equation}
    \hat\rho=\sum_{P\in\mathcal T_{l'}}\hat\alpha_P P,
    \qquad
    \mathcal T_{l'}=\{P\in\mathcal P_n: |P|\leq l'\}.
\end{equation}
This representation should be understood as a classical proxy for the noisy state in the Schatten-$2$ (Frobenius) norm, rather than as a full trace-distance tomographic reconstruction. It is unnecessary to store the full $2^n\times 2^n$ density matrix; downstream quantities that are stable under such an $L_2$ perturbation can be evaluated directly from the retained Pauli coefficients. In particular, for any test operator $A$ with bounded Hilbert--Schmidt norm,
\begin{equation}
    \left|
    {\rm Tr}\!\left[A(\rho-\hat\rho)\right]
    \right|
    \leq
    \|A\|_2\|\rho-\hat\rho\|_2 .
    \label{eq:qst-application-hs}
\end{equation}
Thus an $\epsilon$-accurate Frobenius approximation yields an additive prediction guarantee of at most $\epsilon\|A\|_2$ for all such probes.

The direct applicability of Eq.~\eqref{eq:qst-application-hs} is contingent on the Hilbert--Schmidt norm of $A$. For low-weight Pauli operators or $k$-local observables acting on a constant number of qubits, $\|A\|_2$ grows at most as $2^{k/2}$, independently of the total system size $n$; in this regime the Frobenius guarantee furnishes a system-size-independent additive error bound. Specifically, if $A$ is a Pauli string supported on $k$ qubits, then $\|A\|_2=2^{k/2}$, and Eq.~\eqref{eq:qst-application-hs} yields an error of $O(\epsilon\cdot 2^{k/2})$. When $k=O(1)$ this bound is constant-order, rendering the low-weight Pauli representation an efficient classical data structure for predicting local physical quantities~
\cite{huangPredictingManyProperties2020}.

An immediate application of this framework is pure-state overlap estimation. If $A=\sigma=|\psi\rangle\langle\psi|$ is a known reference pure state, then $\|\sigma\|_2=1$, and ${\rm Tr}(\sigma\hat\rho)$ estimates the overlap with the noisy output to additive error at most $\epsilon$. This provides a simple route to pure-state verification whenever the reference state admits an efficient classical Pauli description or can otherwise be queried~
\cite{flammiaDirectFidelityEstimation2011}. For mixed reference states the same formula yields an estimate of the Hilbert--Schmidt inner product; it should be noted that this quantity is not in general identical to the fidelity $F(\rho,\sigma)=\|\sqrt{\rho}\sqrt{\sigma}\|_1^2$.

Another natural class of probes is furnished by $k$-local observables. For a $k$-local operator with Pauli expansion $A=\sum_P a_P P$, the quantity ${\rm Tr}(A\hat\rho)$ can be evaluated as an inner product of the Pauli coefficient vectors under the appropriate normalization convention. Since the Hilbert--Schmidt norm of a $k$-local Pauli operator is at most $2^{k/2}=O(1)$, the Frobenius guarantee directly yields an $O(\epsilon)$ additive error. Such observables include local spin correlations, short-range entanglement probes, and other physically common test operators~
\cite{huangPredictingManyProperties2020}. For non-local observables with large Hilbert--Schmidt norm, Eq.~\eqref{eq:qst-application-hs} no longer furnishes a useful additive guarantee unless they are rescaled or treated term by term.

In addition, the same representation can serve as a preprocessing object for more specialized verification tasks. For example, purity is a quadratic functional of the Pauli coefficient vector, since ${\rm Tr}(\rho^2)=\|\rho\|_2^2$. The perturbation obeys
\begin{equation}
    \left|{\rm Tr}(\rho^2)-{\rm Tr}(\hat\rho^2)\right|
    \leq
    \|\rho-\hat\rho\|_2\left(\|\rho\|_2+\|\hat\rho\|_2\right),
\end{equation}
so a controlled Frobenius approximation can be fed into dedicated purity or overlap-estimation routines~
\cite{brydgesProbingRényiEntanglement2019}. These verification tasks carry their own assumptions and estimators, but the low-weight representation supplied by the noisy-state learning theorem provides the reusable classical data structure on which they operate.

\subsection{Algorithm of Learning a Quantum State}
\label{sec:sla}
For the first problem, there are several ways to get the $\hat{\rho}$. The sections following introduce 2 methods, including computing directly by classical shadow \citep{huangPredictingManyProperties2020}, and a way of learning alpha based on Ref.\cite{huangLearningShallowQuantum2024}
\subsubsection{Compute Directly}

As shown before, $\rho = \sum_{s_d\in P_n}\alpha_{s_d}s_d$, where $s_d\in\tilde{\mathcal{P}}_n$. In that case, 
\begin{equation}
    \begin{aligned}
        \alpha_{s_d}& = {\rm{Tr}}(\rho s_d)\\
        &={\rm{Tr}}(\sum_{s_d'\in\tilde{\mathcal{P}}_n}\alpha_{s_d'}s_d' s_d)\\
        & = \sum_{s_d'\in \tilde{\mathcal{P}}_n}\alpha_{s_d'}{\rm{Tr}}(s_d' s_d)\\
        & = \alpha_{s_d}.
    \end{aligned}
\end{equation}
The fourth line uses
\begin{equation}
{\rm{Tr}} (s_d s_d' ) =
\begin{cases}
0, & \text{if } s_d\neq s_d', \\
1, & \text{if } s_d = s_d'.
\end{cases}
\end{equation}
Thus, $\alpha_{s_d}$ is obtained by evaluating ${\rm{Tr}}(\rho s_d)$, where $\rho$ is estimated via classical shadows. Using a set of POVMs (Positive Operator-Valued Measures) such as the random Pauli basis that measures each qubit and yields outcomes $\ket{b}\in \{0,1\}^n$, the classical shadow is constructed as $\tilde{\rho}=\otimes^n_{j=1}\left(3P_j^{\dagger}\ket{b_j}\bra{b_j}P_j-I\right)$, immediately gives $\alpha_{s_d}={\rm{Tr}}(\tilde{\rho} s_d)$.

\subsubsection{Quantum State Tomography}











This section is mainly about a way of learning $\alpha$ based on Ref.\cite{huangLearningShallowQuantum2024}, which introduces a classical dataset to reconstruct the channel's output. Our results are given below.
\begin{theorem}[Noisy Quantum State Learning,formal]
    For any noisy quantum state $\rho$ prepared by an unital noisy quantum circuit $\mathcal{C}$ (Eq.~\ref{Equ:noisychannel}), there exists a learning algorithm that can efficiently satisfy $\|\rho-\sum_{|P|\leq l'}\alpha_P P\|_2\leq \epsilon$ with success probability $\geq 1-\delta$. The learning algorithm requires sample complexity $N_{\rm {data}}=6^{\mathcal{O}\left(\log(1/\epsilon\delta)\right)}\log(1/\delta)\epsilon^{-2}$ and classical post-processing complexity $ n\cdot24^{\mathcal{O}\left(\log(1/\epsilon\delta)\right)}\log(1/\delta)\epsilon^{-2}$.
\label{the:sla}
\end{theorem}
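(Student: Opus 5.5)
The argument splits the error into a truncation part and an estimation part,
\begin{equation}
\Big\|\rho-\sum_{|P|\leq l'}\hat\alpha_P P\Big\|_2\leq \Big\|\rho-\sum_{|P|\leq l'}\alpha_P P\Big\|_2+\Big\|\sum_{|P|\leq l'}(\alpha_P-\hat\alpha_P)P\Big\|_2 ,
\end{equation}
and makes each at most $\epsilon/2$ with failure probability at most $\delta/2$. For the first term we invoke Lemma~\ref{le:pna} with $\epsilon_1=\epsilon/2$ and $\delta_1=\delta/2$. This gives $l'=\mathcal{O}(\log(1/\epsilon\delta))$ over the random-circuit ensemble. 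The deep-circuit case needs no separate treatment: outputting the (near) maximally mixed state is the $l'=0$ estimator, so only $l'$ matters.

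\textbf{Estimation part.} The candidate set is not $\mathcal T_{l'}$ over all $n$ qubits, since that has size $n^{\mathcal O(l')}$ and would contradict the stated $6^{\mathcal O(\cdot)}$ sample bound. Instead we use the path-induced terminal set $\mathcal P^{(l)}_{\rm term}$, whose size is $N_P\leq 2^{\mathcal O(l')}$ by the terminal-count lemma above. For each $P$ in it we use the unbiased estimator
\begin{equation}
\hat\alpha_P=\frac{3^{|P|}}{N_{\rm data}}\sum_j v_j\langle\psi_j|P|\psi_j\rangle .
\end{equation}
Unbiasedness follows from the single-qubit Clifford twirl identity Eq.~\eqref{Eq:cliffordproperty}: averaging over ${\rm Stab}^{\otimes n}$ factorizes over qubits, and each non-identity qubit contributes a factor $1/3$. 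Each summand is bounded in magnitude by $3^{|P|}\leq 3^{l'}$, since $|v_j|\leq 1$ and $|\langle\psi_j|P|\psi_j\rangle|\leq 1$.

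Applying Hoeffding's inequality to each coefficient with a union bound over the $N_P$ terms, we need
\begin{equation}
\max_P|\hat\alpha_P-\alpha_P|\leq \frac{\epsilon}{2\sqrt{N_P}} .
\end{equation}
By orthonormality of the normalized Paulis, this makes the second term at most $\epsilon/2$. It requires
\begin{equation}
N_{\rm data}=\mathcal O\!\left(9^{l'}N_P\,\epsilon^{-2}\log(N_P/\delta)\right).
\end{equation}
With $N_P=2^{\mathcal O(l')}$ and $l'=\mathcal O(\log(1/\epsilon\delta))$, the prefactor $9^{l'}2^{\mathcal O(l')}$ can be written as $6^{\mathcal O(l')}$, since the constants are absorbed into the exponent. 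The $\log N_P$ factor is $\mathcal O(l')$ and is likewise absorbed, leaving $6^{\mathcal O(\log(1/\epsilon\delta))}\log(1/\delta)\epsilon^{-2}$.

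\textbf{Post-processing.} Each pair $(j,P)$ requires evaluating $\langle\psi_j|P|\psi_j\rangle$, which is a product over at most $l'$ qubit factors. Locating the support of $P$ within the product state costs $\mathcal O(n)$, so the total cost is $n\cdot N_{\rm data}\cdot N_P$. This gives $n\cdot 24^{\mathcal O(\cdot)}\log(1/\delta)\epsilon^{-2}$, because $6\cdot 4=24$ collects the extra $2^{\mathcal O(l')}$ factor from $N_P$. Finally, each $v_j$ is itself only estimated by a SWAP test. This adds bounded, zero-mean noise of magnitude $\mathcal{O}(1)$ per sample, so it can be folded into the same Hoeffding bound with the same range $3^{l'}$.

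\textbf{Main obstacle.} The delicate step is that the learner does not know the circuit, so it cannot know $\mathcal P^{(l)}_{\rm term}$ in advance. The union bound must therefore be justified over a set we can actually enumerate. I would first try to argue that the sample complexity depends only on the number of terms kept in the output, and to pay the enumeration cost only in classical post-processing. If that fails, the fallback is to enumerate $\mathcal T_{l'}$ directly and accept the $n^{\mathcal O(l')}$ bound of Theorem~\ref{the:sl}, which is still ${\rm poly}(n,1/\epsilon,1/\delta)$.
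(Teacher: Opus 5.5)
Your plan follows the paper's route: the triangle inequality, Lemma~\ref{le:pna} for truncation, the Clifford-twirl estimator, Hoeffding with a union bound, $N_P\le 2^{\mathcal O(l')}$, and cost $n\,N_{\rm data}N_P$. The step you flag as the main obstacle is a genuine gap, and the paper does not close it either. Its complexity claim quotes Lemma~\ref{le:path number} for $N_P$, while Algorithm~\ref{alg:qsl} enumerates all of $\mathcal T_{l'}$.

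Arguing that only the kept terms matter cannot repair this, because the count $N_P\le 2^{\mathcal O(l')}$ is false. Take $d=1$ with depolarizing noise. Then $\rho=\bigotimes_b\mathcal E^{\otimes 2}(U_b|00\rangle\langle 00|U_b^\dagger)$ over disjoint pairs $b$. For Haar-random $U_b$, all fifteen non-identity Pauli expectations of each block are nonzero almost surely. So every string that is a weight-two Pauli on $k=\lfloor l'/4\rfloor$ blocks and identity elsewhere has a nonzero coefficient. Each such string also lies in $\mathcal P^{(l)}_{\rm term}$, using the legal path with $s_0=Z\otimes Z$ on the same blocks, which has $|s|=4k\le l$. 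This already gives $\binom{n/2}{k}9^k=n^{\Omega(l')}$ terms. The derivation of Lemma~\ref{le:path number} goes wrong in its normalization: it bounds $\sum_s\mathbb E_{\mathcal C}L(\mathcal C,s)$ by the $\mathcal O(1)$ quantity $\mathbb E_{\mathcal C}{\rm Tr}\rho_0^2$. But that quantity carries a factor $\langle 0^n|s_0|0^n\rangle^2=2^{-n}$ per path, which the lower bound on $L$ drops.

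Consequently, your accuracy requirement $\max_P|\hat\alpha_P-\alpha_P|\le\epsilon/(2\sqrt{N_P})$ costs $9^{l'}N_P\epsilon^{-2}=n^{\Omega(l')}$ samples. Your fallback bound $n^{\mathcal O(\log(1/\epsilon\delta))}$ is quasi-polynomial for $\epsilon=1/{\rm poly}(n)$. It is therefore neither ${\rm poly}(n,1/\epsilon,1/\delta)$, as you claim, nor the $n$-independent bound in the statement.

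There is also a normalization slip. Your $\hat\alpha_P$ estimates $2^{-n}{\rm Tr}(P\rho)$, the coefficient of the unnormalized $P$. Since $\|\sum_P c_P P\|_2=2^{n/2}\|c\|_{\ell_2}$, your orthonormality step loses a factor $2^{n/2}$ in the Schatten-$2$ norm. It is exact only for the normalized Frobenius norm of Definition~\ref{def:Fro norm}.

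What actually yields the stated $n$-independent bound is the norm, not sparsity. Write $\rho=\mathcal E^{\otimes n}(\sigma)$, with $\sigma$ the state entering the last noise layer. Then ${\rm Tr}(P\rho)=\lambda_P{\rm Tr}(P\sigma)$ with $|\lambda_P|\le(1-\gamma)^{|P|}$. Let $A\subseteq[n]$ contain each qubit independently with probability $x=(1-\gamma)^2$. Two facts combine:
\begin{itemize}
\item $x^{|P|}=\Pr[{\rm supp}\,P\subseteq A]$;
\item $\sum_{{\rm supp}\,P\subseteq A}{\rm Tr}(P\sigma)^2=2^{|A|}{\rm Tr}\,\sigma_A^2\le 2^{|A|}$.
\end{itemize}
Together they give ${\rm Tr}\rho^2\le 2^{-n}\mathbb E_A 2^{|A|}=((1+x)/2)^n$.

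This splits into two regimes:
\begin{itemize}
\item For $n\ge C_\gamma\log(1/\epsilon)$, the all-zero output already satisfies $\|\rho-\hat\rho\|_2\le\epsilon$ with no data.
\item For $n<C_\gamma\log(1/\epsilon)$, every dimension factor is $(1/\epsilon)^{\mathcal O(1)}$: the $4^n$ Pauli strings, the range $3^n$, and the normalization $2^{n/2}$. Your Hoeffding-plus-union-bound argument over all Pauli strings then gives $2^{\mathcal O(\log(1/\epsilon\delta))}\epsilon^{-2}\log(1/\delta)$ samples and $n$ times that in post-processing, with no knowledge of the circuit.
\end{itemize}
This argument uses neither the random-circuit assumption nor Lemma~\ref{le:pna}. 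That shows how little the Schatten-$2$ guarantee constrains the state at constant noise.
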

Details of our method are as follows.

Let ${\rm{Stab}} $ be a list of single-qubit stabilizers:
\begin{equation}
    {\rm{Stab}} = \left\{\ket{0}, \ket{1}, \ket{+},\ket{-},\ket{y+},\ket{y-} \right\}.
\end{equation}
Let $\{\ket{\psi_j}=\otimes^n_{i=1}\ket{\psi_{i,j}}\}^{N_{\rm{data}}}_{j=1}$, where $\ket{\psi_{i,j}}\in {\rm{Stab}} $.
\begin{equation}
    \begin{aligned}
        &\mathbb{E}_{\ket{\psi_j}\sim {\rm{Stab}}^{\otimes n}}\bra{\psi_j}\mathcal{C}(\ket{0^{ n}}\bra{0^{ n}})\ket{\psi_j}\bra{\psi_j}P\ket{\psi_j}\\
        &=\sum_{|P|\leq l'}\alpha_{P}\mathbb{E}_{\ket{\psi_j}\sim {\rm{Stab}}^{\otimes n}}\bra{\psi_j}P\ket{\psi_j}\bra{\psi_j}P\ket{\psi_j}\\
        &=\sum_{|P|\leq l'}\alpha_{P}\mathbb{E}_{U\sim U(2)}\bigotimes_{i=1}\bra{0}U^{\dagger}_{i,j}PU_{i,j}\ket{0}\bra{0}U^{\dagger}_{i.j}PU_{i,j}\ket{0}\\
        &= \frac{\alpha_{P}}{3^{|P|}}\bigotimes^n_{i=1}\sum_{Q\in\{X,Y,Z\}}\bra{0^2}Q\otimes Q\ket{0^2}\\
        & = \frac{\alpha_{P}}{3^{|P|}}.
    \end{aligned}
\end{equation}
The third line employs $\ket{\psi_j}=\otimes^n_{i=1}\ket{\psi_{i,j}}=\otimes^n_{i=1}U_{i,j}\ket{0}$, where $U_{i,j} \sim {\rm Cl}(2)$. The fourth line uses
\begin{equation}
    \mathbb{E}_{U_{i,j}\sim {\rm Cl}(2)}\left[U_{i,j}^{\dagger\otimes 2}(Q_i\otimes Q_i')U_{i,j}^{\otimes 2}\right]=
    \left\{
    \begin{aligned}
       & I^{\otimes 2}, \  \text{if} \ Q_i=Q_i'=I,\\
       & \frac{1}{3}\sum_{Q_i\in\{X,Y,Z\}^{\otimes 2}}\left(Q_i\otimes Q_i\right), \ \text{if} \ Q_i=Q_i'\neq I,\\
       & 0, \ \text{if} \ Q_i\neq Q_i'.
        \end{aligned}
    \right.
\label{Eq:cliffordproperty appendix}    
\end{equation}

Therefore, $\alpha_{P}$ can be calculated by
\begin{equation}
    \begin{aligned}
        \alpha_{P}  &=3^{|P|}\mathbb{E}_{\ket{\psi_j}\sim {\rm{Stab}}^{\otimes n}}\bra{\psi_j}\rho\ket{\psi_j}\bra{\psi_j}P\ket{\psi_j}\\
        &\approx  \frac{3^{|P|}}{N_{\rm{data}}}\sum_{j=1}^{N_{\rm{data}}}\bra{\psi_j}\rho\ket{\psi_j}\bra{\psi_j}P\ket{\psi_j}.
    \end{aligned}
\end{equation}
The first part of the summation term (of the form $\bra{\psi_i}\rho\ket{\psi_i}$) can be obtained by using the SWAP-test method, while the latter part can be derived through classical post-processing. The data complexity $N_{\rm{data}}$ is 
$6^{\mathcal{O}\left(l'\right)}\epsilon^{-2}\log(1/\delta),$
with failure probability $\delta$. The proof follows a similar logic to that in Appendix~\ref{sec:pt2}, employing the triangle inequality and Hoeffding’s inequality. The classical post-processing complexity is $l'\mathcal{O}(n\cdot N_{\rm{data}} \cdot 2^{l'})$ according to the procedure presented in Algorithm~\ref{alg:qsl} and Lemma~\ref{le:path number}.











\section{Learning a Quantum Process Characterization}

Compared with the noisy quantum state tomography, QPT is a more challenging task, which requires an exponential query complexity in the worst-case scenario Ref.~\cite{haahQueryoptimalEstimationUnitary2023}, rendering it infeasible for large-scale systems. Inspired by the noisy quantum state tomography method, we proposed an efficient learning algorithm for the QPT task, particularly when the quantum process is given by a noisy quantum circuit $\mathcal{C}$~(Eq.~\ref{Eq:noisychannel}) followed by an unknown quantum measurement $O$. Without loss of generality, we assume the an $n$-qubit observable $O=\sum_{k=1}^M c_k Q_k$ with $M={\rm poly}(n)$, is the linear combinations of operators. Here, we do not restrict the locality of $O$, actually, it may include global Pauli terms with $\abs{Q_k}=\mathcal{O}(n)$.

Let the noisy quantum channel be given by the Kraus decomposition  $\mathcal{C}=\sum_{j} K_j(\cdot)K_j^{\dagger}$. It is observed that 
\begin{align}
    {\rm Tr}\left[\mathcal{C}(\rho_{\rm in})O\right]={\rm Tr}\left[\sum_{j} K_j\rho_{\rm in}K_j^{\dagger}O\right]={\rm Tr}\left[\sum_{j} \rho_{\rm in}K_j^{\dagger}OK_j\right]= {\rm Tr}\left[\rho_{\rm in}\mathcal{C}^{\dagger}(O)\right].
\end{align}
Consequently, the key step is to learn the `dual' representation $\mathcal{C}^{\dagger}(O)$. We demonstrate that this dual operator also admits low-weight Pauli paths, allowing for a truncation-based approximation similar to that employed for noisy states. 
\subsection{Non-unital Noise}
\label{sec:nu}
Ref.~\cite{angrisaniSimulatingQuantumCircuits2025} gives a way of simulating arbitrary noise by Pauli propagation. Generally, a non-unital noise single-qubit channel $\mathcal{E}$ can be decomposed as
\begin{equation}
    \begin{aligned}
        \mathcal{E} = \mathcal{E}_{depo}^\gamma  \circ\mathcal{E}',
    \end{aligned}
\end{equation}
where $\mathcal{E}'$ is a suitable (non-physical) linear map and $\mathcal{E}_{depo}^\gamma$ is a depolarizing noise with the effective depolarizing rate $\gamma = 1-\chi_{\mathcal{D}}(\mathcal{E})$:
\begin{equation}
     \chi_{\mathcal{D}}^2(\mathcal{E}) := \max_{\mathcal{I} \subseteq [n]} \max_{\substack{\|O_{\mathcal{I}}\|_{\rm F} \neq 0 \\ {\rm{supp}}(O_{\mathcal{I}}) = \mathcal{I}}} \left(  \frac{\| \mathcal{E}^{\dagger \otimes n} O_{\mathcal{I}} \|_{\rm F}^2}{\|O_{\mathcal{I}} \|_{\rm F}^2}  \right)^{1/|\mathcal{I}|}
\end{equation}
is the mean squared contraction coefficient of $\mathcal{E}$ in terms of the locally unbiased distribution $\mathcal{D}$. Given an observable $O=\sum_{P\in \mathcal{P}_n}\alpha_{P} P$, $O_{\mathcal{I}}$ retains those Pauli terms whose support is exactly $\mathcal{I}$: nontrivial on $\mathcal{I}$ and identity elsewhere. $|\mathcal{I}|$ is the size of the ${\rm{supp}}(O_{\mathcal{I}})$. 

Channel $\mathcal{E}'$ has the following property

\begin{equation}
    \begin{aligned}
        \mathcal{E}'(I)&=\frac{\mathcal{E}(I)}{1-\gamma}-\frac{Ip}{1-\gamma},\\
        \mathcal{E}'(X)&=\frac{\mathcal{E}(X)}{1-\gamma},\\
        \mathcal{E}'(Y)&=\frac{\mathcal{E}(Y)}{1-\gamma},\\
        \mathcal{E}'(Z)&=\frac{\mathcal{E}(Z)}{1-\gamma}.
    \end{aligned}
\end{equation}
Moreover, the adjoint channel $\mathcal{E}'^\dagger$ satisfies
\begin{equation}
    \begin{aligned}
         \mathcal{E}'^\dagger(I)&=I,\\
        \mathcal{E}'^\dagger(X)&=\frac{\mathcal{E}^\dagger(X)}{1-\gamma},\\
        \mathcal{E}'^\dagger(Y)&=\frac{\mathcal{E}^\dagger(Y)}{1-\gamma},\\
        \mathcal{E}'^\dagger(Z)&=\frac{\mathcal{E}^\dagger(Z)}{1-\gamma}.
    \end{aligned}
    \label{equ:property of E'}
\end{equation}


In that case, the output of noisy quantum circuits with non-unital channel can be written as
\begin{equation}
    \begin{aligned}
        \mathcal{C}^{\dagger}(O) &= \sum_{s\in{\tilde{\mathcal{P}}_n^{d+1}}}s_0
        {\rm{Tr}}(s_1\mathcal{C}_1^\dagger\mathcal{E}^{\otimes n \dagger}(s_{0}))
\cdots{\rm{Tr}}(s_d\mathcal{C}_d^\dagger\mathcal{E}^{\otimes n \dagger}(s_{d-1})){\rm{Tr}}(s_dO)\\
        &=\sum_{s\in{\tilde{\mathcal{P}}_n^{d+1}}}(1-\gamma)^{|s|}s_0\Phi(\mathcal{C}^\dagger,s),
    \end{aligned}
\end{equation}
where $\Phi(\mathcal{C}^\dagger,s)={\rm{Tr}}(s_1\mathcal{C}_1^\dagger\mathcal{E'}^{\otimes n \dagger}(s_{0}))
\cdots{\rm{Tr}}(s_d\mathcal{C}_d^\dagger\mathcal{E'}^{\otimes n \dagger}(s_{d-1})){\rm{Tr}}(s_dO)$.
Lemma 10 of Ref.~\cite{angrisaniSimulatingQuantumCircuits2025} shows that for $\gamma = 1-\chi_{\mathcal{D}}(\mathcal{E})$, the (non-physical) linear map $\mathcal{E}'^\dagger$ does not increase the Frobenius norm on average.
\begin{lemma}[Non-unital Noise, Lemma 10 of Ref.~\cite{angrisaniSimulatingQuantumCircuits2025}]
    Let $\mathcal{D}$ be a $1$-design over $\mathbb{SU}(2)$ and let $\gamma = 1-\chi_{\mathcal{D}}(\mathcal{E})$. For all observables $O$, we have
    \begin{equation}
        \mathbb{E}_{V\sim\mathcal{D}^{\otimes n}}\left\|\mathcal{E}^{\prime\dagger\otimes n}(V^\dagger OV)\right\|_\mathrm{F}^2\leqslant\left\|O\right\|_\mathrm{F}^2,
    \end{equation}
    which shows the linear map $\mathcal{E}'^\dagger$ does not increase the Frobenius norm in expectation over a randomly sampled $V$.
\label{le:Fro contraction}
\end{lemma}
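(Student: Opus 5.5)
The proof of Lemma~\ref{le:Fro contraction} reduces the $n$-qubit statement to a single-qubit calculation, using the Pauli structure of $\mathcal{E}'^\dagger$ and the 1-design property of $\mathcal{D}$.

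\textbf{Step 1 (Pauli-sector diagonalization).} Expand $O=\sum_{\mathcal{I}\subseteq[n]}O_{\mathcal{I}}$ by exact support. Because $V=\bigotimes_i V_i$ is a tensor product of single-qubit unitaries, $V^\dagger O_{\mathcal{I}}V$ still has support exactly $\mathcal{I}$. By Eq.~\eqref{equ:property of E'}, $\mathcal{E}'^\dagger(I)=I$, while $\mathcal{E}'^\dagger$ acts on the traceless part as $\mathcal{E}^\dagger/(1-\gamma)$ followed by the traceless projection.

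The 1-design property of each $V_i$ is what handles the translation vector. Over a random $V_i$, the traceless output of a rotated $P_i$ has zero mean, and cross terms between sectors cancel in expectation. One then checks that
\begin{equation}
\mathbb{E}_V\|\mathcal{E}'^{\dagger\otimes n}(V^\dagger OV)\|_{\rm F}^2=\sum_{\mathcal{I}}\mathbb{E}_V\|\mathcal{E}'^{\dagger\otimes n}(V^\dagger O_{\mathcal{I}}V)\|_{\rm F}^2 ,
\end{equation}
which is Pythagoras in the normalized Frobenius inner product. The key point to verify here is that for $i\notin\mathcal{I}$ the identity factor maps to $I$ exactly, so these contributions do not mix different $\mathcal{I}$.

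\textbf{Step 2 (sector-wise bound).} For a fixed $\mathcal{I}$, the image of the traceless part of $\mathcal{E}^\dagger$ equals $(1-\gamma)^{|\mathcal{I}|}$ times the image of $\mathcal{E}'^\dagger$. Using $\gamma=1-\chi_{\mathcal{D}}(\mathcal{E})$ together with the definition of $\chi^2_{\mathcal{D}}$ as a maximum over all $\mathcal{I}$ and over all $O_{\mathcal{I}}$, we get
\begin{equation}
\|\mathcal{E}^{\dagger\otimes n}W\|_{\rm F}^2\le\chi_{\mathcal{D}}^{2|\mathcal{I}|}\|W\|_{\rm F}^2
\end{equation}
for every $W$ with support $\mathcal{I}$, in particular for $W=V^\dagger O_{\mathcal{I}}V$, whose Frobenius norm equals $\|O_{\mathcal{I}}\|_{\rm F}$. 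Dividing by $(1-\gamma)^{2|\mathcal{I}|}=\chi_{\mathcal{D}}^{2|\mathcal{I}|}$ gives the sector bound $\le\|O_{\mathcal{I}}\|_{\rm F}^2$. The remaining care is that the non-unital part of $\mathcal{E}^\dagger$ sends $P_i$ to a term proportional to $I$ through $t$. That term lands in a lower sector $\mathcal{I}'\subsetneq\mathcal{I}$, and it must be shown to be killed exactly by the identity-subtraction in $\mathcal{E}'$ once we average over $V$.

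\textbf{Step 3.} Summing the sector bounds over $\mathcal{I}$ and using $\sum_{\mathcal{I}}\|O_{\mathcal{I}}\|_{\rm F}^2=\|O\|_{\rm F}^2$ completes the argument.

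\textbf{Main obstacle.} The hard part is Step 1, namely controlling the leakage of the translation vector $t$ into lower-weight sectors. I would first try to show that this leakage averages to zero under the 1-design. If it does not, I would instead absorb it into the maximum over $\mathcal{I}$ in the definition of $\chi_{\mathcal{D}}$, which is how the cited Lemma~10 of Ref.~\cite{angrisaniSimulatingQuantumCircuits2025} is structured.
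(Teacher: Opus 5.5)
The paper gives no proof of this statement; it imports it as Lemma~10 of Ref.~\cite{angrisaniSimulatingQuantumCircuits2025}. Your skeleton is the right one: split $O$ by exact support, cancel the cross terms with the 1-design, bound each sector by the definition of $\chi_{\mathcal D}$, and sum. However, you describe the translation vector $t$ incorrectly, and as a result the one non-trivial step is left unproved.

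First, $\mathcal{E}^{\prime\dagger}$ is \emph{not} ``$\mathcal{E}^\dagger/(1-\gamma)$ followed by the traceless projection.'' By the formulas for $\mathcal{E}^{\prime\dagger}$ in Appendix~\ref{sec:nu}, it equals $\mathcal{E}^\dagger/(1-\gamma)$ on traceless inputs, with the identity component kept. Nothing on the Heisenberg side subtracts the $t$-leakage. So for fixed $V$ the images of different sectors do overlap, and your sentence ``these contributions do not mix different $\mathcal I$'' is false. In the diagonal terms the leakage is never killed, not even on average, because it adds a positive amount to the squared norm.

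What is true, and all that is needed, is that the leakage enters the \emph{cross} terms linearly. For $\mathcal I\neq\mathcal J$, expand $O_{\mathcal I}=\sum_P a_PP$ and $O_{\mathcal J}=\sum_Q b_QQ$. The normalized inner product of $\mathcal{E}^{\prime\dagger\otimes n}(V^\dagger PV)$ and $\mathcal{E}^{\prime\dagger\otimes n}(V^\dagger QV)$ is a product of single-qubit inner products, so by independence of the $V_j$ its expectation factorizes. Take $i\in\mathcal I\setminus\mathcal J$ without loss of generality. Then $Q_i=I$ is sent to $I$, and the $i$-th factor is
\[
\mathbb E_{V_i}\langle \mathcal{E}^{\prime\dagger}(V_i^\dagger P_iV_i),I\rangle=\langle \mathcal{E}^{\prime\dagger}(\mathbb E_{V_i}[V_i^\dagger P_iV_i]),I\rangle=\langle \mathcal{E}^{\prime\dagger}({\rm Tr}(P_i)I/2),I\rangle=0 ,
\]
since $P_i$ is traceless. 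That single line is the whole of what you call the main obstacle.

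The diagonal terms need nothing more, so you can delete the ``remaining care'' sentence in Step~2. Take $W=V^\dagger O_{\mathcal I}V$; it has support exactly $\mathcal I$ and $\|W\|_{\rm F}=\|O_{\mathcal I}\|_{\rm F}$. Then $\mathcal{E}^{\prime\dagger\otimes n}(W)=(1-\gamma)^{-|\mathcal I|}\mathcal{E}^{\dagger\otimes n}(W)$ exactly. The lower-sector pieces generated by $t$ are simply part of $\|\mathcal{E}^{\dagger\otimes n}(W)\|_{\rm F}^2$, which the definition of $\chi_{\mathcal D}$ already bounds by $\chi_{\mathcal D}^{2|\mathcal I|}\|W\|_{\rm F}^2$ (pointwise in $V$, with the definition as written in this paper).

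Your fallback, absorbing the leakage into the maximum that defines $\chi_{\mathcal D}$, cannot replace the averaging argument. That maximum constrains only single-sector inputs and says nothing about $\langle \mathcal{E}^{\dagger\otimes n}O_{\mathcal I},\mathcal{E}^{\dagger\otimes n}O_{\mathcal J}\rangle$ for $\mathcal I\neq\mathcal J$. In fact, without the average over $V$ the inequality is false. Take $n=1$ and amplitude damping, so $\mathcal{E}^\dagger(Z)=pI+(1-p)Z$. For any $\gamma\in[0,1)$, $\mathcal{E}^{\prime\dagger}(I+Z)=(1+\tfrac{p}{1-\gamma})I+\tfrac{1-p}{1-\gamma}Z$, whose squared normalized Frobenius norm is at least $(1+p)^2+(1-p)^2=2+2p^2>\|I+Z\|_{\rm F}^2$. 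So the 1-design cancellation of the cross terms is indispensable, and it is exactly where the hypothesis on $\mathcal D$ enters.
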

For an $i$-th layer of $\mathcal{C}$, $\mathcal{C}_i=V_i\circ G_i$, where $V_i\sim\mathcal{D}^{\otimes n}$ and $G_i$ acts on $\mathcal{O}(1)$ qubits.

\subsection{Under Constant noise Strength}
\label{sec:constant noise process}
In this section, we will show that when the noise strength $\gamma$ is a constant independent of the system size, the required truncation degree $l'$ scales 
\begin{lemma}[contraction of the Frobenius norm,\cite{angrisaniSimulatingQuantumCircuits2025}]
    Denoted $\tilde{\mathcal{C}}^\dagger(O)=\sum_{s\in \tilde{\mathcal{P}}^{d+1}_n}s_0 \Phi(\mathcal{C^\dagger},s)$, it holds that
    \begin{equation}
        \mathbb{E}_{\mathcal{C}}\|\tilde{\mathcal{C}}^\dagger(O)\|^2_F\leq ||O||^2_F.
    \end{equation}   
\end{lemma}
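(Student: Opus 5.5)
I would prove the bound by induction on the depth in the Heisenberg picture, peeling off one noisy layer at a time. Write $\tilde{\mathcal{C}}^\dagger = \mathcal{C}_1^\dagger\mathcal{E}'^{\otimes n\dagger}\cdots\mathcal{C}_d^\dagger\mathcal{E}'^{\otimes n\dagger}$. Here $\tilde{\mathcal{C}}^\dagger(O)=\sum_s s_0\Phi(\mathcal{C}^\dagger,s)$ is the Pauli-path expansion of this composite map, since summing over the intermediate normalized Pauli labels $s_1,\dots,s_{d}$ just inserts resolutions of the identity $\sum_{s_i} s_i{\rm Tr}(s_i\,\cdot)$ between consecutive layers. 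Set $O_d:=O$ and $O_{i-1}:=\mathcal{C}_i^\dagger\mathcal{E}'^{\otimes n\dagger}(O_i)$, so that $\tilde{\mathcal{C}}^\dagger(O)=O_0$. The claim then follows once each step satisfies
\begin{equation}
\mathbb{E}_{\mathcal{C}_i}\|O_{i-1}\|_F^2\le \|O_i\|_F^2 .
\end{equation}
Because the layers are drawn independently, we condition on $\mathcal{C}_{i+1},\dots,\mathcal{C}_d$, which fix $O_i$, and chain these bounds with the tower property.

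\textbf{Single step.} Use the decomposition stated just before Lemma~\ref{le:Fro contraction}: $\mathcal{C}_i=V_i\circ G_i$, with $V_i\sim\mathcal{D}^{\otimes n}$ a tensor product of independent single-qubit $1$-design unitaries. This is legitimate because each two-qubit Haar gate is invariant under multiplication by independent local single-qubit Haar unitaries, so $G_i$ is independent of $V_i$. Then $\mathcal{C}_i^\dagger(X)=G_i^\dagger(V_i^\dagger X V_i)$, with the ordering chosen so that the random local rotation sits adjacent to $\mathcal{E}'^{\otimes n\dagger}$. Since $G_i^\dagger$ is unitary conjugation, it preserves the Frobenius norm, so it suffices to bound $\mathbb{E}_{V_i}\|\mathcal{E}'^{\otimes n\dagger}(V_i^\dagger O_i V_i)\|_F^2$. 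Lemma~\ref{le:Fro contraction} gives exactly $\le\|O_i\|_F^2$. For the unital Pauli case one can instead check directly that $\mathcal{E}'^\dagger$ is diagonal in the Pauli basis with entries of modulus at most $1$, and then no averaging is needed.

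\textbf{Main obstacle.} The delicate point is the ordering between the local random unitaries and $\mathcal{E}'^\dagger$. In the adjoint, $\mathcal{E}'^{\otimes n\dagger}$ acts on $O_i$ \emph{before} $\mathcal{C}_i^\dagger$. So the $1$-design randomness available from $\mathcal{C}_i$ lies on the wrong side. The randomness we actually need is the output-side local rotation of layer $i+1$, or an extra $V$ absorbed from the two-qubit Haar gate on the output side.

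My plan is to split each Haar gate as $U=(v_1\otimes v_2)\,G$, putting the independent single-qubit Haar factor on the output side. Then $\mathcal{C}_{i+1}^\dagger(X)=G^\dagger(V^\dagger X V)$, and the next map $\mathcal{E}'^{\otimes n\dagger}$ receives $V^\dagger(\cdot)V$ conjugated by $G^\dagger$. That is still not the form in Lemma~\ref{le:Fro contraction}, so I would instead write $U=G(v_1\otimes v_2)$, placing the local factor on the input side. Then $\mathcal{C}_{i}^\dagger(X)=V^\dagger G^\dagger X G V$, and the subsequent $\mathcal{E}'^{\otimes n\dagger}$ from layer $i-1$ sees $V^\dagger(\cdot)V$, which is exactly the required form.

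Regrouping the product as $[\mathcal{E}'^\dagger\circ V^\dagger\text{-conj}]\circ G^\dagger\cdots$ handles every $\mathcal{E}'^\dagger$ except the outermost one, which acts on $O$ directly. For that boundary term I would either insert a free random $V$ using the $1$-design invariance of the ensemble, or absorb it by assuming the final layer is followed by the twirl, as in Ref.~\cite{angrisaniSimulatingQuantumCircuits2025}. Checking the independence of $V$ from $O_i$ in the conditioning is the remaining bookkeeping.
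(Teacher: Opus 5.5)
Your route is the one the paper has in mind. The paper gives no argument beyond citing Lemma~\ref{le:Fro contraction} and the split $\mathcal{C}_i=V_i\circ G_i$, i.e.\ iterate the one-step averaged contraction and use unitary invariance of $\|\cdot\|_F$. Your regrouping with the local factor on the \emph{input} side of each gate, $U=G(v_1\otimes v_2)$, is the right way to make this precise. (Your first ``single step'' paragraph has the order backwards, as you then notice.) In Heisenberg order the maps are $\mathcal{E}'^{\otimes n\dagger}$, conjugation by $G_d$, conjugation by $V_d$, $\mathcal{E}'^{\otimes n\dagger}$, conjugation by $G_{d-1}$, and so on. So every noise map except the first is immediately preceded by a rotation independent of its argument, and Lemma~\ref{le:Fro contraction} with the tower property handles it. This needs every qubit to be touched by a gate in every layer. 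An idle qubit sees two consecutive copies of $\mathcal{E}'^\dagger$ with no rotation between them, and then $V_i\sim\mathcal{D}^{\otimes n}$ must be an explicit assumption rather than something extracted from the Haar gates.

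The genuine gap is the first map, $\mathcal{E}'^{\otimes n\dagger}$ acting on the fixed $O$, and neither of your fixes closes it within the stated model. The ensemble of Eq.~\ref{Equ:noisychannel} ends with the fixed channel $\mathcal{E}^{\otimes n}$, so there is no free rotation to insert. The output-side local factors of the layer-$d$ gates act only \emph{after} this map in Heisenberg order. The quantity is also not invariant under $O\mapsto W^\dagger O W$, because non-unital $\mathcal{E}$ is not covariant.

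In fact the inequality is false for non-unital noise. For $d=1$, $\tilde{\mathcal{C}}^\dagger(O)=\mathcal{C}_1^\dagger\bigl(\mathcal{E}'^{\otimes n\dagger}(O)\bigr)$, so the left-hand side equals $\|\mathcal{E}'^{\otimes n\dagger}(O)\|_F^2$ deterministically. For amplitude damping with parameter $p>0$, $\mathcal{E}^\dagger(Z)=pI+(1-p)Z$, hence $\mathcal{E}'^\dagger(Z)=(pI+(1-p)Z)/(1-\gamma)$. For $O=(I+Z)\otimes I^{\otimes(n-1)}$,
\[
\|\mathcal{E}'^{\otimes n\dagger}(O)\|_F^2=\Bigl(1+\tfrac{p}{1-\gamma}\Bigr)^2+\Bigl(\tfrac{1-p}{1-\gamma}\Bigr)^2\geq 2+2p^2>2=\|O\|_F^2
\]
for every $\gamma\in[0,1)$. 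The offending cross term, between the identity component of $O$ and the $I$-part of $\mathcal{E}^\dagger(Z)$, is exactly what the $V$-average in Lemma~\ref{le:Fro contraction} removes. Each later randomized step preserves the expected coefficient of $I^{\otimes n}$, and $\|X\|_F^2$ dominates its square. So Jensen gives a violation at every depth once $p>\sqrt{2}-1$.

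What your chain actually proves is $\mathbb{E}_{\mathcal{C}}\|\tilde{\mathcal{C}}^\dagger(O)\|_F^2\leq\|\mathcal{E}'^{\otimes n\dagger}(O)\|_F^2$. To get $\|O\|_F^2$ on the right you must change the hypotheses, and this should be stated as an assumption, not left as bookkeeping. One option is a random local layer between the last noise channel and the measurement, which is your second option and modifies Eq.~\ref{Equ:noisychannel}. The other is to restrict to Pauli noise, where your direct diagonal argument works provided $1-\gamma\geq\max_P|\lambda_P|$, with $\lambda_P$ the Pauli eigenvalues of $\mathcal{E}$. The paper does not address this point either; it only cites Ref.~\cite{angrisaniSimulatingQuantumCircuits2025}.
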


From the above lemma, one may easily verify that when noise strength is a constant, we have
\begin{equation}
    \mathbb{E}_{\mathcal{C}}\left[f(\rho_{\rm in})-{\rm Tr}(O\mathcal{C}(\rho_{\rm in}))\right]\leq \epsilon_2,
\end{equation}
for an arbitrary input state $\rho_{\rm in}$.
Thus yielding the following lemma:
\begin{lemma}
    Considering a noisy circuit defined in Equ.~\ref{Eq:noisychannel}, when the noise strength is a constant, there exists a algorithm that can satisfy
    \begin{equation}
        \left[f(\rho_{\rm in})-{\rm Tr}(O\mathcal{C}(\rho_{\rm in}))\right]\leq \epsilon_2,
    \end{equation}
    where $f(\rho_{\rm in})=\sum_{|P|\leq l}\beta_{P} {\rm Tr}(P \rho_{\rm in})$ and the number of valid $P$ is at most $2^{l'}$.
\end{lemma}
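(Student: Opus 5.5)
We plan to mirror the proof of Lemma~\ref{le:pna}, working in the Heisenberg picture with the Pauli-path expansion of $\mathcal{C}^\dagger(O)$ derived in Appendix~\ref{sec:nu},
\begin{equation}
\mathcal{C}^\dagger(O)=\sum_{s\in\tilde{\mathcal P}_n^{d+1}}(1-\gamma)^{|s|}s_0\,\Phi(\mathcal{C}^\dagger,s),
\end{equation}
valid for both unital noise (where $\mathcal{E}'$ is the Pauli-diagonal part) and non-unital noise (via $\mathcal{E}=\mathcal{E}^\gamma_{\rm depo}\circ\mathcal{E}'$ with $\mathcal{E}'^\dagger(I)=I$). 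We define the truncated operator $\mathcal{C}^{(l)\dagger}(O)$ by keeping only paths with $|s|\le l$. The learned function is then $f(\rho_{\rm in})={\rm Tr}[\rho_{\rm in}\mathcal{C}^{(l)\dagger}(O)]=\sum_{P}\beta_P{\rm Tr}(P\rho_{\rm in})$, where $\beta_P$ aggregates all retained paths with terminal (in the Heisenberg sense, $s_0$) string equal to $P$. Since $|s_0|\le |s|$, every retained $P$ has weight at most $l$.

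\textbf{Error bound.} We bound the pointwise error for arbitrary $\rho_{\rm in}$ by Hölder's inequality:
\begin{equation}
|{\rm Tr}[\rho_{\rm in}(\mathcal{C}^{(l)\dagger}-\mathcal{C}^\dagger)(O)]|\le \|(\mathcal{C}^{(l)\dagger}-\mathcal{C}^\dagger)(O)\|_\infty .
\end{equation}
To get a dimension-free statement, we then pass to the normalized Frobenius norm, as the paper does throughout. Next, we expand the squared Frobenius norm of the tail and use the path orthogonality $\mathbb{E}_{\mathcal C}[\Phi(\mathcal{C}^\dagger,s)\Phi(\mathcal{C}^\dagger,s')]=0$ for $s\ne s'$. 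This orthogonality follows from the local 2-design (Haar two-qubit) gates, combined with the random local rotations $V_i\sim\mathcal D^{\otimes n}$ absorbed into each layer. It yields
\begin{equation}
\mathbb{E}\|\text{tail}\|_F^2=\sum_{k>l}(1-\gamma)^{2k}W_k,\qquad W_k=\mathbb{E}\sum_{|s|=k}\Phi(\mathcal{C}^\dagger,s)^2 .
\end{equation}

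\textbf{Normalization of the $W_k$.} By the contraction lemma for $\tilde{\mathcal C}^\dagger$ (Lemma~\ref{le:Fro contraction} applied layer by layer), we have $\sum_k W_k=\mathbb{E}\|\tilde{\mathcal C}^\dagger(O)\|_F^2\le\|O\|_F^2$. This gives
\begin{equation}
\mathbb{E}\|\text{tail}\|_F^2\le(1-\gamma)^{2l}\|O\|_F^2 .
\end{equation}
Markov's inequality then gives failure probability $\delta$ once $(1-\gamma)^{2l}\|O\|_F^2\le\epsilon^2\delta$. Since $\gamma=\Theta(1)$ and $\|O\|_F^2\le(\sum_k|c_k|)^2={\rm poly}(n)$, or $\mathcal{O}(1)$ after normalizing $O$, this requires $l=\mathcal{O}(\log(1/\epsilon\delta))$.

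\textbf{Counting the retained Pauli strings.} To bound the number of valid $P$ by $2^{\mathcal{O}(l')}$, we reuse Lemma~\ref{le:lower bound term} and Lemma~\ref{le:path number}. Every legal path has $\mathbb{E}\prod_i{\rm Tr}(\cdot)^2\ge 15^{-|s|}$, and the total second-moment mass is bounded. Hence there are at most $M\cdot15^{l}$ legal paths of weight $\le l$, since each of the $M$ terms $Q_k$ starts a separate family. The number of distinct terminal strings is at most this count. Legality forces each active path to carry weight in every layer, so $l=l'+\Omega(d)$. With $d=\mathcal{O}(\log n)$, or $d=\mathcal{O}(l')$ in the nontrivial regime, the count becomes $2^{\mathcal{O}(l')}$. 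Beyond that depth, the zero function (unital case) or the fixed-point constant already achieves error $\epsilon$.

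\textbf{Main obstacle.} The hardest step is converting the Frobenius tail bound into a pointwise guarantee for arbitrary, possibly entangled, $\rho_{\rm in}$, because $\|\cdot\|_\infty$ is not controlled by the normalized Frobenius norm in general. We would handle this as follows. First, we exploit that the tail is supported on Pauli strings of weight $>l$. Second, we use the operator-norm version of the per-path damping, i.e.\ bound $\mathbb{E}|{\rm Tr}(\rho_{\rm in}\,\text{tail})|^2$ directly via orthogonality, where each surviving Pauli contributes ${\rm Tr}(\rho_{\rm in}P)^2\le1$. That gives $\sum_{k>l}(1-\gamma)^{2k}W_k\cdot 2^{n}$-type normalization factors that must be matched to the paper's normalized convention. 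A secondary technical issue is verifying orthogonality in the non-unital case, where $\mathcal{E}'^\dagger$ is not Pauli-diagonal; there we rely on the random $V_i$ 1-design twirl from Lemma~\ref{le:Fro contraction}.
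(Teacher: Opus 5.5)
Your plan is essentially the paper's proof: Heisenberg-picture Pauli-path expansion of $\mathcal{C}^\dagger(O)$, truncation at total path weight $|s|\le l$, second-moment orthogonality of distinct paths, Lemma~\ref{le:Fro contraction} to bound the total path mass by $\|O\|_F^2$ in the non-unital case, and Markov's inequality. Two comparisons are worth making.

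First, the H\"older step in your error bound fails, because the normalized Frobenius norm lies below the operator norm and so cannot control it. Drop that step. The route you list as the ``main obstacle'' is exactly the paper's argument, and it closes with no leftover dimension factor. With normalized terminal strings $s_0=P/2^{n/2}$ one has ${\rm Tr}(\rho_{\rm in}s_0)^2\le 2^{-n}$. Orthogonality and the contraction lemma give $\sum_s\mathbb{E}_{\mathcal C}\Phi(\mathcal C^\dagger,s)^2\le{\rm Tr}(O^2)=2^n\|O\|_F^2$. The two factors of $2^n$ cancel, so $\mathbb{E}_{\mathcal C}\Delta(\rho_{\rm in})^2\le(1-\gamma)^{2l}\|O\|_F^2$ for every fixed $\rho_{\rm in}$. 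This averaged bound holds even though $|{\rm Tr}(\rho_{\rm in}X)|\le\|X\|_F$ is false for a fixed operator $X$.

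Second, your counting is a different argument from the one given for this lemma. The paper argues combinatorially that legality propagates the non-identity positions from layer to layer, starting from each of the $M$ terms of $O$. You instead reuse the state-case second-moment argument via Lemma~\ref{le:lower bound term}. Every legal path has $\mathbb{E}L\ge 15^{-|s|}$, and for a fixed starting $Q_k$ the values $L$ sum to one over all paths, since $\sum_a{\rm Tr}(aUbU^\dagger)^2={\rm Tr}(b^2)$ layer by layer. That bounds the retained paths by $M\cdot 15^{l}$. Your version has the advantage of not needing the positional claim, which holds only at the level of which two-qubit blocks are active; an active block still has up to $15$ output patterns.

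Both routes reach $M\,2^{\mathcal O(l)}$. Both then rely on the same nontrivial-regime reduction $d=\mathcal O(l')$ to state the count as $2^{\mathcal O(l')}$. Both also implicitly use the unital legality structure, including your step $l=l'+\Omega(d)$. Under a non-unital $\mathcal E'^\dagger$ a non-identity Pauli can decay to $I$, and neither argument addresses this.
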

\begin{proof}
The proof is parallel to Lemma~\ref{le:pna}, with the Schrödinger-picture
state replaced by the Heisenberg-evolved observable.  Expanding
\(\mathcal C^\dagger(O)\) in Pauli paths gives
\[
\mathcal C^\dagger(O)
=
\sum_s (1-\gamma)^{|s|}\Phi(\mathcal C^\dagger,s)s_0 .
\]
Let \(\mathcal C^{(l)\dagger}(O)\) retain the paths with \(|s|\le l\).
For any input state \(\rho_{\rm in}\),
\[
\Delta(\rho_{\rm in})
:=
\left|\mathrm{Tr}\rho_{\rm in}
(\mathcal C^\dagger(O)-\mathcal C^{(l)\dagger}(O))\right|.
\]
Using the same second-moment orthogonality of random two-qubit layers as in
the state case, together with
\(|\mathrm{Tr}(\rho_{\rm in}s_0)|\le 1\) for Pauli strings, we obtain
\[
\mathbb E_{\mathcal C}\Delta(\rho_{\rm in})^2
\le
\sum_{|s|>l}(1-\gamma)^{2|s|}
\mathbb E_{\mathcal C}|\Phi(\mathcal C^\dagger,s)|^2
\le
(1-\gamma)^{2l}\|O\|_F^2 .
\]
For non-unital noise, the same bound holds after writing
\(\mathcal E=\mathcal E_{\rm depo}^\gamma\circ\mathcal E'\), because
Lemma~\ref{le:Fro contraction} ensures that \(\mathcal E'^\dagger\) does not
increase the relevant Frobenius second moment.  

Since the normalized Frobenius norm has the property that $\|O\|_F=\mathcal{O}(1)$ and $\gamma=\Omega(1)$, it suffices to choose $l=\mathcal{O}\left(\log(1/(\epsilon_2\delta_2))\right)$ to ensure that $\mathbb{E}_{\mathcal{C}}\left[f(\rho_{\rm in})-{\rm Tr}(O\mathcal{C}(\rho_{\rm in}))\right]^2\leq \epsilon_2^2$. By Markov's inequality, with probability at least $1-\delta_2$, we have $\left|f(\rho_{\rm in})-{\rm Tr}(O\mathcal{C}(\rho_{\rm in}))\right|\leq \epsilon_2$.

Focusing on the number of the legal Pauli paths, denoted $N_s$, the basic idea is to enumerate all combinations that satisfy the Definition.~\ref{def:legal path}. 
Once the non-identity positions in one layer are fixed, those in the next layer are also fixed because a legal Pauli path requires the input and the output of every gate to be either both identities or both non-identities. Starting from the first layer, the positions and count of non-identities therefore match those of the input. For a local term $Q\in O$ acting non-trivially on a constant number of qubits, $N_s$ is bound by $M2^{\mathcal{O}(l)}$.

\end{proof}

Still considering the effective depth, thus we have that when the noise strength is a constant, the Pauli combination of the legal paths is at most $M2^{\mathcal{O}(l')}$ with depth $d=\mathcal{O}(\log n)$. If $d>\mathcal{O}(\log n)$, one can directly output the zero function, which is a good approximation of the noisy circuit.

\subsection{Under Arbitrary Noise}
\label{sec: arbitrary noise process}
In this section, we show that the process-truncation argument used in the
main text remains valid for arbitrary noise strength. We keep the notation
of the main text: $\mathcal C$ denotes the noisy circuit in
Eq.~\ref{Eq:noisychannel}, $\mathcal C^\dagger(O)$ is the Heisenberg-evolved
observable, and $\rho_{\rm in}$ is an arbitrary input state. Throughout this
section, the Frobenius norm is the normalized Frobenius norm induced by the
normalized Pauli basis $\tilde{\mathcal P}_n$.

Following the low-weight Pauli propagation truncation of
Ref.~\cite{angrisaniClassicallyEstimatingObservables2025}, for an integer
$k\geq 0$, define the retained Pauli-path set
\begin{equation}
    \mathcal S_k
    :=
    \left\{
        s=(s_0,s_1,\ldots,s_d)
        \in \tilde{\mathcal P}_n^{d+1}:
        |s_t|\leq k,\quad t=1,\ldots,d
    \right\}.
\end{equation}
The corresponding path-truncated adjoint observable is
\begin{equation}
    \mathcal C_{s,k}^\dagger(O)
    :=
    \sum_{s\in\mathcal S_k}
    (1-\gamma)^{|s|}\Phi(\mathcal C^\dagger,s)s_0,
\end{equation}
where $\Phi(\mathcal C^\dagger,s)$ is the Pauli-path amplitude defined
above. For any input state $\rho_{\rm in}$, define the exact and
path-truncated prediction functions as
\begin{equation}
    f_{\mathcal C}(O;\rho_{\rm in})
    :=
    {\rm Tr}\!\left[\rho_{\rm in}\mathcal C^\dagger(O)\right],
    \qquad
    f_{s,k}(O;\rho_{\rm in})
    :=
    {\rm Tr}\!\left[\rho_{\rm in}\mathcal C_{s,k}^\dagger(O)\right].
\end{equation}
The local two-design gate ensemble considered in this work satisfies the
locally scrambling second-moment condition required by
Ref.~\cite{angrisaniClassicallyEstimatingObservables2025}.

\begin{lemma}[Low-weight Pauli propagation bound for scalar prediction, restatement of Ref.~\cite{angrisaniClassicallyEstimatingObservables2025}]
\label{lem:imported-low-weight-propagation}
Let $\gamma=0$, and assume that the unitary layers of $\mathcal C$ are
independently sampled from a locally scrambling ensemble. Then, for any input
state $\rho_{\rm in}$ satisfying $\|\rho_{\rm in}\|_1\leq 1$,
\begin{equation}
    \mathbb E_{\mathcal C}
    \left[
        \left|
            f_{\mathcal C}(O;\rho_{\rm in})
            -
            f_{s,k}(O;\rho_{\rm in})
        \right|^2
    \right]
    \leq
    \left(\frac{2}{3}\right)^{k+1}
    \|O\|_{{\rm F}}^2 .
\end{equation}
\end{lemma}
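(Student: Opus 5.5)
The argument has four steps. First, orthogonality of Pauli paths turns the squared error into a sum of squared path amplitudes. Second, a layer-by-layer recursion bounds the contribution of paths that leave the weight-$\leq k$ sector. Third, the $\gamma=0$ second-moment identities supply a $(2/3)$ penalty per unit of weight at the point where a path leaves that sector. Fourth, Cauchy--Schwarz against the input state finishes the bound.

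\textbf{Step 1 (error expansion).} Setting $\gamma=0$, write
\[
f_{\mathcal C}-f_{s,k}=\sum_{s\notin\mathcal S_k}\Phi(\mathcal C^\dagger,s)\,{\rm Tr}(\rho_{\rm in}s_0).
\]
I would use that each layer is locally scrambling, i.e. invariant under independent random single-qubit Cliffords (or $1$-designs) on every qubit. Under this invariance, cross terms between distinct paths vanish in expectation, exactly as in Eq.~\ref{Eq:random orthogonality}. This gives
\[
\mathbb E|f_{\mathcal C}-f_{s,k}|^2=\sum_{s\notin\mathcal S_k}\mathbb E\big[\Phi^2\big]\,{\rm Tr}(\rho_{\rm in}s_0)^2 .
\]

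\textbf{Step 2 (partition by exit layer).} Each discarded path has a first index $t$ at which $|s_t|>k$, and this index is unique, so the discarded paths split into disjoint classes indexed by $t$. For a fixed $t$, I would sum freely over the remaining steps $s_{t-1},\dots,s_0$. Each $\mathcal C_i^\dagger$ is unitary, so this free sum reproduces a Frobenius norm of the operator obtained by propagating the high-weight component at layer $t$ down to the input. The inner sum over earlier layers is then controlled by the squared Frobenius norm of the weight-$>k$ part of the operator evolved truncated up to layer $t$.

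\textbf{Step 3 (the $(2/3)$ factor and the depth-independent sum).} This is the step I expect to be hardest. The naive per-layer bound would accumulate a factor of $d$, and the goal is instead to collapse the sum over $t$ telescopically. The mechanism I would exploit is that after a random local rotation, a weight-$w$ Pauli string has squared expectation $\mathbb E_V\,{\rm Tr}(\rho\, V^\dagger P V)^2\leq(1/3)^w\leq(2/3)^{w}$. This uses the identity Eq.~\ref{Eq:cliffordproperty} qubit-wise together with $\sum_{Q}{\rm Tr}(\rho_1 Q)^2\leq1$ for single-qubit states.

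I would apply this bound at the final contraction with $\rho_{\rm in}$, not at each layer. The resulting bound is
\[
\mathbb E\sum_{s\notin\mathcal S_k}\Phi^2\,{\rm Tr}(\rho_{\rm in}s_0)^2\le\sum_{w>k}(2/3)^w\,\mathbb E\,\|\Pi_w \mathcal C^\dagger(O)\|_F^2 ,
\]
where $\Pi_w$ projects onto Pauli strings of weight exactly $w$. One point needs care here: the high-weight components of the discarded part must be disjoint from the retained sector in expectation, so that the sum over $t$ does not double count.

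\textbf{Step 4 (conclusion).} Unitarity gives $\sum_w\mathbb E\|\Pi_w \mathcal C^\dagger(O)\|_F^2\le\|O\|_F^2$. Combined with Step 3, this yields
\[
\mathbb E|f_{\mathcal C}-f_{s,k}|^2\le\Big(\tfrac23\Big)^{k+1}\|O\|_F^2 .
\]
The same argument covers any $\|\rho_{\rm in}\|_1\le1$: by convexity (a triangle inequality over the spectral decomposition) the general case reduces to pure states, and local invariance reduces those to the product-moment bound used in Step 3.

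If the telescoping in Step 3 fails, I would fall back on the reference's original argument as given in \cite{angrisaniClassicallyEstimatingObservables2025}, since the statement is a restatement of their result.
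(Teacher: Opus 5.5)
Your Step 3 does not establish the bound. Applying the local-scrambling moment only at the final contraction with $\rho_{\rm in}$ gives a factor $(2/3)^{|s_0|}$ that depends on the \emph{terminal} weight. What you actually obtain is $\sum_{s\notin\mathcal S_k}\mathbb E[\Phi^2]\,(2/3)^{|s_0|}$. Passing from this to $\sum_{w>k}(2/3)^w\,\mathbb E\|\Pi_w\mathcal C^\dagger(O)\|_F^2$ would require every discarded path to end at $|s_0|>k$, and that is false. $f_{s,k}$ discards any path with some intermediate weight $|s_t|>k$, and such paths can return to low weight: a Haar two-qubit gate sends a weight-2 string on its pair to weight-1 strings with total second-moment weight $6/15$. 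For example, take $O=Z_1$, two Haar layers on qubits $(1,2)$, and $k=1$. The path $Z_1\to Z_1Z_2\to Z_1$ is discarded, has nonzero second moment, and ends at weight $1$. It therefore sits on your left-hand side and nowhere on your right-hand side, and you give no estimate for the mass of such returning paths.

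Your Step 2 already isolates the right object, namely the weight-$>k$ part of the truncated evolution at the exit step. The missing idea is to extract the decay at that \emph{exit} step rather than at the input:
\begin{itemize}
\item Split the discarded paths by the first index $t$, going from $O$ towards the input, with $|s_t|\ge k+1$.
\item For a fixed prefix ending at $s_t$, orthogonality resums all continuations $s_{t-1},\dots,s_0$ into $\mathbb E\,{\rm Tr}[s_t\,\mathcal C_t\cdots\mathcal C_1(\rho_{\rm in})]^2$.
\item Pull the random local rotation $V$ of the adjacent layer out by local scrambling. This becomes $\mathbb E_V{\rm Tr}[V^\dagger s_tV\sigma]^2$ for a genuine state $\sigma$, hence is at most $(2/3)^{|s_t|}\le(2/3)^{k+1}$ regardless of where the continuation ends.
\item What remains is the sum over $t$ of the Frobenius mass removed by the truncation at step $t$. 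Since each layer preserves $\|\cdot\|_F$, these removed masses telescope to $\|O\|_F^2-\|O_1^{(k)}\|_F^2\le\|O\|_F^2$, with no depth factor.
\end{itemize}
This is the argument of the cited reference, which the paper only sketches in the proof of the noisy version.

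Second, your single-state moment bound $(1/3)^w$ is wrong for entangled states. This matters exactly at the exit step, where $\sigma$ is generally entangled even when $\rho_{\rm in}$ is a product state. Local invariance does not reduce an entangled pure state to the product case. For a Bell pair and $P=X\otimes X$, $\mathbb E_V{\rm Tr}(\rho V^\dagger PV)^2=\tfrac19(\langle XX\rangle^2+\langle YY\rangle^2+\langle ZZ\rangle^2)=\tfrac13>(\tfrac13)^2$. The correct bound, valid for every $\sigma$ with $\|\sigma\|_1\le1$, comes from the purity of the marginal on $A={\rm supp}(P)$. For unnormalized Pauli strings,
\[
\mathbb E_V{\rm Tr}(\sigma V^\dagger PV)^2=3^{-|P|}\sum_{{\rm supp}(Q)=A}{\rm Tr}(\sigma Q)^2\le 3^{-|P|}\,2^{|P|}\,{\rm Tr}(\sigma_A^2)\le\left(\tfrac23\right)^{|P|},
\]
using $\sum_{{\rm supp}(Q)\subseteq A}{\rm Tr}(\sigma Q)^2=2^{|A|}{\rm Tr}(\sigma_A^2)$. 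This marginal-purity step is the real origin of the $2/3$. With it, the convexity reduction in your Step 4 becomes unnecessary.
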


This approach builds upon prior work on low-weight Pauli propagation.
Ref.~\cite{huangLearningPredictArbitrary2023} used a $(2/3)$ decay bound but
only for inputs restricted to locally flat distributions.
Ref.~\cite{angrisaniClassicallyEstimatingObservables2025} employed
layer-wise truncation for classical simulation from a circuit description.
In contrast, our task is data-driven learning: the algorithm estimates
terminal Pauli coefficients of $\mathcal C^\dagger(O)$ directly from
input-output expectation labels and does not propagate or enumerate Pauli
paths layer by layer. We next state the noisy version used in our process
learning analysis.

\begin{lemma}[Low-weight Pauli propagation bound for scalar prediction under noise]
\label{lem:noisy low-weight-propagation}
Assume that the unitary layers of $\mathcal C$ are independently sampled from
a locally scrambling ensemble and that each layer is followed by i.i.d.
single-qubit noise of strength $\gamma$. Then, for any input state
$\rho_{\rm in}$ satisfying $\|\rho_{\rm in}\|_1\leq 1$,
\begin{equation}
    \mathbb E_{\mathcal C}
    \left[
        \left|
            f_{\mathcal C}(O;\rho_{\rm in})
            -
            f_{s,k}(O;\rho_{\rm in})
        \right|^2
    \right]
    \leq
    \left(\frac{2}{3}(1-\gamma)^2\right)^{k+1}
    \|O\|_{{\rm F}}^2 .
\end{equation}
\end{lemma}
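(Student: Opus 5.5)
We follow the argument behind Lemma~\ref{lem:imported-low-weight-propagation} and track one extra factor $(1-\gamma)^2$ per discarded weight unit. Write the truncation error as a sum over discarded paths,
\begin{equation}
    f_{\mathcal C}(O;\rho_{\rm in})-f_{s,k}(O;\rho_{\rm in})
    =\sum_{s\notin\mathcal S_k}(1-\gamma)^{|s|}\Phi(\mathcal C^\dagger,s)\,{\rm Tr}(\rho_{\rm in}s_0).
\end{equation}
We regroup this sum by the first layer (in Heisenberg order, i.e.\ counting from $s_d$ towards $s_0$) at which the path leaves the low-weight sector. Let $t^\ast$ be that layer, so that $|s_{t^\ast}|>k$ while $|s_t|\le k$ for all later-propagated $t$. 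Then the error is a telescoping sum $\sum_{t^\ast}{\rm Tr}\bigl[\rho_{\rm in}\,\mathcal C_{<t^\ast}^\dagger(R_{t^\ast})\bigr]$. Here $R_{t^\ast}$ is the high-weight part, weight $\ge k+1$, produced at layer $t^\ast$ from the retained operator, and $\mathcal C_{<t^\ast}^\dagger$ is the remaining noisy circuit acting on it exactly, without truncation.

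\textbf{Step 1: cross terms vanish.} Conditioned on the later layers, the one-layer twirl over the locally scrambling ensemble kills cross terms between distinct Pauli strings. This gives second-moment orthogonality in the spirit of Eq.~\eqref{Eq:random orthogonality}, which has two consequences. First, cross terms between different $t^\ast$ vanish in expectation. Second, $\mathbb E|{\rm Tr}[\rho_{\rm in}\mathcal C_{<t^\ast}^\dagger(R)]|^2$ is bounded by a weighted sum of squared Pauli coefficients of $R$. Each Pauli $P$ of weight $w$ gets a factor of at most $(2/3)^{w}$ from the local scrambling estimate $\mathbb E[{\rm Tr}(UPU^\dagger\rho)]^2\le(2/3)^{|P|}$, valid for any $\rho_{\rm in}$ with $\|\rho_{\rm in}\|_1\le 1$.

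\textbf{Step 2: the noise factor.} For unital Pauli noise, the noise layer multiplies each non-identity site by at most $(1-\gamma)$. The squared coefficient of a weight-$w$ string therefore picks up at least $(1-\gamma)^{2w}$. For non-unital noise we use the decomposition $\mathcal E=\mathcal E_{\rm depo}^\gamma\circ\mathcal E'$: the depolarizing part supplies $(1-\gamma)^{2w}$ exactly, and Lemma~\ref{le:Fro contraction} ensures that $\mathcal E'^\dagger$ does not increase the averaged Frobenius norm. Combining the two steps, a leaving event at weight $w\ge k+1$ contributes at most $\bigl(\tfrac23(1-\gamma)^2\bigr)^{w}\le\bigl(\tfrac23(1-\gamma)^2\bigr)^{k+1}$ times the squared Frobenius mass of the leaving component.

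\textbf{Step 3: summing over layers.} The leaving components at different $t^\ast$ are orthogonal pieces of a telescoping decomposition. By Frobenius contraction in expectation (the lemma on contraction of the Frobenius norm, together with unitarity of the gates), the total mass summed over $t^\ast$ is at most $\|O\|_F^2$. The claimed bound then follows with no dependence on $d$.

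\textbf{Main obstacle.} The hard part is Step 2 for non-unital noise. There $\mathcal E'^\dagger$ can map $I$ components into non-identity ones (the translation vector), so the weight of a string is not preserved, and the clean $(1-\gamma)^{2w}$ factor per string must be justified on average. We would first try to absorb $\mathcal E'^\dagger$ into the random single-qubit $1$-design $V_i$ of $\mathcal C_i=V_i\circ G_i$. Since $\mathcal E'^\dagger(I)=I$, weight can only be preserved or reduced, and that observation should let us bound the leaving mass before the noise acts.
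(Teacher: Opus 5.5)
Your Steps 1--3 follow the paper's argument. You group discarded paths by the first level, in Heisenberg order, at which they leave the weight-$\le k$ sector. You take $(2/3)^{k+1}$ from the random single-qubit layer adjacent to that level and $(1-\gamma)^{2(k+1)}$ from the noise acting on the discarded strings. Cross terms go by path orthogonality, and the discarded Frobenius mass telescopes to $\|O\|_F^2$. For Pauli (in particular depolarizing) noise this is a complete proof. Two small points:
\begin{itemize}
\item In Step 2 the squared coefficient picks up a factor \emph{at most} $(1-\gamma)^{2w}$, not ``at least''.
\item The noise factor exists only because every constrained string $s_1,\dots,s_d$ in $\mathcal S_k$ passes through a noise layer before the next gate layer. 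The terminal $s_0$ sees no noise, which is why it must stay unconstrained.
\end{itemize}

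The genuine gap is the non-unital case. You flag it but do not close it, and it cannot be closed, because the inequality is false there. First, you have the obstacle backwards: $\mathcal E'^\dagger(I)=I$ by construction. The real trouble is that $\mathcal E'^\dagger$ maps $X,Y,Z$ partly onto $I$; for amplitude damping, $\mathcal E^\dagger(Z)=(1-p)Z+pI$. This weight reduction is exactly what breaks the product in Step 2, in either ordering:
\begin{itemize}
\item If the discarded weight-$\ge k+1$ mass first passes through the noise, it sits partly on low-weight strings. The next scrambling layer gives those no $(2/3)^{k+1}$ suppression, and none at all on the $I$ component.
\item If you use the rotation on the other side of the noise instead, the relevant state is $\mathcal E^{\otimes n}(\sigma)$. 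A non-unital channel need not damp its Pauli expectations (reset outputs a pure state), so no $(1-\gamma)$ factor appears.
\end{itemize}
Your proposed remedy also fails. $\mathcal E'^\dagger$ is not rotation-covariant (its translation vector picks a direction), so it cannot be absorbed into $V_i$. Lemma~\ref{le:Fro contraction} needs a random single-qubit layer immediately before $\mathcal E'^\dagger$ in Heisenberg order, and there is none at the output noise layer.

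A concrete counterexample: take the reset channel $\mathcal E(\rho)={\rm Tr}(\rho)|0\rangle\langle 0|$ and $O=Z^{\otimes(k+1)}$, so $\|O\|_F=1$. Then $\mathcal C^\dagger(O)=I$ for every circuit, hence $f_{\mathcal C}=1$. But every $s\in\mathcal S_k$ has $|s_d|\le k$, so ${\rm Tr}(s_dO)=0$ and $f_{s,k}=0$. The left side is $1$, while the right side is at most $(2/3)^{k+1}$ for any $\gamma\in[0,1)$. Amplitude damping with $p^2>2/3$ fails the same way: $\mathbb E_{\mathcal C}f_{\mathcal C}=p^{k+1}$ by the single-qubit twirl in $\mathcal C_d$, so the left side is at least $p^{2(k+1)}$. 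The paper's proof has the same hole, since it only tracks ``the depolarizing component'' of the noise. As stated, the lemma holds only for unital (Pauli) noise, and for that case your argument proves it.
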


\begin{proof}
We prove the statement by adding the noise contraction to the noiseless
low-weight Pauli propagation argument. In the Heisenberg picture, the exact
scalar prediction has the Pauli-path expansion over
$s=(s_0,s_1,\ldots,s_d)$, while $f_{s,k}(O;\rho_{\rm in})$ keeps only those
paths whose intermediate Pauli strings satisfy $|s_t|\leq k$. Therefore
$f_{\mathcal C}(O;\rho_{\rm in})-f_{s,k}(O;\rho_{\rm in})$ is precisely the
total contribution of paths that leave this low-weight sector at least once.

To express the usual layer-wise truncation without introducing a separate
projection notation, define the backward-evolved observables
\begin{equation}
    O_{d+1}:=O,
    \qquad
    O_i:=\mathcal C_i^\dagger(O_{i+1}),
    \qquad i=d,d-1,\ldots,1,
\end{equation}
and define their truncated counterparts by retaining, after each backward
layer, only Pauli strings of weight at most $k$:
\begin{equation}
    O_{d+1}^{(k)}:=O,
    \qquad
    O_i^{(k)}:=\mathcal C_i^{\dagger(k)}(O_{i+1}^{(k)}),
    \qquad i=d,d-1,\ldots,1.
\end{equation}
Here $\mathcal C_i^{\dagger(k)}$ denotes the $i$-th adjoint layer followed by
discarding all Pauli strings with weight larger than $k$. With this notation,
$f_{s,k}(O;\rho_{\rm in})={\rm Tr}[\rho_{\rm in}O_1^{(k)}]$.

For a noiseless locally scrambling layer, the standard second-moment estimate
states that the discarded part produced by the map
$O_{i+1}^{(k)}\mapsto O_i^{(k)}$ is bounded by a factor
$(2/3)^{k+1}$. This factor comes from the locally scrambling transition rule:
Pauli components of weight at least $k+1$ have exponentially suppressed
second moment. In the noisy circuit, every discarded Pauli string has weight
at least $k+1$, and the depolarizing component of the noise contributes an
extra squared factor $(1-\gamma)^{2(k+1)}$. The one-layer leakage factor is
therefore replaced by
\begin{equation}
    \left(\frac{2}{3}\right)^{k+1}(1-\gamma)^{2(k+1)}
    =
    \left(\frac{2}{3}(1-\gamma)^2\right)^{k+1}.
\end{equation}

Applying this layer-wise estimate iteratively from the output layer to the
input layer and using Pauli-path second-moment orthogonality, the discarded
path contributions add without cross terms. The standard telescoping argument
then bounds the total discarded Pauli-path weight directly by the initial
Frobenius norm of $O$, with no additional factor depending on the depth.
Consequently,
\begin{equation}
    \mathbb E_{\mathcal C}
    \left[
        \left|
            f_{\mathcal C}(O;\rho_{\rm in})-f_{s,k}(O;\rho_{\rm in})
        \right|^2
    \right]
    \leq
    \left(\frac{2}{3}(1-\gamma)^2\right)^{k+1}
    \|O\|_{{\rm F}}^2,
\end{equation}
which proves the claim.
\end{proof}

The next observation connects the path-level truncation above to the terminal
Pauli coefficients learned by our protocol.

 \begin{lemma}[Single-layer Pauli support growth]
\label{lem:single-layer-growth}
Let $\mathcal C_t$ be a depth-one unitary layer composed of disjoint one- and two-qubit gates.
For any Pauli string $P\in\mathcal P_n$, every Pauli string appearing in
the expansion of $\mathcal C_t^\dagger(P)$
has weight at most $2|P|$.
\end{lemma}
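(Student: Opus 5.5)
The idea is to use locality of the layer. Write $\mathcal C_t^\dagger(\cdot)=U_t^\dagger(\cdot)U_t$ with
\[
U_t=\bigotimes_{g\in G_t}U_g ,
\]
where $G_t$ is the set of disjoint gates in the layer and each $U_g$ acts on a support $\operatorname{supp}(g)$ of size one or two. Any qubits not touched by a gate can be treated as carrying the identity gate, a trivial one-qubit block. Then the blocks $\operatorname{supp}(g)$ partition $[n]$.

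First, we factor the Pauli string along this partition as
\[
P=\bigotimes_{g}P_g ,
\]
where $P_g$ is the restriction of $P$ to $\operatorname{supp}(g)$. Conjugation then factorizes gate by gate:
\[
\mathcal C_t^\dagger(P)=\bigotimes_g U_g^\dagger P_g U_g .
\]
The key observation is that if $P_g$ is the identity on its block, then $U_g^\dagger P_gU_g$ is again the identity, since conjugation is unital.

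Second, we expand each factor $U_g^\dagger P_gU_g$ in the Pauli basis of its own block. The Pauli strings appearing in $\mathcal C_t^\dagger(P)$ are exactly the tensor products of strings appearing in these factor expansions. Such a product string is the identity on every block $g$ whose restriction $P_g$ is trivial. So its support is contained in the union of the blocks $g$ that intersect $\operatorname{supp}(P)$.

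Third, we count. Each qubit of $\operatorname{supp}(P)$ lies in exactly one block, so at most $|P|$ blocks intersect the support. Each block has at most two qubits, so the union has size at most $2|P|$. This bounds the weight of every string in the expansion by $2|P|$. Linearity gives the same support statement for arbitrary linear combinations of Pauli strings, which is what the later terminal-truncation argument needs.

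There is no real obstacle; the only care needed is bookkeeping. First, idle qubits must be covered by trivial one-qubit blocks so that the tensor factorization is exact. Second, the possible cancellation of coefficients only shrinks the set of appearing strings, so it cannot violate the bound.
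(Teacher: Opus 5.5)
Your proposal is correct and takes the same approach as the paper: every string in $\mathcal C_t^\dagger(P)$ is supported on the union of the gate blocks that meet $\operatorname{supp}(P)$, and that union has at most $2|P|$ qubits. Your version makes explicit the tensor factorization of the conjugation, the identity-preservation on untouched blocks, and the padding of idle qubits, which the paper's two-line argument leaves implicit.
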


\begin{proof}
Let $\operatorname{supp}(P)$ denote the qubits on which $P$ acts nontrivially.
Since the gates in $\mathcal C_t$ act on disjoint pairs, each qubit in
$\operatorname{supp}(P)$ can only interact with at most one additional qubit
within the same layer. Therefore, the support of any Pauli string generated
by conjugating $P$ through $\mathcal C_t$ is contained in the union of gate pairs
intersecting $\operatorname{supp}(P)$, which has size at most $2|\operatorname{supp}(P)| = 2|P|$.
\end{proof}

\begin{lemma}[Terminal support induced by low-weight Pauli paths]
\label{lem:terminal-support-induced}
Consider a Heisenberg-evolved observable under the first unitary layer $\mathcal C_1$ of
disjoint one- and two-qubit gates. Let $s=(s_0,s_1)$ denote a Pauli path
truncated at weight $k$, i.e., $|s_1|\le k$.
Then every terminal Pauli string $s_0$ generated by this path satisfies $|s_0| \le 2k.$
Consequently, the set of terminal Pauli strings generated by all low-weight
paths is contained in the terminal coefficient set used by the learning algorithm:
\begin{equation}
    \{s_0: s\in \mathcal S_k\}\subseteq \mathcal T_{2k} \subseteq \mathcal T_{l'}, \quad \text{for } l'\ge 2k
\end{equation}
where
\begin{equation}
    \mathcal T_{l'} := \{ Q\in\mathcal P_n : |Q|\le l'\}.
\end{equation}
\end{lemma}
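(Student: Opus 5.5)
The plan is to apply Lemma~\ref{lem:single-layer-growth} once, to the first (innermost in the Heisenberg picture) layer, and then read off set inclusions. In the Pauli-path expansion of $\mathcal C^\dagger(O)$ the terminal string $s_0$ is produced from $s_1$ by the adjoint of $\mathcal C_1$, possibly followed by the noise adjoint acting at the input side. We must therefore check that a nonzero transition amplitude ${\rm Tr}\big(s_0\,\mathcal E^{\otimes n\dagger}\mathcal C_1^\dagger(s_1)\big)$, or the corresponding factor in $\Phi(\mathcal C^\dagger,s)$, forces $|s_0|\le 2|s_1|$.

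\textbf{Step 1: effect of the gate layer.} Fix $s_1$ with $|s_1|\le k$. Expand $\mathcal C_1^\dagger(s_1)$ in the Pauli basis. By Lemma~\ref{lem:single-layer-growth}, every Pauli string with nonzero coefficient has support inside the union of gate blocks meeting $\operatorname{supp}(s_1)$. Since the gates act on disjoint pairs, this union has size at most $2|s_1|\le 2k$. The normalization factors $2^{-n/2}$ in $\tilde{\mathcal P}_n$ do not change supports, so this applies verbatim to normalized strings.

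\textbf{Step 2: effect of the single-qubit noise.} The noise adjoint acts qubit by qubit.
\begin{itemize}
\item For Pauli (unital) noise, it is diagonal in the Pauli basis, so supports are preserved exactly.
\item For non-unital noise, use the decomposition $\mathcal E=\mathcal E^\gamma_{\rm depo}\circ\mathcal E'$ with the property in Eq.~\eqref{equ:property of E'}. The map $\mathcal E'^\dagger$ sends $I\mapsto I$, and sends a non-identity single-qubit Pauli to some single-qubit operator, possibly with an identity component. Hence $\mathcal E^{\otimes n\dagger}$ never enlarges support: each qubit outside the support stays $I$, and qubits inside may become $I$.
\end{itemize}
Therefore the only $s_0$ with ${\rm Tr}\big(s_0\,\mathcal E^{\otimes n\dagger}\mathcal C_1^\dagger(s_1)\big)\neq 0$ satisfy $\operatorname{supp}(s_0)\subseteq$ the union of blocks meeting $\operatorname{supp}(s_1)$. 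This gives $|s_0|\le 2|s_1|\le 2k$.

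\textbf{Step 3: inclusions.} Every $s\in\mathcal S_k$ with nonzero amplitude has $s_0$ with weight at most $2k$, i.e.\ (after un-normalizing) $s_0\in\mathcal T_{2k}$. Monotonicity of $\mathcal T_{l}$ in $l$ then gives $\mathcal T_{2k}\subseteq\mathcal T_{l'}$ whenever $l'\ge 2k$.

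Paths with zero amplitude contribute nothing to $\mathcal C^\dagger_{s,k}(O)$. Interpreting the set of terminal strings as those of contributing paths is therefore harmless, and it is the interpretation needed afterwards. The constraint $|s_t|\le k$ is imposed only for $t\ge1$, so $s_0$ itself is not bounded by $k$; the factor $2$ is exactly the one-layer growth.

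The main obstacle is Step 2 in the non-unital case. I must confirm from the normal form of Ref.~\cite{angrisaniSimulatingQuantumCircuits2025} that the adjoint $\mathcal E'^\dagger$ maps $I$ to $I$, so it creates no weight on idle qubits. The translation vector $t$ appears only in $\mathcal E^\dagger(P)$ for non-identity $P$, as an identity component. Such a component reduces weight rather than increasing it.
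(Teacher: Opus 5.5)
Your proof is correct and is essentially the paper's argument: you apply the one-layer support-growth bound once (each qubit of $\operatorname{supp}(s_1)$ pulls in at most one partner from its disjoint gate block, so $|s_0|\le 2|s_1|\le 2k$) and then use the monotonicity $\mathcal T_{2k}\subseteq\mathcal T_{l'}$. Your extra checks go beyond what the paper writes but are sound: restricting to paths with nonzero amplitude is what makes the inclusion, and the later $\mathcal B_{l'}\subseteq\mathcal A_k$, literally true; and the noise step is easier than you fear, since $\mathcal E^\dagger(I)=I$ for any trace-preserving $\mathcal E$ and $\mathcal E^{\otimes n\dagger}$ acts qubitwise, so it cannot enlarge support in whichever order it is composed with $\mathcal C_1^\dagger$ (in Eq.~\eqref{Equ:noisychannel} it actually acts before $\mathcal C_1^\dagger$ in the Heisenberg picture, not after, and your argument covers that order equally well).
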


\begin{proof}
By assumption, $|s_1|\le k$. Conjugating $s_1$ through $\mathcal C_1$, which consists
of disjoint one- and two-qubit gates, can enlarge the support to at most the
union of the gate blocks intersecting $\operatorname{supp}(s_1)$. Each qubit
in the support of $s_1$ can activate at most one additional qubit. Hence the
weight of the terminal Pauli string satisfies
\begin{equation}
     |s_0| \le 2 |s_1| \le 2k.
\end{equation}
Thus \(\{s_0:s\in \mathcal{S}_k\}\subseteq \mathcal{T}_{2k}\). If \(l'\ge 2k\), then \(\mathcal{T}_{2k}\subseteq \mathcal{T}_{l'}\), proving the claim.
\end{proof}
\begin{theorem}[Terminal low-weight truncation guarantee]
\label{thm:learned-observable-bound}
Let $\mathcal C^\dagger(O)=\sum_{Q\in\mathcal P_n}\beta_QQ$ be the
Heisenberg-evolved observable of the noisy circuit $\mathcal C$. Define the
terminal low-weight truncation
\begin{equation}
    \mathcal C^{(l')\dagger}(O)
    :=
    \sum_{Q\in\mathcal T_{l'}}\beta_QQ,
    \qquad
    \mathcal T_{l'}:=\{Q\in\mathcal P_n: |Q|\leq l'\}.
\end{equation}
Then, for any input state $\rho_{\rm in}$ with $\|\rho_{\rm in}\|_1 \le 1$,
\begin{equation}
\left|{\rm Tr}\!\left[\rho_{\rm in} ( \mathcal C^{(l')\dagger}(O)-\mathcal C^\dagger(O))\right]\right| \leq \epsilon_1,
\end{equation}
with probability at least $1-\delta_1$ over the random circuit ensemble,
provided that
\begin{equation}
    l'
    =
    \mathcal O\!\left(
    \frac{\log(\|O\|_{{\rm F}}/(\epsilon_1\sqrt{\delta_1}))}
    {\log\left(3/[2(1-\gamma)^2]\right)}
    \right)=
    \mathcal O\!\left(
    \frac{\log(1/(\epsilon_1\delta_1))}
    {\log\left(3/[2(1-\gamma)^2]\right)}
    \right)..
\end{equation}
\end{theorem}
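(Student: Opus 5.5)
The plan is to reduce the terminal truncation error to the path-level truncation error of Lemma~\ref{lem:noisy low-weight-propagation}, and then convert the resulting second-moment bound into a high-probability statement with Markov's inequality.

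\textbf{Step 1: decompose the exact observable.} Fix the path cutoff $k$ and set $l'\geq 2k$. Split the exact Heisenberg observable as
\begin{equation}
\mathcal C^\dagger(O)=\mathcal C_{s,k}^\dagger(O)+R_k,
\end{equation}
where $R_k$ collects all Pauli paths that leave the weight-$\le k$ sector at some intermediate layer.

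\textbf{Step 2: locate the support of the retained part.} By Lemma~\ref{lem:terminal-support-induced}, every terminal string $s_0$ of a retained path lies in $\mathcal T_{2k}\subseteq\mathcal T_{l'}$. Hence the terminal projection $\Pi_{l'}$ onto $\mathcal T_{l'}$ acts as the identity on $\mathcal C_{s,k}^\dagger(O)$. Since $\mathcal C^{(l')\dagger}(O)=\Pi_{l'}\mathcal C^\dagger(O)$, we get
\begin{equation}
\mathcal C^\dagger(O)-\mathcal C^{(l')\dagger}(O)=(1-\Pi_{l'})R_k .
\end{equation}
In words, the set of terminal strings discarded by terminal truncation is contained in the set of terminal strings of discarded paths.

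\textbf{Step 3: bound the second moment of the terminal error.} This is the main obstacle. Lemma~\ref{lem:noisy low-weight-propagation} controls $\mathbb E|{\rm Tr}[\rho_{\rm in}R_k]|^2$, but not the projected quantity $\mathbb E|{\rm Tr}[\rho_{\rm in}(1-\Pi_{l'})R_k]|^2$, because the projection removes some discarded paths and keeps others.

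To handle this, I would use second-moment orthogonality of Pauli paths, Eq.~\eqref{Eq:random orthogonality}: distinct paths have vanishing cross terms in expectation. Then, writing $X_s=(1-\gamma)^{|s|}\Phi(\mathcal C^\dagger,s){\rm Tr}[\rho_{\rm in}s_0]$,
\begin{equation}
\mathbb E\left|{\rm Tr}[\rho_{\rm in}(1-\Pi_{l'})R_k]\right|^2=\sum_{s\notin\mathcal S_k,\ s_0\notin\mathcal T_{l'}}\mathbb E\, X_s^2 \leq \sum_{s\notin\mathcal S_k}\mathbb E\, X_s^2 .
\end{equation}
The right-hand side is exactly the diagonal sum that the proof of Lemma~\ref{lem:noisy low-weight-propagation} bounds by $\left(\tfrac23(1-\gamma)^2\right)^{k+1}\|O\|_F^2$.

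If path orthogonality fails at the input end, which can happen for non-unital noise or because of how ${\rm Tr}[\rho_{\rm in}s_0]$ weights the paths, I would instead invoke the scrambling of the first layer $\mathcal C_1$. Its random local gates make distinct terminal strings $s_0$ uncorrelated in second moment, so the bound transfers by monotonicity over the subset of discarded terminals.

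\textbf{Step 4: convert to a high-probability bound.} Apply Markov's inequality to the squared error:
\begin{equation}
\Pr\left[|{\rm Tr}[\rho_{\rm in}(\mathcal C^{(l')\dagger}(O)-\mathcal C^\dagger(O))]|>\epsilon_1\right]\le \left(\tfrac23(1-\gamma)^2\right)^{k+1}\|O\|_F^2/\epsilon_1^2 .
\end{equation}
Setting the right-hand side to $\delta_1$ gives
\begin{equation}
k+1=\frac{\log(\|O\|_F^2/(\epsilon_1^2\delta_1))}{\log(3/[2(1-\gamma)^2])}.
\end{equation}
Taking $l'=2k$ then yields the stated
\begin{equation}
l'=\mathcal O\!\left(\frac{\log(\|O\|_F/(\epsilon_1\sqrt{\delta_1}))}{\log(3/[2(1-\gamma)^2])}\right).
\end{equation}
With $\|O\|_F=\mathcal O(1)$ this simplifies to the second form, since $\log(1/\sqrt{\delta_1})\le\log(1/\delta_1)$. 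The factor $2$ coming from Lemma~\ref{lem:single-layer-growth} is absorbed into the $\mathcal O$.
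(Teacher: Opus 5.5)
Your proposal is correct and follows the paper's proof essentially step for step. The paper likewise takes $k=\lfloor l'/2\rfloor$ and uses Lemma~\ref{lem:terminal-support-induced} to show that the paths whose terminal string is discarded, $\{s:|s_0|>l'\}$, are a subset of the paths discarded by low-weight path truncation. It then uses second-moment orthogonality of Pauli paths to write the terminal error's second moment as a diagonal sum, enlarges that sum to the path-truncation error, and concludes with Lemma~\ref{lem:noisy low-weight-propagation} and Markov's inequality. Your fallback through the local scrambling of the first layer does not appear in the paper, which simply asserts path orthogonality, but it is a valid extra safeguard: the random single-qubit layer adjacent to the input preserves Pauli supports and removes cross terms between distinct terminal strings, which is all the monotonicity step needs.
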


\begin{proof}
For an input state $\rho_{\rm in}$ with $\|\rho_{\rm in}\|_1\le 1$, let
$\mathcal S_k$ denote the set of retained low-weight Pauli paths whose
intermediate weights are bounded by $k$. By
Lemma~\ref{lem:terminal-support-induced}, the terminal endpoints of all
paths in $\mathcal S_k$ are contained in $\mathcal T_{2k}$. Choosing
$k=\lfloor l'/2\rfloor$ gives
\begin{equation}
    \{s_0:\,s\in \mathcal S_k\}
    \subseteq
    \mathcal T_{2k}
    \subseteq
    \mathcal T_{l'} .
\end{equation}

We now compare the terminal Pauli truncation with the path truncation.
Let
\begin{equation}
    \mathcal A_k
    :=
    \{s=(s_0,\ldots,s_d): s\notin \mathcal S_k\}
\end{equation}
be the set of paths discarded by the low-weight path truncation, and let
\begin{equation}
    \mathcal B_{l'}
    :=
    \{s=(s_0,\ldots,s_d): |s_0|>l'\}
\end{equation}
be the set of paths whose terminal endpoint is discarded by the terminal
Pauli truncation. Since every retained path \(s\in\mathcal S_k\) satisfies
\(|s_0|\le 2k\le l'\), we have
\begin{equation}
    \mathcal B_{l'}\subseteq \mathcal A_k .
\end{equation}

The terminal truncation error can be written in the Pauli-path expansion as
\begin{equation}
\operatorname{Tr}\!\left[
\rho_{\rm in}\left(\mathcal C^\dagger(O)-\mathcal C^{(l')\dagger}(O)\right)
\right]
=
\sum_{s\in \mathcal B_{l'}}
(1-\gamma)^{|s|}\Phi(\mathcal C^\dagger,s)\operatorname{Tr}[\rho_{\rm in} s_0].
\end{equation}
Here $\mathcal C^{(l')\dagger}(O)$ is the exact terminal truncation: for each
$Q\in\mathcal T_{l'}$, its coefficient $\beta_Q$ contains the sum of
all Pauli paths ending at \(Q\). Thus the only paths omitted by
$\mathcal C^{(l')\dagger}(O)$ are those with terminal endpoint $|s_0|>l'$.

By the second-moment orthogonality of distinct Pauli paths under average case~\cite{angrisaniClassicallyEstimatingObservables2025},
\begin{equation}
\begin{aligned}
&\mathbb E_{\mathcal C}
\left|
\operatorname{Tr}\!\left[
\rho_{\rm in}\left(\mathcal C^\dagger(O)-\mathcal C^{(l')\dagger}(O)\right)
\right]
\right|^2  \\
&\quad =
\sum_{s\in \mathcal B_{l'}}
\mathbb E_{\mathcal C}
\left[
(1-\gamma)^{2|s|}|\Phi(\mathcal C^\dagger,s)|^2
|\operatorname{Tr}[\rho_{\rm in} s_0]|^2
\right].
\end{aligned}
\end{equation}
Since \(\mathcal B_{l'}\subseteq \mathcal A_k\), this is bounded by
\begin{equation}
\begin{aligned}
&\sum_{s\in \mathcal A_k}
\mathbb E_{\mathcal C}
\left[
(1-\gamma)^{2|s|}|\Phi(\mathcal C^\dagger,s)|^2
|\operatorname{Tr}[\rho_{\rm in} s_0]|^2
\right] \\
&\quad =
\mathbb E_{\mathcal C}
\left|
\operatorname{Tr}\!\left[
\rho_{\rm in}\left(\mathcal C^\dagger(O)-\mathcal C_{s,k}^\dagger(O)\right)
\right]
\right|^2 .
\end{aligned}
\end{equation}
Applying Lemma~\ref{lem:noisy low-weight-propagation}, we obtain
\begin{equation}
\mathbb E_{\mathcal C}
\left|
\operatorname{Tr}\!\left[
\rho_{\rm in}\left(\mathcal C^\dagger(O)-\mathcal C^{(l')\dagger}(O)\right)
\right]
\right|^2
\le
\left(\frac{2}{3}(1-\gamma)^2\right)^{k+1}
\|O\|_{{\rm F}}^2 .
\end{equation}

Finally, Markov's inequality gives
\begin{equation}
\Pr_{\mathcal C}\!\left[
\left|
\operatorname{Tr}\!\left[
\rho_{\rm in}\left(\mathcal C^\dagger(O)-\mathcal C^{(l')\dagger}(O)\right)
\right]
\right|>\epsilon_1
\right]
\le
\frac{\left(\frac{2}{3}(1-\gamma)^2\right)^{k+1}\|O\|_{{\rm F}}^2}{\epsilon_1^2}.
\end{equation}
Choosing $k=\lfloor l'/2\rfloor$ and
\begin{equation}
    l'
=
\mathcal O\!\left(
\frac{\log(\|O\|_{{\rm F}}/(\epsilon_1\sqrt{\delta_1}))}
{\log\left(3/[2(1-\gamma)^2]\right)}
\right)
\end{equation}
makes the right-hand side at most $\delta_1$, up to a change of the
absolute constant in the $\mathcal O(\cdot)$. This proves the claimed
high-probability terminal truncation guarantee.
\end{proof}

\begin{lemma}[Number of terminal Pauli coefficients]
\label{lem:terminal-Pauli-count}
Let $\mathcal T_{l'}$ denote the set of all $n$-qubit Pauli strings of weight at most $l'$. Then
\begin{equation}
    N_P=|\mathcal T_{l'}| = \sum_{r=0}^{l'} 3^r \binom{n}{r} = n^{\mathcal O(l')}.
\end{equation}
If the circuit has a $D$-dimensional local geometry and the observable $O$ has constant-size initial support, the relevant terminal strings lie within the light cone of $O$, yielding
\begin{equation}
    |\mathcal T_{l'}^{\rm LC}| \le \sum_{r=0}^{l'} 3^r \binom{M_{\rm LC}}{r} \le (O(d^D))^{l'} ,
\end{equation}
where $M_{\rm LC}$ is the number of qubits in the light cone of $O$.
\end{lemma}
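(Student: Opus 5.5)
The first claim is a counting identity, and I will prove it directly. A Pauli string $Q\in\mathcal P_n$ with $|Q|=r$ is specified by two choices: its support, a subset of $[n]$ of size $r$, and a non-identity letter from $\{X,Y,Z\}$ on each support qubit. This gives exactly $3^r\binom{n}{r}$ strings of weight $r$. Summing over $r=0,\dots,l'$ yields $|\mathcal T_{l'}|=\sum_{r=0}^{l'}3^r\binom{n}{r}$.

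For the asymptotic form, I will use $\binom{n}{r}\le n^r/r!\le n^r$. This gives $|\mathcal T_{l'}|\le\sum_{r\le l'}(3n)^r\le (l'+1)(3n)^{l'}$. Since $l'\le n$ in the nontrivial regime, this is $n^{\mathcal O(l')}$.

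For the light-cone refinement, I will first establish that only strings supported inside the backward light cone of $\operatorname{supp}(O)$ can carry nonzero coefficients $\beta_Q$ in $\mathcal C^\dagger(O)$. I will argue inductively over layers, as in Lemma~\ref{lem:single-layer-growth}:
\begin{itemize}
\item each adjoint unitary layer maps an operator supported on a set $S$ to one supported on the union of the gate blocks meeting $S$;
\item each single-qubit noise adjoint $\mathcal E^{\dagger}$ (and $\mathcal E'^{\dagger}$ in the non-unital decomposition) fixes the identity, so it never enlarges support.
\end{itemize}
Hence $\operatorname{supp}(\mathcal C^\dagger(O))$ lies in the light cone $\mathrm{LC}$, which has $M_{\rm LC}$ qubits. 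Restricting to strings inside $\mathrm{LC}$ loses nothing, so the count is $\sum_{r\le l'}3^r\binom{M_{\rm LC}}{r}$, obtained by replacing $n$ with $M_{\rm LC}$ in the first identity.

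It remains to bound $M_{\rm LC}$ for a $D$-dimensional local architecture with constant initial support. Each layer enlarges the region by at most a constant distance, so after $d$ layers the region sits inside a ball of radius $\mathcal O(d)$ around the constant-size support. Such a ball contains $\mathcal O(d^D)$ lattice sites, so $M_{\rm LC}=\mathcal O(d^D)$. Then $3^r\binom{M_{\rm LC}}{r}\le(3M_{\rm LC})^r$, and summing gives at most $(l'+1)(\mathcal O(d^D))^{l'}$, which is $(\mathcal O(d^D))^{l'}$ after absorbing $l'+1\le 2^{l'}$ into the constant.

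The main obstacle is the light-cone containment step. It must hold for the exact terminal coefficients, not only for low-weight paths, and it must also cover the non-unital normal form. The key fact is $\mathcal E'^\dagger(I)=I$ from Eq.~\eqref{equ:property of E'}, together with $\mathcal E'^\dagger$ acting on single qubits; together these guarantee that the noise never spreads support. For the geometric count, the only delicate point is to state the locality assumption precisely: gates couple sites within bounded lattice distance. With that assumption the ball-volume estimate is immediate.
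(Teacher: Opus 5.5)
Your proof is correct and takes the same route as the paper: count the weight-$r$ strings as $3^r\binom{n}{r}$, sum over $r\le l'$, and replace $n$ by $M_{\rm LC}$ once the support of $\mathcal C^\dagger(O)$ is confined to the light cone. You go further than the paper by justifying, layer by layer, the light-cone containment and the estimate $M_{\rm LC}=\mathcal O(d^D)$, both of which the paper only asserts; for the containment step, the physical adjoint $\mathcal E^\dagger$ is already unital because $\mathcal E$ is trace preserving, so you do not need to appeal to $\mathcal E'^\dagger(I)=I$.
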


\begin{proof}
Pauli paths are used only as an analytical representation; the learning algorithm estimates terminal coefficients directly. Each Pauli string of weight $r$ can be chosen by selecting $r$ qubits out of $n$ and assigning to each qubit one of the three non-identity Paulis, giving $3^r \binom{n}{r}$ possibilities. Summing over all $r\le l'$ yields the total number of strings.  

For a geometrically local circuit, only qubits in the light cone of $O$ can influence the terminal Pauli strings. Let $M_{\rm LC}$ be the number of qubits in the light cone. Then the number of possible weight-$r$ Pauli strings is at most $3^r \binom{M_{\rm LC}}{r}$, giving the bound on $|\mathcal T_{l'}^{\rm LC}|$.
\end{proof}
\begin{remark}
    If additional geometric locality is available, the counting can be further restricted to the light cone of \(O\). 
\end{remark}

\subsection{ Algorithm of process learning}
Similar to the noisy quantum state tomography task, reconstructing $\mathcal{C}^{(l')\dagger}(O)$ proceeds from the data set $\mathcal{D}_{\rm QPT}=\left\{\ket{\psi_j}=\otimes^{n}_{i=1}\ket{\psi_{i,j}}, \phi_j = {\rm Tr}\left[O\mathcal{C}(|\psi_j\rangle\langle\psi_j|)\right]\right\}_{j=1}^{N_{\rm data}}$, where $\ket{\psi_{i,j}}$ is a single-qubit stabilizer randomly sampled from the set ${\rm Stab}$, and $\phi_{j}$ denotes the output of the target quantum process. According to the Eq.~\ref{Eq:cliffordproperty}, coefficients $\beta_P$ can be learned efficiently via
\begin{align}
    \beta_P =\frac{3^{|P|}}{N_{\rm data}}\sum_{i=1}^{N_{\rm data}}\phi_j\bra{\psi_i}P\ket{\psi_i}.
\end{align}

The complete QPT procedure is summarized in Algorithm~\ref{alg:qpla}.
\begin{algorithm}
\label{Algorithm}
\caption{Quantum Process Learning Algorithm}
\label{alg:qpla}
\textbf{Input:} Data set $\mathcal{D}_{\rm QPT}=\left\{\ket{\psi_j}=\otimes^{n}_{i=1}\ket{\psi_{i,j}}\right\}_{j=1}^{N_{\rm data}}$ and accuracy parameter $\epsilon$;

\textbf{Output:} A $f(\cdot)$ such that $\abs{f(\cdot)-{\rm Tr}\left[O\mathcal{C}(\cdot)\right]}\leq\epsilon$ with high success probability for all input quantum states;

Let $l^{\prime}=[\log(1/\epsilon)]$, enumerate all the $P\in \mathcal{P}_n$ with $|P|\leq l^{\prime}$;

\textbf{For} $j\in[N_{\rm data}]$:

\quad Take the input state $|\psi_j\rangle\langle\psi_j|$ into the target quantum process, and obtain the output $\phi_j={\rm Tr}\left[O\mathcal{C}(|\psi_j\rangle\langle\psi_j|)\right]$;


\textbf{End For}

\textbf{For} each $P\in \mathcal{P}_n$ with $|P|\leq l^{\prime}$:

\quad Compute $\beta_{P}   =\frac{3^{|P|}}{N_{\rm data}}\sum_{j=1}^{N_{\rm data}}\phi_j\bra{\psi_j}P\ket{\psi_j}$.

\textbf{End For}


\textbf{Output}: $f(\cdot) = \sum_{|P|\leq l^{\prime}}\beta_{P}{\rm Tr}(P(\cdot))$
\end{algorithm}

\subsection{Sample Complexity and Runtime Complexity}
\label{sec:pt2}
In this section, we will prove the main result of our learning algorithm. 
\begin{theorem}[Noisy Quantum Process Learning, formal]
\label{the:prlaa}
    For any noisy quantum process $\mathcal{C}$ defined as Eq.~\ref{Equ:noisychannel}, where $\mathcal{C}_i$ is a layer of two-qubit Haar random quantum gates, and an $n$-qubit observable $O=\sum_{k=1}^M c_k Q_k$ with $M={\rm poly}(n)$, there exists a learning algorithm  $f$, such that for any $\epsilon\in(0,1)$ and input quantum state $\rho_{\rm in}$, $|f(\rho_{\rm in})-{\rm Tr}(O\mathcal{C}(\rho_{\rm in}))|<\epsilon$with success probability $\geq 1-\delta$. Define $l'_1 = \mathcal{O}\left(\log(1/ \epsilon \delta)\right)$  and $ l'_2 = \mathcal O\!\left(
    \frac{\log(1/(\epsilon\delta))}
    {\log\left(3/[2(1-\gamma)^2]\right)}
    \right)$. The learning algorithm requires sample complexity 
\begin{equation}
N_{\rm data}=
\begin{cases}
    \displaystyle
    l'_1\frac{6^{\mathcal O(l'_1)}}{\epsilon^2}
    \log\frac{1}{\delta},
    & \text{constant noise and logarithmic depth},\\[1.2em]
    \displaystyle
    \frac{3^{\mathcal O(l')}n^{\mathcal O(l'_2)}}{\epsilon^2}
    \log\frac{n^{\mathcal O(l'_2)}}{\delta},
    & \text{arbitrary noise},\\[1.2em]
    \displaystyle
    \frac{3^{\mathcal O(l')}(\mathcal O(d^D))^{l'_2}}{\epsilon^2}
    \log\frac{(\mathcal O(d^D))^{l'_2}}{\delta},
    & \text{arbitrary noise with local connectivity}.
\end{cases}
\end{equation}
    and classical post-processing complexity the total classical runtime complexity is:
\begin{equation}
T
=
\mathcal{O}\left(n \cdot N_{\rm data} \cdot N_P\right)
=
\begin{cases}
    \displaystyle
    \mathcal{O}\!\left(
    n\cdot l'_1 6^{\mathcal O(l'_1)}
    \epsilon^{-2}\log\frac{1}{\delta}
    \right),
    & \text{constant noise and logarithmic depth},\\[1.2em]
    \displaystyle
    \mathcal{O}\!\left(
    n\cdot 3^{\mathcal O(l'_2)}n^{\mathcal O(l'_2)}
    \epsilon^{-2}\log\frac{n^{\mathcal O(l'_2)}}{\delta}
    \right),
    & \text{arbitrary noise},\\[1.2em]
    \displaystyle
    \mathcal{O}\!\left(
    n\cdot 3^{\mathcal O(l'_2)}(\mathcal O(d^D))^{l'_2}
    \epsilon^{-2}\log\frac{(\mathcal O(d^D))^{l'_2}}{\delta}
    \right),
    & \text{arbitrary noise with local connectivity}.
\end{cases}
\end{equation}

    \label{the:pla}
\end{theorem}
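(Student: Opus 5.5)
The plan is to split the prediction error into a truncation error and an estimation error via the triangle inequality:
\begin{equation}
|f(\rho_{\rm in})-{\rm Tr}(O\mathcal C(\rho_{\rm in}))|\le \left|{\rm Tr}\!\left[\rho_{\rm in}\left(\mathcal C^{(l')\dagger}(O)-\mathcal C^\dagger(O)\right)\right]\right|+\sum_{P\in\mathcal T}|\hat\beta_P-\beta_P|\,|{\rm Tr}(P\rho_{\rm in})|,
\end{equation}
where $\mathcal T$ is the retained terminal set and $\hat\beta_P$ are the estimators of Algorithm~\ref{alg:qpla}. I will allot $\epsilon/2$ and $\delta/2$ to each term.

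\textbf{Truncation term.} I will invoke Lemma~\ref{lem:unified_truncation} in the constant-noise, logarithmic-depth regime and Theorem~\ref{thm:learned-observable-bound} in the arbitrary-noise regime. With $l'=l'_1$ or $l'_2$ respectively, this term is at most $\epsilon/2$ with probability at least $1-\delta/2$ over the circuit.

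\textbf{Retained set size $N_P$.} In the constant-noise case I will use the legal-path count ($M2^{\mathcal O(l')}$, or $2^{\mathcal O(l')}$ per Pauli term of $O$). For deeper circuits I output the zero function, as argued after Lemma~\ref{lem:unified_truncation}. In the arbitrary-noise case I will use Lemma~\ref{lem:terminal-Pauli-count}, which gives $n^{\mathcal O(l'_2)}$, or $(\mathcal O(d^D))^{l'_2}$ under light-cone restriction.

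\textbf{Estimation term.} First I will establish unbiasedness. Expand $\mathcal C^\dagger(O)=\sum_Q\beta_QQ$, so that $\phi_j=\sum_Q\beta_Q\langle\psi_j|Q|\psi_j\rangle$. Eq.~\ref{Eq:cliffordproperty} then gives $\mathbb E[\phi_j\langle\psi_j|P|\psi_j\rangle]=3^{-|P|}\beta_P$, so each summand $X_j=3^{|P|}\phi_j\langle\psi_j|P|\psi_j\rangle$ is unbiased for $\beta_P$. Next I bound the summands: $|\phi_j|\le\|O\|_\infty\le\sum_k|c_k|=\mathcal O(1)$ after normalizing $O$, and $|\langle\psi_j|P|\psi_j\rangle|\le1$, so $|X_j|\le 3^{|P|}\|O\|_\infty\le 3^{l'}$. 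Hoeffding's inequality plus a union bound over the $N_P$ coefficients then gives $|\hat\beta_P-\beta_P|\le\eta$ simultaneously for all $P$ with probability at least $1-\delta/2$, provided $N_{\rm data}=\mathcal O(9^{l'}\eta^{-2}\log(2N_P/\delta))$.

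To pass from coefficient error to prediction error, I need $|{\rm Tr}(P\rho_{\rm in})|\le1$ for unnormalized Paulis, so the second term is at most $N_P\eta$. Choosing $\eta=\epsilon/(2N_P)$ yields $N_{\rm data}=\mathcal O(9^{l'}N_P^2\epsilon^{-2}\log(N_P/\delta))$. With $N_P=2^{\mathcal O(l'_1)}$ this is $6^{\mathcal O(l'_1)}\epsilon^{-2}\log(1/\delta)$ up to the exponent constants, where the factor $l'_1$ comes from $\log N_P$. With $N_P=n^{\mathcal O(l'_2)}$ it is $3^{\mathcal O(l'_2)}n^{\mathcal O(l'_2)}\epsilon^{-2}\log(n^{\mathcal O(l'_2)}/\delta)$, and similarly in the light-cone case. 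All these fit the stated forms because the exponents are hidden in $\mathcal O(\cdot)$. A final union bound over the two failure events gives overall success probability at least $1-\delta$.

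\textbf{Runtime.} For each of the $N_{\rm data}$ samples and each of the $N_P$ retained $P$, computing $\langle\psi_j|P|\psi_j\rangle$ is a product of $n$ single-qubit overlaps. This costs $\mathcal O(n)$, giving $T=\mathcal O(nN_{\rm data}N_P)$. Substituting the three expressions for $N_{\rm data}$ and $N_P$ yields the displayed runtimes.

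\textbf{Main obstacle.} I expect the hardest step to be that the truncation guarantee in Lemma~\ref{lem:unified_truncation} and Theorem~\ref{thm:learned-observable-bound} holds for a fixed $\rho_{\rm in}$ with probability over circuits, not uniformly over all inputs. I will state the result in that per-input sense, matching the formulation of the truncation lemmas. If uniformity is required, I would instead bound $\|\mathcal C^\dagger(O)-\mathcal C^{(l')\dagger}(O)\|_\infty$ through the normalized Frobenius second moment and absorb the resulting factor into $l'$.
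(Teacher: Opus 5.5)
Your route is the paper's own: the same triangle-inequality split into truncation and estimation error, Hoeffding at per-coefficient accuracy $\epsilon/N_P$ (hence the $N_P^2$ factor), then substitution of $N_P$ and $T=\mathcal O(nN_{\rm data}N_P)$. Your unbiasedness computation, the range bound $|X_j|\le 3^{|P|}\|O\|_\infty$ and the $\delta/2$ bookkeeping are correct, and more explicit than the paper's.

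The step that fails as written is the constant-noise row. Your estimation-error sum runs over every coefficient the algorithm actually estimates. Algorithm~\ref{alg:qpla} estimates all of $\mathcal T_{l'}$, so with those estimators your own bound has $N_P=|\mathcal T_{l'}|=n^{\mathcal O(l')}$ and only reproduces the arbitrary-noise row. Indeed, for $l'\ge 2$ and large $n$, the stated runtime is smaller than $|\mathcal T_{l'}|$ itself.

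To use the legal-path count, you must estimate only the endpoints $s_0$ of legal paths with $|s|\le l$. That set is fixed by the gate architecture and the support of $O$. You therefore have to assume the architecture is known and precompute the set, and you should say so explicitly, since this conflicts with the architecture-agnostic framing. Identifying the set from data, e.g.\ by thresholding over $\mathcal T_{l'}$, costs $n^{\mathcal O(l')}$ time and adds a $\log|\mathcal T_{l'}|$ factor to $N_{\rm data}$. The light-cone row likewise needs the geometry to be known in order to enumerate $\mathcal T^{\rm LC}_{l'}$.

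Moreover, $N_P=M2^{\mathcal O(l'_1)}$ puts a factor $M^2$ into $N_{\rm data}$ that the stated row lacks. Hiding constants in exponents does not absorb it at constant $\epsilon$. Working ``per Pauli term'' does not help either, because every label $\phi_j$ involves all of $O$. The paper's proof passes over exactly this point when it ``substitutes the corresponding terminal coefficient count''.

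Three smaller repairs are needed.

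First, Lemma~\ref{lem:unified_truncation} is proved for the path truncation $|s|\le l$ and only asserts that \emph{some} low-weight operator is close. Your $\hat\beta_P$, however, are unbiased for the exact terminal coefficients of $\mathcal C^\dagger(O)$, which collect paths of every total weight. You need the inclusion step of Theorem~\ref{thm:learned-observable-bound}: every path of nonzero amplitude whose endpoint lies outside the retained set has $|s|>l$ (for $\mathcal T_l$ simply because $|s_0|\le|s|$). By path orthogonality, the terminal truncation then inherits the bound $(1-\gamma)^{2l}\|O\|_F^2$.

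Second, outputting the zero function for deep circuits is wrong for non-unital noise, and whenever $O$ has an identity component. Amplitude damping with rate $p$ has $\mathcal E^\dagger(Z)=(1-p)Z+pI$, so for $O=Z_1$ the deep-circuit prediction is an input-independent constant whose average over the gates is $p$, not $0$. Output the learned constant $\hat\beta_I$ instead, which your estimator already supplies.

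Third, your fallback for uniformity over inputs does not work. $\|A\|_\infty$ can exceed the normalized Frobenius norm by a factor $2^{n/2}$, so absorbing it into $l'$ forces $l'=\Omega(n)$. What you (and the paper) actually prove is a per-input statement, with probability over the circuit and the data.
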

\begin{proof}
The discrepancy between the algorithm's learned outcome and the true value, quantified via absolute value, encompasses two types of errors: truncation error and learning error.
\begin{equation}
\begin{aligned}
     &\left|f(\rho_{\rm in})-{\rm{Tr}}(\mathcal{C}(\rho_{\rm in})O\right|
     =\left|\sum_{|P|\leq l^\prime}\hat{\beta}_{P}{\rm{Tr}}(\rho_{\rm in}P) - {\rm{Tr}}(\mathcal{C}(\rho_{\rm in})O) \right| 
    \\\leq
    &\left|\sum_{|P|\leq l^\prime}\beta_{P} {\rm{Tr}}(\rho_{\rm in}P) - {\rm{Tr}}(\mathcal{C}(\rho_{\rm in})O)) \right|+\left|\sum_{|P|\leq l^\prime}\hat{\beta}_{P}{\rm{Tr}}(\rho_{\rm in}P) - \sum_{|P|\leq l^\prime}\beta_P {\rm{Tr}}(\rho_{\rm in}P) \right|\\
    &=\left|\sum_{|\tilde{P}|\leq l^\prime}\beta_{\tilde{P}} {\rm{Tr}}(\rho_{\rm in}\tilde{P}) - {\rm{Tr}}(\mathcal{C}(\rho_{\rm in})O)) \right|+\left|\sum_{|P|\leq l^\prime}\hat{\beta}_{P}{\rm{Tr}}(\rho_{\rm in}P) - \sum_{|P|\leq l^\prime}\beta_P {\rm{Tr}}(\rho_{\rm in}P) \right|. 
\end{aligned}
\end{equation}
The inequality is derived through the application of the triangle inequality, where the first term on the right-hand side of the inequality represents the truncation error, and the second term represents the learning error. $\hat{\beta}_{P}$ denotes the learned value of $\beta_{P}$.

The proof for the truncation error can be analogously extended from that in Appendix~\ref{sec:constant noise process} and~\ref{sec: arbitrary noise process}, demonstrating that when $l'= \mathcal{O}\left( \frac{\log(\|O\|_{\rm F}^2 / \epsilon_2^2 \delta_2)}{\log \left( \Lambda (1-\gamma)^{-2} \right)} \right)$, $\left|\sum_{|P|\leq l'}\beta_P {\rm Tr}(\rho_xP) - {\rm Tr}(\mathcal{C}(\rho_x)O) \right|\leq \epsilon_2$.

The learning error is bounded by
\begin{equation}
\label{equ:abound}
   \begin{aligned}
       \left|\sum_{|P|\leq l'}\left(\hat{\beta}_P - \beta_P\right){\rm Tr}(\rho_{\rm in}P) \right|
       &\leq \left|\sum_{|P|\leq l'}\hat{\beta}_P - \sum_{|P|\leq l'}\beta_P\right|\\
       &= \sum_{|P|\leq l'}\left|\hat{\beta}_P - \beta_P\right|\\
       &\leq N_P \max_{|P|\leq l'}|\hat{\beta}_P - \beta_P|\\
       &\leq \epsilon_3.
   \end{aligned}
\end{equation}
Combining the equation with Hoeffding's inequality, we can derive that given a dataset of size $N_{\rm{data}}= \frac{3^{\mathcal{O}(l')}N_P^2}{\epsilon^2_3}\log\frac{N_P}{\delta}$ 
with probability at least $1-\delta$, Eq.~\ref{equ:abound} is valid. 
Given $\epsilon_2=\epsilon_3=\Theta(\epsilon)$, substituting the corresponding terminal coefficient count $N_P$ gives
\begin{equation}
N_{\rm data}=
\begin{cases}
    \displaystyle
    l'_1\frac{6^{\mathcal O(l'_1)}}{\epsilon^2}
    \log\frac{1}{\delta},
    & \text{constant noise and logarithmic depth},\\[1.2em]
    \displaystyle
    \frac{3^{\mathcal O(l')}n^{\mathcal O(l'_2)}}{\epsilon^2}
    \log\frac{n^{\mathcal O(l'_2)}}{\delta},
    & \text{arbitrary noise},\\[1.2em]
    \displaystyle
    \frac{3^{\mathcal O(l')}(\mathcal O(d^D))^{l'_2}}{\epsilon^2}
    \log\frac{(\mathcal O(d^D))^{l'_2}}{\delta},
    & \text{arbitrary noise with local connectivity}.
\end{cases}
\end{equation}

For the runtime complexity of classical post-processing, the calculation is derived directly from Algorithm \ref{alg:qpla}. The dominant factor in the runtime is the computation of the coefficients $\beta_{P}$, which involves nested iterations over $N_{\rm{data}}$ input samples and $N_P$ terminal Pauli strings. The internal calculation of the expectation value $\langle\psi_j|P|\psi_j\rangle$ has a cost of $\mathcal{O}(n)$, because the input state $\ket{\psi_j}$ is a product state and $P$ is a Pauli string, allowing the expectation value to be computed via the product of $n$ single-qubit terms.\\
Thus, the total classical runtime complexity is:
\begin{equation}
T
=
\mathcal{O}\left(n \cdot N_{\rm data} \cdot N_P\right)
=
\begin{cases}
    \displaystyle
    \mathcal{O}\!\left(
    n\cdot l'_1 6^{\mathcal O(l'_1)}
    \epsilon^{-2}\log\frac{1}{\delta}
    \right),
    & \text{constant noise and logarithmic depth},\\[1.2em]
    \displaystyle
    \mathcal{O}\!\left(
    n\cdot 3^{\mathcal O(l'_2)}n^{\mathcal O(l'_2)}
    \epsilon^{-2}\log\frac{n^{\mathcal O(l'_2)}}{\delta}
    \right),
    & \text{arbitrary noise},\\[1.2em]
    \displaystyle
    \mathcal{O}\!\left(
    n\cdot 3^{\mathcal O(l'_2)}(\mathcal O(d^D))^{l'_2}
    \epsilon^{-2}\log\frac{(\mathcal O(d^D))^{l'_2}}{\delta}
    \right),
    & \text{arbitrary noise with local connectivity}.
\end{cases}
\end{equation}
where the factor $n$ accounts for the linear cost of evaluating the $n$ single-qubit terms that constitute $\langle\psi_j|P|\psi_j\rangle$ for each pair of sample $\ket{\psi_j}$ and Pauli string $P$.\\
Thus the complexity statement is separated into three regimes. For constant noise strength and constant accuracy, $l'$ is constant in the nontrivial regime and $N_P\leq 2^{\mathcal O(l')}$. For arbitrary noise strength, direct enumeration gives $N_P=|\mathcal T_{l'}|=n^{\mathcal O(l')}$, which is quasi-polynomial for inverse-polynomial accuracy. If the circuit has a $D$-dimensional local geometry and $O$ has constant-size support, the enumeration can be restricted to the light cone and $N_P\leq |\mathcal T_{l'}^{\rm LC}|\leq(\mathcal O(d^D))^{l'}$.
\end{proof}

\section{Sample Complexity Lower bound for the worst-case scenario}
\label{Sec:samplelowerbound}

The main manuscript essentially considers learning an efficient classical representation of noisy quantum states and processes in the average-case scenario. As we claimed in Theorems~\ref{the:sla} and~\ref{the:pla}, the tasks of learning noisy quantum states and performing tomography are highly efficient in the average-case setting. However, this does not rule out intrinsic hardness in the worst case. Here we theoretically demonstrate that learning noisy quantum states prepared by quantum circuits subject to constant-strength noise channels is quantum-hard in the worst-case scenario.

The fundamental idea relies on constructing a polynomial reduction to the quantum state discrimination problem.

\vspace{5px}

\begin{task}
    Consider two pure quantum states $\rho_0$ and $\rho_1$, and a noisy quantum circuit $\mathcal{C}$ with depth $d$, where Each quantum circuit is affected by $\gamma$-strength Pauli channel in each layer. Suppose that a distinguisher is given access to copies of the quantum states $\mathcal{C}(\rho_0)$ and $\mathcal{C}(\rho_1)$, then what is the fewest number of copies sufficing to identify these two noisy quantum states with high probability?
    \label{problem2}
\end{task}

Obviously, if one can perform quantum state tomography on these noisy states, then efficient classical representations of the noisy states are obtained. Using these classical representations, one can easily distinguish the noisy states $\mathcal{C}(\rho_0)$ from $\mathcal{C}(\rho_1)$ easily. As a result, Task~\ref{problem2} can be used to benchmark the sample-complexity lower bound for the noisy quantum state tomography problem. We state the result below.

\begin{theorem}
    Given an unknown noisy quantum state $\rho$ prepared by a $d$-depth quantum circuit affected by $\gamma$-strength local Pauli noise channels, then any algorithm designed to learn an efficient representation to $\rho$ requires at least $m$ samplings in the \emph{worst-case scenario}, where $$m=\frac{(1-\gamma)^{-2cd}(1-\eta)^2}{2n},$$ where $c=1/(2\ln 2)$ and constant $\eta\in\mathcal{O}(1)$.
    \label{them:lowerbound}
\end{theorem}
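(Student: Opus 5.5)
We reduce to Task~\ref{problem2} and combine the Helstrom bound with a strong data-processing inequality for local Pauli noise. If an algorithm learns a representation of any such noisy state from $m$ copies, it also distinguishes $\mathcal C(\rho_0)$ from $\mathcal C(\rho_1)$ with $m$ copies and success probability at least $1-\eta$ for a suitable constant $\eta$. By the Helstrom--Holevo bound, any measurement on $m$ copies succeeds with probability at most $\tfrac12+\tfrac12\|\mathcal C(\rho_0)^{\otimes m}-\mathcal C(\rho_1)^{\otimes m}\|_{\rm tr}$. Hence we need $\|\sigma_0^{\otimes m}-\sigma_1^{\otimes m}\|_{\rm tr}\geq 1-2\eta$, writing $\sigma_b=\mathcal C(\rho_b)$ and absorbing constants into $\eta$.

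To make $m$ appear explicitly, we pass to relative entropy. By Pinsker's inequality, $\|\sigma_0^{\otimes m}-\sigma_1^{\otimes m}\|_{\rm tr}^2\le 2\ln 2\cdot D(\sigma_0^{\otimes m}\|\sigma_1^{\otimes m})$, measured in bits. Additivity gives $D(\sigma_0^{\otimes m}\|\sigma_1^{\otimes m})=m\,D(\sigma_0\|\sigma_1)$, so
\begin{equation}
m\;\ge\;\frac{(1-2\eta)^2}{2\ln 2\; D(\sigma_0\|\sigma_1)} .
\end{equation}
It therefore suffices to show $D(\sigma_0\|\sigma_1)\le 2n\,(1-\gamma)^{2cd}$ up to the constant conventions in the statement. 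This yields $m\ge (1-\gamma)^{-2cd}(1-\eta)^2/(2n)$ after renaming $\eta$.

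The contraction step uses the strong data-processing inequality for relative entropy under the depolarizing (more generally Pauli) channel $\mathcal E^{\otimes n}$. The unitary layers $\mathcal C_i$ leave $D$ unchanged, so only the noise layers can shrink it. For the first layer we can use the crude bound $D\le n$ when $\sigma_1$ is full rank or close to the maximally mixed state; in general we start from $D(\rho_0\|\mathcal E^{\otimes n}(\rho_1))\le n\log\frac{1}{\text{min eigenvalue}}$, which is $\mathcal O(n)$ after one layer of constant noise. Each later layer multiplies the relative entropy by at most $(1-\gamma)^{2c}$ with $c=1/(2\ln 2)$. One way to see this is through the tensorization of the hypercontractive/log-Sobolev constant of the single-qubit depolarizing semigroup, since tensor products of depolarizing channels contract relative entropy with the same constant as the single-qubit channel, independent of $n$. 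Iterating over the $d$ layers gives $D(\sigma_0\|\sigma_1)\le 2n(1-\gamma)^{2cd}$.

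The main obstacle is this per-layer contraction with an $n$-independent rate. Strong data processing for relative entropy does not tensorize naively. I would instead go through the $n$-qubit depolarizing semigroup, using $\mathcal E_{\rm depo}^{\otimes n}=e^{-tL}$ with $1-\gamma=e^{-t}$. I would then invoke the known tensorizing modified log-Sobolev constant, which gives $D(e^{-tL}\rho\|e^{-tL}\sigma)\le e^{-\alpha t}D(\rho\|\sigma)$ relative to the maximally mixed state. The required bound would follow by comparing to the fixed point, with the factor $2$ in $2n$ coming from a triangle-type bound through $I/2^n$. For general Pauli noise I would reduce to the depolarizing case using the uniform rate $\gamma$ chosen in Appendix~\ref{sec:pnoise}, noting that a Pauli channel with that minimal rate is at least as contractive as depolarizing noise of the same strength, up to a composition argument.
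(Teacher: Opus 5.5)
Your reduction, the Helstrom step, Pinsker's inequality and additivity are all fine. The argument breaks at the step you flag yourself.

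You need the relative entropy between two \emph{distinct} evolving states, $D(\sigma_0\|\sigma_1)$, to contract. The tensorized modified log-Sobolev inequality (and likewise the $D_2$ bound of Lemma~\ref{lemma:renyiineq}) only contracts the divergence to the fixed point, $D(\cdot\,\|\,I/2^n)$. The ``triangle-type bound through $I/2^n$'' you use to pass from one to the other does not exist, because relative entropy is not a metric.

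Concretely, take $\rho_0=|0^n\rangle\langle 0^n|$, $\rho_1=|1^n\rangle\langle 1^n|$, identity gates and a single depolarizing layer. Each $D(\sigma_b\|I/2^n)\le n$, yet $D(\sigma_0\|\sigma_1)=n(1-\gamma)\log\frac{2-\gamma}{\gamma}$. This already exceeds $2n(1-\gamma)^{2c}$ at $\gamma=0.1$ and diverges as $\gamma\to 0$. So the inequality $D(\sigma_0\|\sigma_1)\le 2n(1-\gamma)^{2cd}$ to which you reduce the theorem is false as a general statement.

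The same example shows your starting estimate is off. The bound $D\le n$ is a property of $D(\cdot\,\|\,I/2^n)$, not of $D(\cdot\,\|\,\sigma_1)$ for full-rank $\sigma_1$. With a noisy second argument you only get $\log(1/\lambda_{\min}(\sigma_1))$, here equal to $n\log(2/\gamma)$, which is not bounded by $n$ uniformly in $\gamma$. An $n$-independent strong data-processing constant for \emph{pairs} of states under $\mathcal E^{\otimes n}$ is a different and harder statement than MLSI, and you do not need it.

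The repair is to apply the triangle inequality one step earlier, in trace norm, which is a metric. With $\tau=I/2^n$, write $\|\sigma_0^{\otimes m}-\sigma_1^{\otimes m}\|_1\le\|\sigma_0^{\otimes m}-\tau^{\otimes m}\|_1+\|\sigma_1^{\otimes m}-\tau^{\otimes m}\|_1$, and only then use Pinsker on each term separately. Every divergence is now to the fixed point:
\begin{itemize}
\item additivity gives $m\,D(\sigma_b\|\tau)$;
\item Lemma~\ref{lemma:renyi} gives $D\le D_2$;
\item initially $D_2(\rho\|\tau)=n+\log{\rm Tr}\rho^2\le n$;
\item unitary layers leave it unchanged because $\tau$ is unitarily invariant;
\item each noise layer multiplies it by $(1-\gamma)^{2c}$ by Lemma~\ref{lemma:renyiineq}.
\end{itemize}
This gives $1-\eta\le\sqrt{2nm}\,(1-\gamma)^{cd}$, i.e.\ the stated $m$, and it is exactly the paper's proof.

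The factor $2$ in $2n$ comes from the two trace-norm terms, not from any property of relative entropy. The exponent $c=1/(2\ln 2)$ comes from the $D_2$ contraction lemma, which handles Pauli channels directly via $q=\max_\sigma q_\sigma$, so your reduction to depolarizing noise is also unnecessary. Your MLSI idea, once recast relative to the fixed point, would also give decay exponential in $d$, but with the MLSI rate in place of $2c$.
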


When the noise strength $\gamma=\Omega(1)$, and quantum circuit depth $d\geq {{\rm poly}\log(n)}$, the sample complexity required for quantum state tomography grows at least quasi-polynomially with the system size in the worst-case scenario. We emphasize that this result does not contradict Theorem~\ref{the:sla} and ~\ref{the:pla}: the former statement concerns the worst case, while the latter addresses the average case under the random-circuit assumption. 

In the quantum process tomography task, when $O> 0$, the target is to learn a classical representation to $\mathcal{C}^{\dagger}[O]$ which can be easily reduced to a density matrix learning task by setting $\rho=\mathcal{C}^{\dagger}[O]/{\rm Tr}[\mathcal{C}^{\dagger}[O]]$. This justifies the statement that noisy process tomography (for this observable $O$) is no easier than state tomography.

To support the proof of our result, we require the following lemmas.

\vspace{5px}

\begin{lemma}[Lemma~6 in~\cite{wang2021noise}]
    Consider a single noise channel $\mathcal{N}=\mathcal{N}_1\otimes\cdots\otimes\mathcal{N}_n$ where each local noise channel $\{\mathcal{N}_j\}_{j=1}^n$ is a Pauli noise channel that satisfies $\mathcal{N}_j(\sigma)=q_{\sigma}\sigma$ for $\sigma\in\{X,Y,Z\}$ and $q_{\sigma}$ be the Pauli strength. Then we have
    \begin{align}
        D_2\left(\mathcal{N}(\rho)\|\frac{I^{\otimes n}}{2^n}\right)\leq q^{2c}D_2\left(\rho\|\frac{I^{\otimes n}}{2^n}\right),
    \end{align}
    where $D_2(\cdot\|\cdot)$ represents the $2$-Renyi relative entropy, $q=\max_{\sigma}q_{\sigma}$ and $c=1/(2\ln 2)$.
    \label{lemma:renyiineq}
\end{lemma}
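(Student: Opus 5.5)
The plan is to reduce to a tensor product of qubit depolarizing channels, view that as a quantum Markov semigroup, and integrate a differential inequality supplied by a tensorizing $2$-log-Sobolev inequality. I first set up the Pauli-coefficient picture. Write $\rho=2^{-n}\sum_{P}a_PP$ with $a_I=1$, so that
\begin{equation}
D_2\!\left(\rho\,\Big\|\,\tfrac{I^{\otimes n}}{2^n}\right)=\log_2\!\left(2^n{\rm Tr}\rho^2\right)=\log_2\sum_P a_P^2 .
\end{equation}
Under $\mathcal N$ each coefficient is multiplied by $\prod_{j\in{\rm supp}(P)}q_{\sigma_j}$.

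\textbf{Step 1: domination by depolarizing noise.} I would write each $\mathcal N_j$ as $\mathcal N_j=\mathcal R_j\circ\mathcal E_{\rm depo}^{(q)}$. Here $\mathcal E_{\rm depo}^{(q)}$ multiplies every nontrivial Pauli by $q=\max_\sigma q_\sigma$, and $\mathcal R_j$ multiplies $\sigma$ by $q_\sigma/q$. I will assume $q_\sigma\ge 0$, so that these ratios lie in $[0,1]$; negative $q_\sigma$ would need separate treatment. Under that assumption $\mathcal R_j$ is a unital Pauli channel that fixes $I/2$. By data processing for $D_2$ (monotonicity of the Rényi-2 divergence under channels, with the second argument fixed), it then suffices to prove the claim for $\mathcal N=(\mathcal E_{\rm depo}^{(q)})^{\otimes n}$.

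\textbf{Step 2: semigroup embedding.} Set $q=e^{-t}$. Then $(\mathcal E_{\rm depo}^{(q)})^{\otimes n}=e^{t\mathcal L}$, where $\mathcal L=\sum_j(\mathcal D_j-{\rm id})$ and $\mathcal D_j$ is complete depolarization of qubit $j$. Put $\rho_t=e^{t\mathcal L}\rho$ and $F(t)=\log_2\big(2^n{\rm Tr}\rho_t^2\big)=\log_2\sum_Pe^{-2t|P|}a_P^2$. Differentiating gives
\begin{equation}
F'(t)=-\frac{2}{\ln 2}\,\frac{\sum_P|P|e^{-2t|P|}a_P^2}{\sum_Pe^{-2t|P|}a_P^2}.
\end{equation}
The numerator is the Dirichlet form $\mathcal E(\rho_t)$ of the generator, in the relative-density normalization $2^n\rho_t$.

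\textbf{Step 3: functional inequality and integration.} This is the main obstacle. I need an inequality of the form $\mathcal E(\rho_t)\ge \kappa\,\|2^n\rho_t\|_{2}^2\,\ln\!\big(\|2^n\rho_t\|_{2}^2\big)$, up to the correct normalization. With the right $\kappa$ this gives $F'(t)\le -2c\,F(t)$, and Grönwall then yields $F(t)\le e^{-2ct}F(0)=q^{2c}F(0)$, which is exactly the claim.

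I would derive this from the $2$-log-Sobolev inequality for qubit depolarizing noise. Its constant is dimension-free because it tensorizes; this is the quantum analogue of Gross's hypercontractivity for the Bonami–Beckner semigroup. The hope is that the factor $1/\ln 2$ from converting to $\log_2$, combined with the log-Sobolev constant, produces precisely $2c=1/\ln 2$.

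As a check, I would verify the single-qubit pure-state case. There the claim reduces to the scalar inequality $\log_2(1+s)\le s^{c}$ with $s=q^2$, whose two sides coincide at $s=1$. For the Rényi-2 quantity specifically, the step I am least sure of is passing from the usual entropy form of the log-Sobolev inequality to the Rényi-2 derivative above. If that conversion fails, I would fall back on a Beckner-type hypercontractive bound $\|e^{t\mathcal L}\|_{p\to2}\le 1$ with $p=1+e^{-2t}$, combined with Riesz–Thorin interpolation between $D_1$ and $D_2$, to obtain the same exponent.
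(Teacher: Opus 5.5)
The paper does not prove this lemma; it quotes it from the cited reference. Your argument therefore has to stand on its own, and it has a genuine gap in Step 3.

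Steps 1--2 correctly reduce the claim, for every $n$-qubit state, to the functional inequality
\[
\frac{\sum_P |P|\,a_P^2}{\sum_P a_P^2}\;\ge\;\frac{1}{2\ln 2}\,\ln\sum_P a_P^2 .
\]
The reduction is reversible: differentiating the claimed bound at $t=0$ gives back this inequality. So this inequality, with exactly $c=1/(2\ln 2)$, \emph{is} the lemma, and the $2$-log-Sobolev inequality cannot supply it, for two reasons.

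\textbf{No base-conversion factor is available.} The ratio $D_2(\mathcal N(\rho))/D_2(\rho)$ does not depend on the base of the logarithm. In your own computation the $1/\ln 2$ in $F'$ cancels against $\ln\|X\|^2=(\ln 2)F$. So you need $\kappa=c\approx 0.72$ in natural-log form.

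\textbf{The log-Sobolev constant is too small.} The optimal, tensor-stable $2$-log-Sobolev constant of the qubit depolarizing semigroup is $1/2$. Chaining $\mathcal E(X)\ge\tfrac12\mathrm{Ent}_2(X)$ with $\mathrm{Ent}_2(X)\ge\|X\|^2\ln\|X\|^2$ gives only $F'\le -F$, i.e.\ $D_2(\mathcal N(\rho))\le q\,D_2(\rho)$. This is strictly weaker than $q^{1/\ln 2}$. The loss is structural: the first inequality is tight near $I/2^n$, the second at flat (projector-proportional) states.

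Your fallback fails as well. Hypercontractivity plus $L^1$--$L^2$ interpolation gives $D_2(\mathcal N(\rho))\le\frac{2q^2}{1+q^2}D_2(\rho)$. In the variable $s=q^2$ this bound has slope $1/2$ at $s=1$, against slope $c$ for $s^{c}$. For example, at $s=0.9$ it gives $0.947$ where $0.927$ is required. It implies the lemma only for $q\lesssim 0.36$.

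The missing ingredient has isoperimetric strength, not log-Sobolev strength. The $\ln 2$ in $c$ is not a unit conversion. It is the value of $\mathcal E/(\|X\|^2D_2)$ at pure product states, where $D_2=\ln 2$ nats per qubit; this is why your single-qubit check is tight to first order at $s=1$. For a diagonal state uniform on $A\subseteq\{0,1\}^n$, the required inequality becomes $|\partial_E A|\ge |A|\log_2(2^n/|A|)$. That is the sharp edge-isoperimetric inequality of the hypercube, which log-Sobolev gives only with $\ln$ in place of $\log_2$.

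A complete proof therefore needs a sharp, tensor-stable inequality of this type for quantum states. Possible routes are an induction over qubits driven by a sharp two-point inequality, or an improved log-Sobolev inequality. Otherwise you must import such a $D_2$ contraction result, as the paper does by citation.

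A minor point in Step 1: $\mathcal R_j$ need not be completely positive. For $(q_X,q_Y,q_Z)=(q,q,0)$ its eigenvalues $(1,1,0)$ violate $|\lambda_X+\lambda_Y|\le 1+\lambda_Z$. No data processing is needed there, though. Since $D_2(\cdot\,\|\,I^{\otimes n}/2^n)=\log_2\sum_P a_P^2$, and each Pauli coefficient of $\mathcal N(\rho)$ is bounded in modulus by that of $(\mathcal E^{(q)}_{\rm depo})^{\otimes n}(\rho)$, the reduction follows directly. This also covers negative $q_\sigma$ if you take $q:=\max_\sigma|q_\sigma|$.
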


\begin{lemma}
    Given an arbitrary $n$-qubit density matrix and maximally mixed state $I^{\otimes n}/2^n$, we have
    \begin{align}
        D\left(\rho\|I^{\otimes n}/2^n\right)\leq D_2\left(\rho\|I^{\otimes n}/2^n\right),
    \end{align}
    where $D(\cdot\|\cdot)$ denotes the relative entropy and $D_2(\cdot\|\cdot)$ denotes the $2$-Renyi relative entropy.
    \label{lemma:renyi}
\end{lemma}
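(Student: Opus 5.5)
This inequality is the monotonicity of sandwiched/Petz Rényi divergences in the order parameter, specialized to the maximally mixed reference state. In this special case it reduces to Jensen's inequality for the concave logarithm, and I would prove it that way directly. Write $\sigma=I^{\otimes n}/2^n$ and let $\rho=\sum_i \lambda_i |e_i\rangle\langle e_i|$ be the spectral decomposition, with $\lambda_i\ge 0$ and $\sum_i\lambda_i=1$.

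Because $\sigma$ is proportional to the identity, it commutes with $\rho$. Hence every quantum divergence involved becomes the corresponding classical divergence between the eigenvalue distribution $\lambda=(\lambda_i)$ and the uniform distribution $u_i=2^{-n}$. Concretely, with logarithms in base $2$ as in the rest of the paper,
\begin{equation}
D(\rho\|\sigma)={\rm Tr}\,\rho\log\rho-{\rm Tr}\,\rho\log\sigma=n+\sum_i\lambda_i\log\lambda_i ,
\end{equation}
and
\begin{equation}
D_2(\rho\|\sigma)=\log{\rm Tr}\!\left(\rho^2\sigma^{-1}\right)=n+\log\sum_i\lambda_i^2 .
\end{equation}
The sandwiched and Petz versions coincide here, since $\sigma^{-1/4}\rho\,\sigma^{-1/4}=2^{n/2}\rho$. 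So it does not matter which definition of $D_2$ the paper intends.

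Subtracting $n$ from both expressions, the claim becomes
\begin{equation}
\sum_i\lambda_i\log\lambda_i\le\log\sum_i\lambda_i\cdot\lambda_i .
\end{equation}
This is Jensen's inequality for the concave function $\log$, applied to the random variable that takes value $\lambda_i$ with probability $\lambda_i$. In words, $\mathbb{E}[\log X]\le\log\mathbb{E}[X]$. Terms with $\lambda_i=0$ carry zero probability and drop out under the convention $0\log 0=0$, so there is no issue with the support. Equivalently, the claim says that the Shannon entropy of $\lambda$ is at least its collision entropy $-\log\sum_i\lambda_i^2$.

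None of these steps is hard. The only point needing care is checking that the commuting reduction is legitimate: since $\sigma$ is full rank, $\sigma^{-1}$ exists and both divergences are finite. One should also state the log base and the definition of $D_2$ explicitly so that both sides use the same convention. No earlier result from the paper is required.
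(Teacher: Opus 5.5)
Your proof is correct. It differs from the paper's at the one step that matters, and your version actually establishes the inequality where the paper's does not.

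Both arguments start the same way: $D_2(\rho\|I^{\otimes n}/2^n)=n+\log{\rm Tr}\rho^2$ and $D(\rho\|I^{\otimes n}/2^n)=n+{\rm Tr}\rho\log\rho$. The paper then uses the pointwise bound $x^2-x\log x\ge 0$ on $[0,1]$ to get ${\rm Tr}\rho^2\ge{\rm Tr}\rho\log\rho$. From this it writes $n+{\rm Tr}\rho\log\rho\le n+{\rm Tr}\rho^2=D_2(\rho\|I^{\otimes n}/2^n)$.

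That final equality drops the logarithm. What the paper's chain actually proves is $D\le n+{\rm Tr}\rho^2$. This is strictly weaker than the lemma, because $\log_2 x<x$ for all $x>0$, so it does not imply $D\le D_2$. The pointwise bound is also trivial on $[0,1]$, since the left side is nonnegative and the right side is nonpositive.

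The target inequality $\sum_i\lambda_i\log\lambda_i\le\log\sum_i\lambda_i^2$ has the logarithm outside the sum. It therefore needs a concavity argument rather than a fixed pointwise comparison. That is exactly your Jensen step, equivalently the statement that Shannon entropy dominates collision entropy, or monotonicity of R\'enyi divergences in the order.

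Your route gives a genuine proof of the lemma as stated, which is what the worst-case lower bound needs, since the R\'enyi contraction lemma used alongside it controls $D_2$ itself rather than $n+{\rm Tr}\rho^2$. Your remark that the sandwiched and Petz forms coincide here is also worth keeping. The paper states the sandwiched definition but computes with ${\rm Tr}(\sigma^{-1}\rho^2)$.
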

\emph{Proof:} Given quantum states $\rho$ and $\sigma$, the quantum $2$-Renyi entropy 
    \begin{align}
        D_2(\rho\|\sigma)=\log{\rm Tr}\left[\left(\sigma^{-1/4}\rho\sigma^{-1/4}\right)^2\right].
    \end{align}
    When $\sigma=I^{\otimes n}/2^n$, we have $D_2(\rho\|I^{\otimes n}/2^n)=\log{\rm Tr}\left[\left((I^{\otimes n}/2^n)^{-1}\rho^2\right)\right]=n+\log{\rm Tr}[\rho^2]$. Noting that the function $y=x^2-x\log x\geq 0$ when $x\in[0,1]$, and this implies ${\rm Tr}(\rho^2)\geq {\rm Tr}(\rho\log \rho)$. Finally, we have
    \begin{align}
        D\left(\rho\|I^{\otimes n}/2^n\right)=n+{\rm Tr}\left[\rho\log\rho\right]\leq {\rm Tr}\left[\rho^2\right]+n=D_2\left(\rho\|I^{\otimes n}/2^n\right).
    \end{align}
\vspace{10px}

\emph{Proof of Theorem~\ref{them:lowerbound}:}
Now we prove the sample complexity lower bound to the noisy quantum state tomography task. We consider the sample complexity $m$ in distinguishing quantum states $\mathcal{C}(\rho_0)$ and $\mathcal{C}(\rho_1)$. When their trace distance is quite large, let $\eta\in(0,1)$ and we have
\begin{equation}
\begin{aligned}
    1-\eta&\leq\frac{1}{2}\left\|\mathcal{C}(\rho_0)^{\otimes m}-\mathcal{C}(\rho_1)^{\otimes m}\right\|_1\\
    &\leq \frac{1}{2}\left(\left\|\mathcal{C}(\rho_0)^{\otimes m}-(I_n/2^n)^{\otimes m}\right\|_1+\left\|\mathcal{C}(\rho_1)^{\otimes m}-(I_n/2^n)^{\otimes m}\right\|_1\right)\\
    &\leq \frac{1}{\sqrt{2}}\left(D^{1/2}\left(\mathcal{C}(\rho_0)^{\otimes m}\|(I_n/2^n)^{\otimes m}\right)+D^{1/2}\left(\mathcal{C}(\rho_1)^{\otimes m}\|(I_n/2^n)^{\otimes m}\right)\right),
\end{aligned}
\end{equation}
where the second line comes from the triangle inequality and the third line comes from the Pinsker's inequality. Using Lemmas~\ref{lemma:renyiineq} and~\ref{lemma:renyi}, we have 
\begin{equation}
    \begin{aligned}
         1-\eta&\leq \frac{1}{\sqrt{2}}\left(D^{1/2}_2\left(\mathcal{C}^{\otimes m}(\rho_0)\|(I_n/2^n)^{\otimes m}\right)+D^{1/2}_2\left(\mathcal{C}^{\otimes m}(\rho_1)\|(I_n/2^n)^{\otimes m}\right)\right)\\
         &\leq \frac{\sqrt{nm}}{\sqrt{2}}((1-\gamma)^{cd}+(1-\gamma)^{cd})\\
         &\leq \sqrt{2nm}(1-\gamma)^{cd},
    \end{aligned}
\end{equation}

As a result we have
\begin{align}
  m\geq\frac{(1-\gamma)^{-2cd}(1-\eta)^2}{2n}.
\end{align}

\section{Additional Experiment result}

\subsection{numerical experiment for highly entangled input state}
\label{sec:highly entangled}
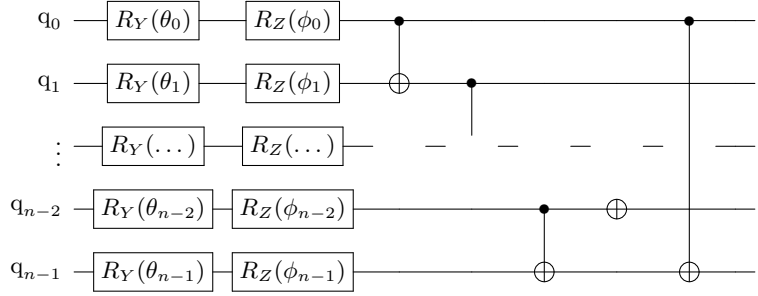
\begin{figure}[htb]
\centerline{
\Qcircuit@C=0.8em @R=1em {
    \lstick{\text{q}_{0}} & \gate{R_Y(\theta_{0})} & \gate{R_Z(\phi_{0})} & \ctrl{1} & \qw & \qw & \qw & \ctrl{4} & \qw & \qw \\
    \lstick{\text{q}_{1}} & \gate{R_Y(\theta_{1})} & \gate{R_Z(\phi_{1})} & \targ{} & \ctrl{1} & \qw & \qw & \qw & \qw & \qw \\
    \lstick{\vdots} & \gate{R_Y(\dots)} & \gate{R_Z(\dots)} & \ghost{\dots} & \ghost{\dots} & \ghost{\dots} & \ghost{\dots} & \ghost{\dots} & \qw & \qw \\
    \lstick{\text{q}_{n{-}2}} & \gate{R_Y(\theta_{n{-}2})} & \gate{R_Z(\phi_{n{-}2})} & \qw & \qw & \ctrl{1} & \targ{} & \qw & \qw & \qw \\
    \lstick{\text{q}_{n{-}1}} & \gate{R_Y(\theta_{n{-}1})} & \gate{R_Z(\phi_{n{-}1})} & \qw & \qw & \targ{} & \qw & \targ{} & \qw & \qw
}
}
\caption{The demonstration of a layer of the state preparation circuit for $n$ qubits. The structure consists of parameterized single-qubit rotations followed by a cyclic CNOT entangling layer.}
\label{fig:spc}
\end{figure}

To further underscore the novelty of our \textbf{input-agnostic regime}, we constructed a Quantum Process Tomography (QPT) experiment using an input state that is \textbf{highly entangled} and falls outside the distribution set addressed by previous work \citep{huangLearningPredictArbitrary2023}. The method detailed in \citep{huangLearningPredictArbitrary2023} requires the input distribution to be at most polynomially far from a \textbf{locally flat distribution}.

As demonstrated by  \citep{huangLearningPredictArbitrary2023}, locally flat distributions encompass: Random product states, ground and thermal states of random local Hamiltonians and any state generated by a circuit whose final layer consists of random single-qubit gates.

Here, we intentionally generated the input state $\rho_{\text{x}}$ using a two-layer parameterized circuit to ensure high entanglement. Each layer of this state preparation circuit is structured as depicted in Fig.~\ref{fig:spc}.


The experimental result of performing QPT with this highly entangled input state is presented in Fig.~\ref{fig:extra result}. The outcome clearly demonstrates that \textbf{our algorithm's performance is not constrained by the entanglement level or specific structure of the input state}, thus validating its input-agnostic nature.

\begin{figure}[h]
\begin{center}
\includegraphics[width=0.6\linewidth]{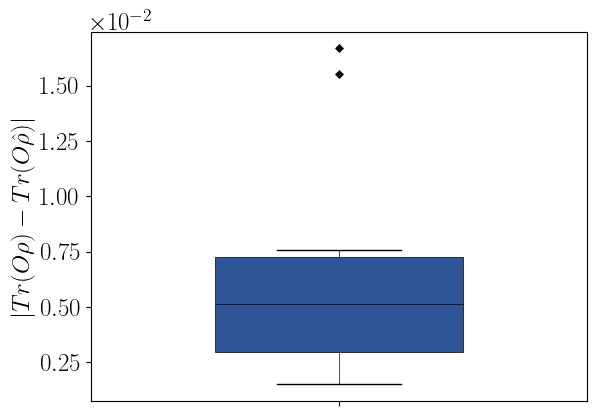} 
\end{center}
\caption{ The experiment result of QPT where the input state is generated from the form as Fig~\ref{fig:spc}. Set $l'=2$ and the process Eq.~\ref{equ:sim} with $5$ layers is at the depolarizing noise strength $0.01$. } 
\label{fig:extra result}
\end{figure}

\subsection{full matrix figure}
\label{sec:expr}
\begin{figure}[htb]
\begin{center}
\includegraphics[width=0.6\linewidth]{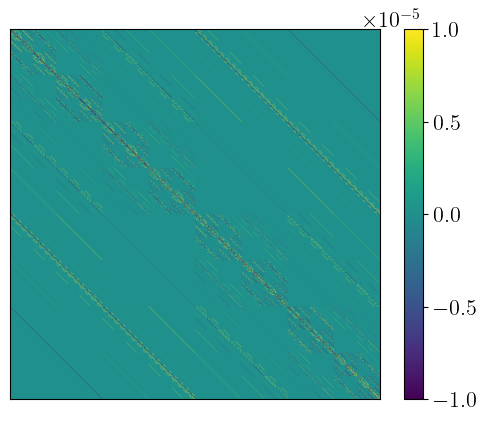} 
\end{center}
\caption{ The heatmap visualizes the full matrix of the $\tilde{\rho}-\hat{\rho}$ in Fig~\ref{fig:exp} c, when $\theta_h=\frac{\pi}{2}$. } 
\end{figure}


\end{document}